\documentclass{article}
\usepackage[utf8]{inputenc}
\usepackage{mathtools}
\usepackage{amsfonts} 
\usepackage{amssymb}
\usepackage{appendix}
\usepackage{amsmath}
\usepackage[pdftex]{graphicx}
\usepackage{tikz-cd}
\usepackage{mathrsfs}
\usepackage{titlesec}
 \usepackage{pstricks-add}
 \usepackage{comment}
 \usepackage{epsfig}
 \usepackage[nottoc,notlof,notlot]{tocbibind} 
\usepackage{pst-grad} 
\usepackage{pst-plot} 
\usepackage[space]{grffile} 
\usepackage{etoolbox} 
\usepackage{hyperref}
\usepackage{color}
\usepackage{amsthm}
\hypersetup{
    colorlinks,
    citecolor=blue,
    filecolor=black,
    linkcolor=purple,
    urlcolor=black
}
\setlength{\textheight}{685pt} \setlength{\topmargin}{-60pt}
\setlength{\textwidth}{490pt}
\setlength{\evensidemargin}{-10pt}
\setlength{\oddsidemargin}{-10pt}

\newcommand{\VecF}[1]{\mathfrak{X}^{\infty}\left(#1\right)}

\newcommand{\limucp}{\mathop{\lim}\limits_{\substack{\mathrm{ucp} \\ \epsilon \rightarrow 0}}}

\newcommand{\hittingtime}[1]{\tau^h_{#1}}
\newcommand{\exittime}[1]{\tau^e_{#1}}
\newcommand{\hittingtimeproc}[2]{\tau^h_{#1}\left(#2\right)}
\newcommand{\exittimeproc}[2]{\tau^e_{#1}\left(#2\right)}
\newcommand{\stopproc}[2]{#1^{|#2}}

\newcommand{\Sint}[2]{\int #1\bullet d #2}

\newcommand{\dotp}[2]{\left\langle #1, #2 \right\rangle}

\newcommand{\pard}[2]{\frac{\partial#1}{\partial#2}}
\newcommand{\Ac}[2]{\mathcal{S}_{#1}(#2)}

\newcommand{\pr}[1]{\mathrm{pr}_{#1}}

\newcommand{\pb}[2]{\left\{#1, #2\right\}}

\newcommand{\cder}[2]{\frac{D}{D#1}#2}

\begin{document}
\title{Stochastic Implicit Lagrange-Poincaré Reduction}
\author{Archishman Saha}
\date{}
\def\tsph{\mathbb{S}^2}
\def\E{\mathbb{E}}
\def\pf{\noindent{\textbf{Proof.}\:}}
\def\expec{\mathbb{E}}
\def\Rp{\mathbb{R}_+}
\def\lag{\mathfrak{g}}
\def\alag{\tilde{\mathfrak{g}}}
\def\dlag{\mathfrak{g}^*}
\def\Ver{\mathrm{Ver\:}}
\def\Hor{\mathrm{Hor\:}}
\def\ver{\mathrm{Ver}}
\def\hor{\mathrm{Hor}}
\def\dalag{\tilde{\mathfrak{g}}^*}
\def\pontbundle{TQ\oplus T^*(Q)}
\def\redpontbundle{T(Q/G)\oplus T^*(Q/G)\oplus \alag\oplus\dalag}
\def\ker{\text{ker}}
\def\im{\text{im}}
\def\sthm{(M,\{\cdot,\cdot\},h,X)}
\def\hamsec{\tilde{X}_H}
\def\poin{Poincar\'e\:}
\def\pb{\{,\cdot,\}}
\def\Cinc{C^{\infty}_c}
\def\R{\mathbb{R}}
\def\yvec{\mathbf{y}}
\def\pman{(M, \{\cdot,\cdot\})}
\def\reduced{\mathrm{red}}
\def\sman{(M, \Omega)}
\def\Bvec{\mathbf{B}}
\def\Jvec{\mathbf{J}}
\def\cotman{(T^*M, \Omega)}
\def\inv{^{-1}}
\def\tman{(TM, \Omega)}
\def\Ito{It\^o\:}
\def\d{\mathbf{d}}
\def\D{\mathcal{D}}
\def\qvec{\mathbf{q}}
\def\pvec{\mathbf{p}}
\def\lvec{\mathbf{\Lambda}}
\def\deldeleps{\frac{\partial}{\partial \epsilon}\Big|_{\epsilon = 0}}
\def\Jvec{\mathbf{J}}
\def\Avec{\mathbf{A}}
\def\dd{\mathbf{d}d}
\def\cotM{T^*M}
\def\cotQ{T^*Q}
\def\L{\mathcal{L}}
\def\vlift{\mathrm{vlft}}
\def\Qvec{\mathbf{Q}}
\def\P{\mathcal{P}}
\def\Pvec{\mathbf{P}}
\def\Vvec{\mathbf{V}}
\def\uvec{\mathbf{u}}
\def\vvec{\mathbf{v}}
\def\Cin{C^{\infty}}
\def\F{\mathbb{F}}
\def\E{\mathbb{E}}
\def\stin{\int\left<\Theta,\D X\right>}
\def\omsec{\tilde{\Omega}}
\def\omsym{\Omega^{\text{sym}}}
\def\Sym{\textrm{Sym}^2}
\def\stpM{\tau_p M}
\def\kepphase{T^*(\mathbf{R}^3\setminus\{0\})}
\def\Ad{\mathrm{Ad}}
\def\ad{\mathrm{ad}}
\def\scpM{\tau^*_p M}
\def\del{\bullet d}
\def\Del{\bullet D}
\def\vpr{\mathrm{vpr}}
\def\pb{\{\cdot,\cdot\}}
\def\scM{\tau^* M}
\def\scN{\tau^* N}
\def\M{\mathcal{M}}
\def\scqN{\tau^*_q N}
\def\stM{\tau M}
\def\lco{L\'{a}zaro-Cam\'{\i}\:and\:Ortega\:}
\def\Deldeleps{\frac{D}{D\epsilon}\Big|_{\epsilon = 0}}
\def\Zvec{\mathbf{Z}}
\def\stN{\tau N}
\def\dim{\text{dim\:}}
\def\grad{\mathbf{\nabla}}
\def\covar{\delta^A}
\def\stqN{\tau_q N}
\def\grad{\mathbf{\nabla}}
\def\ostar{\textcircled{$\star$}}
\def\ostarthm{\textup{\textcircled{$\star$}}}
\def\xvec{\mathbf{x}}
\def\Xvec{\textbf{X}}
\def\TxS{T^\times S^3_{np}}
\def\ToneS{T^{\times}_1S^3_{np}}
\def\S{\mathcal{S}}
\newtheorem{thm}{Theorem}[section]
\newtheorem{cor}{Corollary}[section]
\newtheorem{lem}{Lemma}[section]
\newtheorem{defn}{Definition}[section]
\newtheorem{example}{Example}[section]
\newtheorem{prop}{Proposition}[section]
\newtheorem{rem}{Remark}[section]
\newtheorem{theorem}{Theorem}[section]
\newtheorem{remark}{Remark}[section]
\newtheorem{proposition}{Proposition}[section]
\newtheorem{lemma}{Lemma}[section]
\newtheorem{corollary}{Corollary}[section]
\newtheorem{definition}{Definition}[section]
\newtheorem{conjecture}{Conjecture}[section]
\newtheorem*{theorem*}{Theorem}
\maketitle
\allowdisplaybreaks
\begin{abstract}
    In this paper we consider reduction of the stochastic Hamilton-Pontryagin principle formulated on the Pontryagin bundle of a manifold $Q$. We prove that a stochastic action invariant under the free and proper action of a Lie group $G$ drops to a reduced variational principle expressed in terms of variables of the Pontryagin bundle of the reduced space $Q/G$, the associated adjoint bundle $\tilde{\mathfrak{g}}:= (Q\times \mathfrak{g})/G$ and its dual bundle $\tilde{\mathfrak{g}}^*$. This provides a stochastic analogue of deterministic implicit Lagrange-Poincaré reduction. The stochastic Euler-Lagrange equations drop to a set of stochastic horizontal and vertical Lagrange-Poincaré equations on $T(Q/G)\oplus T^*(Q/G)\oplus\tilde{\mathfrak{g}}\oplus\tilde{\mathfrak{g}}^*$. As examples, we consider stochastic perturbations of the rigid body with a rotor, as well as a Kaluza-Klein description of stochastic perturbations of a charged particle in a magnetic field.

\end{abstract}
\noindent\rule{\textwidth}{0.4pt}
\tableofcontents

\section{Introduction}

The goal of this paper is to study symmetries of stochastic variational principles, and extend the general framework of implicit Lagrange-Poincaré reduction carried out by Yoshimura and Marsden \cite{yoma} to the stochastic case. We also generalize stochastic Euler-Poincaré reduction carried out by Bou-Rabee and Owhadi \cite{Bou_Rabee_2008}, Arnaudon, de Castro and Holm \cite{holm2}, and Street and Takao \cite{street2023} to the case where the configuration manifold is an arbitrary smooth manifold equipped with the free and proper action of a Lie group. 

\medskip

The reduction of mechanical systems by symmetry forms a central part of the study of mechanics. In Hamiltonian reduction theory, the main idea is to show that geometric structures underlying the phase space of a mechanical system, for instance, a Poisson structure or a symplectic 2-form, pass to the quotient under reduction by a Lie group. At the level of dynamics, when the Hamiltonian is invariant under the Lie group action, the equations of motion in the unreduced space, and their reduced counterparts, are Hamiltonian with respect to the unreduced and reduced structures respectively. We refer the reader to Abraham and Marsden \cite{abraham2008foundations}, Marsden and Ratiu \cite{marsden2}, Marsden \cite{marsden1} and Holm, Schmah and Stoica \cite{gms} for expositions on this.

\medskip

In Lagrangian mechanics, the central role is played by the action and the variational principle determined by it. Lagrangian reduction focusses on how the action integral, and the associated equations for determining the critical points of it, drop to a reduced space under reduction by symmetry. While it is tied to Hamiltonian reduction via the Legendre transform, in practice, however, it is carried out independently of Hamiltonian reduction. For details of Lagrangian reduction, the reader may refer to \cite{marsden1}, \cite{marsden2}, \cite{gms}, Marsden and Scheurle \cite{marsden1993reduced, marsden1993lagrangian2}, and Cendra, Marsden and Ratiu \cite{cendra1997lagrangian}. 

\medskip

In 1901, Poincaré \cite{Poincare1901Action} introduced a `new form' of equations of motion on a Lie algebra. These equations are known as Euler-Poincaré equations and they arise by Lagrangian reduction by symmetry when the configuration space is Lie group. This is applicable to many important problems such as the free rigid body, and the Euler-Arnold equations for an ideal fluid (see Arnold and Khesin \cite{arnold1999topological} for details).

\medskip

The extension of Euler-Poincaré reduction to the case where the configuration manifold is a smooth manifold $Q$ acted upon by a Lie group $G$ is called \emph{Lagrange-Poincaré reduction}. In this case, the reduced space is the sum of tangent bundle of the shape space, $Q/G$, and the associated adjoint bundle obtained by fixing a connection on $Q\rightarrow Q/G$. The reduced equations may be divided into a set of horizontal equations, which resemble Euler-Lagrange equations with a forcing term, and vertical equations, which resemble Euler-Poincaré equations. Details on Lagrange-Poincaré reduction may be found in Cendra, Marsden and Ratiu \cite{cendra1997lagrangian}, Cendra, Marsden, Pekarsky and Ratiu \cite{cendra2003variational} and Yoshimura and Marsden \cite{yoma}. In \cite{yoma}, the authors show that Lagrangian, Hamiltonian, and the variational point of view of reduction can be accommodated in a single reduction procedure, called implicit Lagrange-Poincaré reduction. This is done by considering a constrained variational principle on the Pontryagin bundle $TQ\oplus \cotQ$ of $Q$ that accounts for the Legendre transform. The present paper extends this reduction procedure to stochastic variational principles.

\medskip

There are two different approaches to stochastic variational principles. The first approach considers an action functional involving the expectation of a Lagrangian, and replaces velocities by drifts of semimartingales. For details on this, we refer to Cipriano and Cruzeiro \cite{cipriano2007navier}, Arnaudon and Cruzeiro \cite{arnaudon2012lagrangian}, Arnaudon, Chen and Cruzeiro \cite{arn}, and Chen, Cruzeiro and Ratiu \cite{chen2023stochastic}. The latter two papers consider Euler-Poincaré reduction and semidirect product reduction respectively, in this setting. 

\medskip

The second approach, which we use in this paper, is the perturbative one, wherein a deterministic action functional is perturbed by incorporating stochastic terms to model the effects of random external noise on mechanical systems. Roughly speaking, the main motivation for this arises from the following question: suppose we have a way to measure some observable of the whole system, and assume that there is some random noise associated with our measurements. Can we consider a stochastic perturbation of the deterministic system such that the statistics of evolution of the given observable matches its measurement? This point of view has lead to several developments in data driven models of uncertainty in geophysical fluid dynamics and oceanography. See, for instance, Holm \cite{holm1}, Cotter, Gottwald, and Holm \cite{cotter2017stochastic}, Gay-Balmaz and Holm \cite{gay2018stochastic}, and Cotter et al. \cite{cotter2019numerically}. 

\medskip

To give a motivating finite dimensional example, consider the problem a satellite with a rotor attached to it. Suppose that the body angular momentum of the satellite and the angular momentum of the rotor can be observed, but that these observations are inherently noisy due to sensor error or environmental disturbances. These angular momenta are naturally expressed as variables on a reduced space obtained by exploiting the rotational symmetry of the system. We can then ask if these noisy equations in the reduced space can be obtained via reducing a stochastic perturbation of the entire system. This also leads to the problem of understanding when stochastic forcing preserves the symmetries underlying the reduction and when it breaks them.

\medskip

From a variational point of view, stochastic perturbations of mechanical systems can be modelled by the stochastic Hamilton-Pontryagin principle. In the stochastic Hamilton-Pontryagin principle we look for critical points of the following stochastic action defined on the Pontryagin bundle of manifold $Q$:
\begin{align*}
        \Ac{X}{\Gamma} = &\int_0^T\left(\L(q_t, v_t)\del X^0_t +\sum_{i = 1}^k L_i(q_t)\del X^i_t \right.\nonumber\\ &\left.+ \dotp{p_t}{\del q_t - v_t\del X^0_t - \sum_{i = 1}^k V_i(q_t)\del X^i_t}\right),
    \end{align*}
where $\del$ refers to Stratonovich integration. Here $\L\in \Cin(TQ)$, $L_1, \cdots, L_k\in \Cin(Q)$, $V_1, \cdots, V_k$ are vector fields on $Q$ and $X = (X^0, \cdots, X^k)$ is a semimartingale on $\R^{k+1}$. Typically, we are interested in the case when $X^0 = t$, in which case, we think of $L_i$ as the potential energy contribution of the external perturbation, and the vector fields $V_i$ as kinetic contributions by stochasticizing the relation $\dot{q} = v$. The critical points of $\mathcal{S}_X$ are determined by solving the stochastic implicit Euler-Lagrange equations given by  
\begin{align*}
    \del q_t &= v_t\del X^0_t + \sum_{i=1}^kV_i(q_t)\del X^i_t\nonumber\\
    \del p_t &= \frac{\partial}{\partial q_t}\left(\L\del X_t^0 + \sum_{i = 1}^k\left(L_i - \dotp{p_t}{V_i(q_t)}\right)\del X^i_t\right)\nonumber\\
    \left(p_t - \pard{\L}{v_t}\right)\del X^0_t&=0.
\end{align*}
We consider the case when $\L, L_i$ and $V_i$ are invariant under the free and proper action of a Lie group $G$. In this case the action is $G$-invariant, and once a connection $A$ has been fixed on the bundle $Q\rightarrow Q/G$, it drops to a reduced action on the bundle $TQ\oplus \cotQ\oplus\alag\oplus \dalag$, where $\alag = (Q\times \lag)/G$ is the associated bundle corresponding to $A$. The critical points of the reduced action, under certain stochastic constrained variations, correspond to two sets of equations, namely, the stochastic implicit horizontal Lagrange-Poincaré equations and the stochastic implicit vertical Lagrange-Poincaré equations. These equations can be described in a coordinate invariant manner by using the notion of the stochastic covariant derivative described by Norris \cite{norris1992complete} and Catuogno, Ledesma and Ruffino \cite{catuogno2013note}. 

\medskip

The main contributions of this paper are as follows:
\begin{enumerate}
    \item We provide a stochastic extension of implicit Lagrange–Poincaré reduction, showing that invariance of the stochastic action yields reduced equations that naturally incorporate curvature and noise.
    \item We use the techniques of covariant Stratonovich calculus to ensure that the reduced equations are coordinate independent. 
    \item As examples of the general theory, we show how stochastic perturbations of familiar deterministic systems, such as the rigid body with a rotor, and the Kaluza-Klein approach to charged particles in a magnetic field, fit into our framework.
\end{enumerate}

The paper is divided as follows: in Section 2, we review deterministic Lagrange-Poincaré reduction. In Section 3 we describe variations of semimartingales on manifolds and state the stochastic Hamilton-Pontryagin principle. Section 4 is dedicated to the description of the stochastic covariant derivative on a vector bundle. Section 5 contains details of stochastic Lagrange-Poincaré reduction, and examples are discussed in Section 6.

\section{Implicit Lagrange-Poincaré Reduction}

The reader is referred to Yoshimura and Marsden \cite{yoma} and Cendra, Marsden and Ratiu \cite{cendra1997lagrangian} for details on implicit Lagrange-Poincaré reduction. In this section we will provide a summary of their results as well as introduce relevant notations.

\subsection{The Hamilton-Pontryagin Principle}

Let $Q$ be a configuration manifold and $\P Q:= TQ\oplus \cotQ$ denote its Pontryagin bundle. Given a Lagrangian $\L\in\Cin(TQ)$ and a curve $(q(t), v(t), p(t))$ in $\P Q$, the \textbf{Hamilton-Pontryagin action} is defined by 
\begin{equation}\label{eq:deterministic_Hamilton_Pontryagin}
    \S (q(t), v(t), p(t)):= \int_0^T [\L(q(t), v(t)) + \dotp{p(t)}{\dot{q}(t) - v(t)}]dt.
\end{equation}
The following theorem concerns the characterization of critical points of $\S$ under fixed endpoint variations of $q(t)$:
\begin{theorem}\label{thm:deterministic_EL}
    A curve $(q(t), v(t), p(t))$ in $\P Q$ is a critical point for $\S$ for deformations $\epsilon\mapsto (q_{\epsilon}(t), v_{\epsilon}(t), p_{\epsilon}(t))$ such that $\delta q(t) = 0$ at $t = 0$ and $t = T$ if and only if $(q(t), v(t), p(t))$ satisfies the \textbf{implicit Euler-Lagrange equations} given by
    \begin{equation}\label{eq:deterministic_implicit_EL}
        \dot{q} = v,\:\:\:\dot{p} = \frac{\partial \L}{\partial q},\:\:\:p = \frac{\partial \L}{\partial v}.
    \end{equation}
\end{theorem}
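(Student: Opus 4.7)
The plan is a direct application of the calculus of variations, where the key conceptual point is that on the Pontryagin bundle the variables $q$, $v$, and $p$ are independent, so their variations $\delta q$, $\delta v$, $\delta p$ may be chosen independently.

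First I would consider a smooth one-parameter deformation $\epsilon\mapsto (q_\epsilon(t), v_\epsilon(t), p_\epsilon(t))$ with $\delta q(0) = \delta q(T) = 0$, and differentiate under the integral sign to obtain
\begin{align*}
\frac{d}{d\epsilon}\Big|_{\epsilon=0}\S
= \int_0^T\Big[&\pard{\L}{q}\delta q + \pard{\L}{v}\delta v \\
&+ \dotp{\delta p}{\dot q - v} + \dotp{p}{\delta \dot q - \delta v}\Big]\,dt.
\end{align*}
Next I would integrate by parts the term $\dotp{p}{\delta \dot q}$, using the endpoint condition $\delta q(0) = \delta q(T) = 0$ to discard the boundary contribution, which yields
\begin{equation*}
\frac{d}{d\epsilon}\Big|_{\epsilon=0}\S
= \int_0^T\Big[\Big(\pard{\L}{q} - \dot p\Big)\delta q + \Big(\pard{\L}{v} - p\Big)\delta v + \dotp{\delta p}{\dot q - v}\Big]\,dt.
\end{equation*}

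Then I would exploit the crucial feature of the Hamilton–Pontryagin action: because $v$ and $p$ are independent fiber coordinates on $\P Q$ (rather than being constrained by $v = \dot q$ or $p = \partial \L/\partial v$), the variations $\delta v$ and $\delta p$ may be chosen as arbitrary smooth curves with no boundary conditions, while $\delta q$ is arbitrary subject only to vanishing at the endpoints. Applying the fundamental lemma of the calculus of variations to each of the three variations independently yields exactly the three equations in \eqref{eq:deterministic_implicit_EL}: $\dot q = v$ from $\delta p$, $\dot p = \partial \L/\partial q$ from $\delta q$, and $p = \partial \L/\partial v$ from $\delta v$. The converse direction is immediate: if $(q,v,p)$ satisfies \eqref{eq:deterministic_implicit_EL}, substituting back shows that the first variation vanishes for every admissible deformation.

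There is no real obstacle here; the only subtle point to emphasize is the independence of the three variations, which distinguishes this principle from the usual Hamilton principle on $TQ$ and is precisely what encodes the Legendre transform as one of the Euler–Lagrange equations rather than as an a priori constraint.
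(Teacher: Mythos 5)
Your argument is correct and is precisely the standard derivation: the paper itself omits the proof of this background theorem (deferring to Yoshimura and Marsden), and the route you take --- vary the action, integrate $\dotp{p}{\delta\dot q}$ by parts using the endpoint conditions, and invoke the independence of $\delta q$, $\delta v$, $\delta p$ together with the fundamental lemma --- is exactly the argument that reference supplies. Your closing remark on the independence of the fiber variations encoding the Legendre transform as an equation rather than a constraint is the right point to emphasize.
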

For an application of the Hamilton-Pontryagin principle to constrained systems, the reader is referred to Yoshimura and Marsden \cite{yoshimura06} and \cite{yoshimura061}.

\subsection{The Geometric Setup of Implicit Lagrange-Poincaré Reduction}

Let $\phi:G\times Q\rightarrow Q$ be a free and proper action of a Lie group $G$ on $Q$. Given $g\in G$ and $q\in Q$, we will denote by $gq$, or $\phi_g(q)$ the element $\phi(g,q)\in Q$. If $v_q\in T_qQ$ and $p_q \in T^*_qQ$, we let $gv_q := T_q\phi_g(v_q)$ and $gp_q = T^*_{gq}\phi_{g\inv}(p_q)$. Let $A:TQ\rightarrow\lag$ denote a principal connection on the principal bundle $\pi:Q\rightarrow Q/G$. For each $q\in Q$, the equivalence class $\pi(q)$ will be denoted by $[q]$ or $[q]_G$. The horizontal and vertical subspaces at $q$ are denoted by $\hor_q$ and $\ver_q$ respectively, whereas $\Hor TQ$ and $\Ver TQ$ will denote the corresponding horizontal and vertical bundles.
\begin{lemma}{\cite[Lemma 2.2.1]{cendra1997lagrangian}}\label{lem:expression_for_A_in_terms_of_g}
    Let $q(t)$ be a curve in $Q$, $q^h(t)$ the horizontal lift of $\pi(q(t))$ starting at $q_0 = q(0)$, and $g^q(t)$ a $G$-valued curve satisfying $q(t) = g^q(t)q^h(t)$. Then $A(q(t), \dot{q}(t)) = \dot{g}^q(t){g^q(t)}\inv$. 
\end{lemma}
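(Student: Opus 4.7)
The plan is to differentiate the defining relation $q(t) = g^q(t)\cdot q^h(t)$ and decompose $\dot{q}(t)$ into a horizontal piece plus a piece tangent to the $G$-orbit, then apply the connection form $A$.

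First I would compute $\dot{q}(t)$ using the chain rule applied to the action map $\phi: G\times Q\to Q$. Since $q(t) = \phi(g^q(t), q^h(t))$, the product-rule form of the tangent map gives
\begin{equation*}
\dot q(t) \;=\; T_{q^h(t)}\phi_{g^q(t)}\bigl(\dot q^h(t)\bigr) \;+\; \frac{d}{ds}\Big|_{s=0}\phi\bigl(g^q(t+s),\,q^h(t)\bigr).
\end{equation*}
Call the first term $H(t)$ and the second $V(t)$. The plan is to show $A(H(t))=0$ and $A(V(t)) = \dot g^q(t)\, g^q(t)^{-1}$, and then conclude by linearity.

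For the horizontal piece, $H(t) = g^q(t)\cdot \dot q^h(t)$ is the $G$-push-forward of a horizontal vector; by the $\mathrm{Ad}$-equivariance of the principal connection, $A(g\cdot w) = \mathrm{Ad}_g A(w)$, so a horizontal vector is sent to a horizontal vector by the $G$-action, and hence $A(H(t)) = 0$. For the vertical piece, setting $\xi(t) := \dot g^q(t)\, g^q(t)^{-1}\in\mathfrak g$, I would rewrite
\begin{equation*}
V(t) \;=\; \frac{d}{ds}\Big|_{s=0}\phi\bigl(g^q(t+s)\,g^q(t)^{-1},\; q(t)\bigr)
\end{equation*}
by inserting $g^q(t)^{-1}g^q(t)$ and using $q(t)=g^q(t)\cdot q^h(t)$. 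Since $s\mapsto g^q(t+s)g^q(t)^{-1}$ is a curve in $G$ starting at $e$ with initial velocity $\xi(t)$, this identifies $V(t)$ with the value at $q(t)$ of the fundamental vector field $\xi(t)_Q$ associated with $\xi(t)\in\mathfrak g$. By the defining property $A(\xi_Q) = \xi$ of a principal connection, $A(V(t)) = \xi(t) = \dot g^q(t)\,g^q(t)^{-1}$.

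Combining, $A(q(t),\dot q(t)) = A(H(t))+A(V(t)) = \dot g^q(t)\,g^q(t)^{-1}$, as claimed. The only place that requires genuine care — and the part I would double-check — is the bookkeeping of left versus right multiplication: the choice to write $q = g^q q^h$ (rather than $q^h g^q$) together with the left-action convention for fundamental vector fields is what selects the right-logarithmic derivative $\dot g^q (g^q)^{-1}$ as opposed to $(g^q)^{-1}\dot g^q$. Everything else is a routine application of the chain rule and the two defining properties of a principal connection (equivariance and the normalization $A\circ (\cdot)_Q = \mathrm{id}_{\mathfrak g}$).
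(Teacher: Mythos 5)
Your proof is correct. The paper does not prove this lemma itself---it is quoted from Cendra, Marsden and Ratiu \cite[Lemma 2.2.1]{cendra1997lagrangian}---and your argument (differentiate $q(t)=\phi(g^q(t),q^h(t))$, kill the horizontal piece by $\Ad$-equivariance, identify the second piece as the fundamental vector field of $\dot g^q(t)(g^q(t))^{-1}$ via the left-action property, and apply $A(\xi_Q)=\xi$) is exactly the standard proof given there, including the correct handling of the left/right logarithmic derivative.
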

The curvature of $A$ is the $\lag$-valued 2-form $B$ defined by $B(X,Y) = \mathbf{d}A(\Hor X, \Hor Y)$, where $\mathbf{d}A$ is the covariant exterior derivative of $A$ and $X$ and $Y$ are vector fields on $Q$. This satisfies
\begin{equation}\label{eq:curvatureformulas}
    B(X, Y) = -A([\Hor X, \Hor Y]) = \d A(X, Y) - [A(X), A(Y)].
\end{equation}
Let $G$ act on $Q\times \lag$ by $(q, \xi)\mapsto (gq, \Ad_g\xi)$ for all $g\in G$ and $\alag = (Q\times \lag)/G$. The equivalence class of $(q,\zeta)$ will be denoted by $[q, \zeta]_G$. We consider the associated vector bundle $\pi_{\alag}:(Q\times \lag)/G\rightarrow Q/G$ with fibre $\lag$. There is a natural Lie algebra structure on the fibres of $\alag$ given by $[[q, \eta]_G, [q, \zeta]_G] = [q,[\eta, \zeta]_G]$. By using $\Ad$-invariance of the Lie bracket, one can show that this is well defined. The covariant derivative of a curve $[q(t), \zeta(t)]_G$ in $\alag$ is given 
\begin{equation}\label{eq:covariantderivativeassociatedbundle}
    \cder{t}{[q(t),\zeta(t)]_G} = [q(t), [\zeta(t),A(q(t), \dot{q}(t)) ]+ \dot{\zeta}(t)]_G.
\end{equation}
The corresponding connection on $\alag$ is denoted by $\nabla^{\alag}$. Moreover, $\nabla^{\alag}$ defines a connection $\nabla^{\dalag}$ on the dual bundle $\pi_{\dalag}:\dalag\rightarrow Q/G$ in the following way: let $\bar\mu(t)$ and $\bar\zeta(t)$ be curves in $\dalag$ and $\alag$ respectively such that they project to the same curve in $Q/G$. The connection $\nabla^{\dalag}$ is the unique one such that the covariant derivative $\frac{D\bar{\mu}}{Dt}$ satisfies
\begin{equation}\label{eq:covariantderivativedualbundle}
    \frac{d}{dt}\dotp{\bar\mu(t)}{\bar\zeta(t)} = \dotp{\bar\mu(t)}{\frac{D\bar{\zeta}(t)}{Dt}} + \dotp{\frac{D\bar{\mu}(t)}{Dt}}{\bar\zeta(t)}.
\end{equation}
We have a well-defined isomorphism of fibre bundles $\Psi_A:TQ/G\rightarrow T(Q/G)\oplus \alag$ given by \[[v_q]_G\in TQ/G \mapsto (T_q\pi (v_q), [q, A_q(v_q)]_G)\in T(Q/G)\oplus \alag.\]The reader may refer to Cendra, Marsden and Ratiu \cite{cendra1997lagrangian} for its use in Lagrange-Poincaré equations. Dually, one has an isomorphism ${(\Psi_A\inv)}^*: \cotQ/G\rightarrow T^*(Q/G)\oplus\dalag$ given by \[\dotp{{(\Psi_A\inv)}^*([\alpha_q])}{({u_{[q]}},[q,\eta]_G)} = \dotp{(\alpha_q)_{q}^{h^*}}{u_{[q]}} + \dotp{J(\alpha_q)}{\eta}, \]where $({u_{[q]}},[q,\eta]_G)\in T(Q/G)\oplus \alag$, $J:\cotQ\rightarrow\dlag$ is the momentum map of the cotangent lifted $G$-action on $\cotQ$, $(\cdot)_q^h: T_qQ\rightarrow T_{[q]}(Q/G)$ is the horizontal lift map that sends $v_{[q]}\in T_{[q]}(Q/G)$ to $(T_q\pi|_{\hor_q})\inv(v_{[q]})$ and $(\cdot)_q^{h^*}$ is the dual of $(\cdot)_q^h$. Using $\Psi_A$ and ${(\Psi_A\inv)}^*$ we obtain an isomorphism \[\tilde{\Psi}_A:\P Q/G \rightarrow \redpontbundle = \P (Q/G)\oplus \alag\oplus \dalag.\]
\begin{remark}
    The bundle $TQ/G$ over $Q/G$ is called the \textbf{Atiyah quotient} (see Mackenzie \cite{mackenzie2005general}) while $T^*(Q/G)\oplus\dalag$ is called the \textbf{Weinstein space} (see Ortega and Ratiu \cite{ortega2013momentum}). 
\end{remark}
Corresponding to the curvature 2-form $B$, we can define a $\alag$-valued curvature 2-form on $Q/G$ given by 
\begin{equation}\label{eq:reduced_curvature}
    \dotp{\bar{\mu}}{\tilde{B}(\pi(q))(u_{[q]} v_{[q]})} = \dotp{\bar{\mu}}{[q,B(q)(u_q, v_q)]_G},
\end{equation}
where $q \in G$, $\mu\in \dlag$, $\bar{\mu} = [q, \mu]_G\in \dalag$, and $u_{[q]}$ and $v_{[q]}$ are elements of $T_{\pi(q)}(Q/G)$ with $u_{[q]} = T_q \pi(u_q)$ and $v_{[q]} = T_{q}\pi(v_q)$. Equivariance of $B$ can be used to show that $\tilde{B}$ is well-defined.

\subsection{The Structure of Variations}
Let $q(t)$ be a curve in $Q$ and $\epsilon\mapsto q_{\epsilon}(t)$ be a deformation of $q(t)$. Yoshimura and Marsden \cite{yoma} prove the following results:
\begin{theorem}\label{thm:deterministic_variations}
    Let $\xi(t) = A(q(t), \dot{q}(t))$. Then the following holds:
    \begin{enumerate}
        \item If $\delta q$ is vertical then 
        \begin{equation}\label{eq:vertical_variations}
            \delta \xi = \dot{\eta} + [\eta, \xi],
        \end{equation}
        where $\eta(t) = A(q(t), \delta q(t))$. If $\delta q(t)$ vanishes at $t = 0$ and $t = T$ then the same holds for $\eta(t)$.
        \item If $\delta q$ is horizontal then 
        \begin{equation}\label{eq:horizontal_variation}
            \delta \xi = B(q)(\delta q, \dot{q}).
        \end{equation}
    \end{enumerate}
\end{theorem}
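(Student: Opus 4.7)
The two parts are essentially independent, so I would handle them separately.

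\textbf{Part 1 (vertical variations).} The plan is to reduce everything to a calculation with the Maurer--Cartan form, using Lemma 2.2.1 as the hook. Since $\delta q$ is vertical, the deformation $q_\epsilon(t)$ stays in the same $G$-orbit as $q(t)$ for each $t$, so I can write $q_\epsilon(t) = h_\epsilon(t)\,q(t)$ for some $G$-valued family $h_\epsilon(t)$ with $h_0(t) = e$. Setting $\eta'(t) := \frac{\partial}{\partial \epsilon}\big|_{\epsilon = 0} h_\epsilon(t) \in \mathfrak{g}$, the definition of the infinitesimal generator immediately gives $\delta q(t) = \eta'(t)_Q(q(t))$, and thus $\eta'(t) = A(q(t),\delta q(t)) = \eta(t)$ since $A$ sends the infinitesimal generator of $\zeta$ to $\zeta$.

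Next I would pick the horizontal lift $q^h(t)$ of $\pi(q(t))$ and write $q(t) = g^q(t)\,q^h(t)$, so $q_\epsilon(t) = h_\epsilon(t) g^q(t)\,q^h(t)$, meaning $g^{q_\epsilon} = h_\epsilon g^q$. Applying Lemma 2.2.1 gives
\[
\xi_\epsilon(t) = \dot{g}^{q_\epsilon}(t)\,(g^{q_\epsilon}(t))^{-1} = \dot{h}_\epsilon h_\epsilon^{-1} + \Ad_{h_\epsilon}\xi(t).
\]
Differentiating at $\epsilon = 0$ yields $\delta \xi = \dot{\eta} + [\eta,\xi]$, which is the desired formula. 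The endpoint claim is immediate: if $\delta q(t_0) = 0$ then $\eta(t_0)_Q(q(t_0)) = 0$ and freeness of the action forces $\eta(t_0) = 0$.

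\textbf{Part 2 (horizontal variations).} The plan is to exploit the Cartan structure equation $B(X,Y) = \d A(X,Y) - [A(X),A(Y)]$ from \eqref{eq:curvatureformulas}, applied to the two commuting vector fields $X = \partial_\epsilon q_\epsilon$ and $Y = \partial_t q_\epsilon$ on the image of the deformation. Because $[\partial_\epsilon, \partial_t] = 0$ on the parameter square, $[X,Y] = 0$, and the invariant formula for $\d A$ gives
\[
\d A(X,Y)\big|_{\epsilon = 0} \;=\; X(A(Y))\big|_{\epsilon = 0} - Y(A(X))\big|_{\epsilon = 0} \;=\; \delta \xi(t) - \frac{d}{dt} A(\delta q(t)).
\]
Since $\delta q$ is horizontal, $A(\delta q) \equiv 0$ along the curve, killing the second term. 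At $\epsilon = 0$ we also have $A(X) = A(\delta q) = 0$, so the commutator term in the structure equation drops out, leaving $B(q)(\delta q, \dot{q}) = \d A(\delta q,\dot{q}) = \delta\xi$.

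The main subtlety is in Part 2: $X = \partial_\epsilon q_\epsilon$ is horizontal only at $\epsilon = 0$, not throughout the deformation, so one cannot simply invoke $B(X,Y) = \d A(\Hor X, \Hor Y)$ directly on the family. The structure equation version of the curvature is what makes the argument go through cleanly, because it lets me use horizontality pointwise at $\epsilon = 0$ to kill the $[A(X),A(Y)]$ correction. In Part 1, the only care needed is in identifying $\eta'$ with the connection-form value $A(q,\delta q)$; everything else is the standard computation of the variation of a right logarithmic derivative on the Lie group.
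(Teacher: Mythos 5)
Your proof is correct, but note first that the paper itself gives no proof of Theorem \ref{thm:deterministic_variations}: it is quoted from Yoshimura and Marsden \cite{yoma}, and the closest argument actually carried out in the paper is the proof of its stochastic analogue, Theorem \ref{thm:horizontalandverticalvariations}. That proof treats both cases uniformly: it writes $\delta\xi = i_{\delta q}\d A(\dot q) + \tfrac{d}{dt}\dotp{A(q)}{\delta q}$ (the deterministic shadow of Lemma \ref{lem:variationsofsemimartginales}, i.e.\ Corollary \ref{cor:importantcorollary}) and then evaluates $i_{\delta q}\d A$ via the structure equation \eqref{eq:curvatureformulas}: for vertical $\delta q$ the curvature term vanishes and $i_{\delta q}\d A = \ad_{\eta}A$, giving $\delta\xi = \dot\eta + [\eta,\xi]$, while for horizontal $\delta q$ the bracket term vanishes, giving $\delta\xi = B(\delta q,\dot q)$. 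Your Part 2 is essentially this computation, and you correctly flag that the structure-equation form of $B$ is what lets you avoid assuming $\partial_\epsilon q_\epsilon$ is horizontal away from $\epsilon = 0$. Your Part 1 takes a genuinely different route --- the group decomposition $q_\epsilon = h_\epsilon q$ together with Lemma \ref{lem:expression_for_A_in_terms_of_g} and the variation of the right logarithmic derivative $\dot h_\epsilon h_\epsilon^{-1} + \mathrm{Ad}_{h_\epsilon}\xi$ --- which is the classical Cendra--Marsden--Ratiu argument; it buys a more transparent Lie-group computation at the cost of one justification you leave implicit: a vertical \emph{variation} does not force the \emph{deformation} to stay in a single $G$-orbit, so you should note that $\delta\xi$ depends only on the curve $t\mapsto\delta q(t)$ in $TQ$ and not on the full deformation, which lets you replace $q_\epsilon(t)$ by, say, $\exp(\epsilon\eta(t))q(t)$ without loss of generality. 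With that remark added both parts are complete, and your identification $\eta = A(q,\delta q)$ via the infinitesimal generator and the freeness argument for the endpoint claim are exactly right.
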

Let $\bar{\xi}(t) := [q(t), \xi(t)]_G$. Given a deformation $\epsilon\mapsto q_{\epsilon}(t)$, we define the covariant variation of $\bar{\xi}$ by 
\begin{equation}\label{eq:covariant_variation}
    \delta^A\bar{\xi} = \Deldeleps [q_{\epsilon}(t), \xi_{\epsilon}(t)]_G.
\end{equation}
As a consequence of Theorem \ref{thm:deterministic_variations} and the definition of the covariant derivative on the adjoint bundle, it can be shown that 
\begin{equation}\label{eq:total_covariant_variation}
    \covar\bar{\xi} = \frac{D\bar{\eta}}{Dt} + [\bar{\xi}, \bar{\eta}] + \tilde{B}(\delta x, \dot{x}),
\end{equation}
where $\bar{\eta} = [q, A(q, \delta q)]_G$. If $\delta q(t)$ vanishes at $t = 0$ and $t = T$ then so does $\bar{\eta}(t)$.

\medskip

Let $(q(t), v(t), p(t))$ be a curve in $\P Q$ and consider the projection $[q(t), v(t), p(t)]_G$ on the reduced Pontryagin bundle $\P Q/G$. We let $\tilde{\Psi}_A([q(t), v(t), p(t)]_G) = (x(t), u(t), y(t), \bar{\zeta}(t),\bar{\mu}(t))$. Here $x(t) = \pi(q(t))$, $u(t) = T_{q(t)}\pi(v(t))$, $y(t) = (p(t))_q^{h^*}$, $\bar{\zeta}(t) = [q(t), A(q(t), v(t))]_G$ and $\bar{\mu}(t) = [q(t), J(q(t), p(t))]_G$. If $(\delta q(t), \delta v(t),\delta p(t))$ is a general variation of $(q(t), v(t), p(t))$ then the variation of the curve $(x(t), u(t), y(t), \bar{\zeta}(t),\bar{\mu}(t))$ satisfies 
\[(\delta x(t),\delta u(t),\delta y(t),\delta \bar{\zeta}(t),\delta \bar{\mu}(t)) \in T_{x(t)}(Q/G)\oplus T_{u(t)}T(Q/G) \oplus T_{y(t)}T^*(Q/G)\oplus T_{\bar{\eta}(t)}\alag\oplus T_{\bar{\mu}(t)}\dalag.\]
\begin{remark}\label{rem:on_variations}
    As explained in \cite{yoma} and \cite{yoshimura06}, given a curve $(q(t), \dot{q}(t), p(t))$ with $\bar{\xi} := [q(t), A(q(t), \dot{q}(t))]_G$ we will only consider deformations $\epsilon\mapsto \bar{\xi}_{\epsilon}(t)$  such that $\pi_{\alag}(\bar{\xi}_{\epsilon}(t)) := x_{\epsilon}(t)$ does not depend on $\epsilon$. The variations corresponding to such deformations are called $\alag$-fiber variations, and $\delta \bar{\xi}(t)$ can be identified with an element in $\alag$ instead of $T_{\bar{\xi}(t)}\alag$. The variation $\delta^A\bar{\xi}(t) = \Deldeleps \bar{\xi}_{\epsilon}(t)$ is an instance of a $\alag$-fiber variation and is an element of the fiber $\alag_{x(t)}\cong \lag$. If $x_{\epsilon}(t)\oplus \bar{\xi}_{\epsilon}(t)$ is a family of curves in $(Q/G)\oplus\alag$ depending smoothly on $\epsilon$ then its \textbf{covariant variation} is given by
    \[\delta x(t) \oplus \delta ^A \bar{\xi}(t):= \deldeleps x_{\epsilon}(t) \oplus \Deldeleps \bar{\xi}_{\epsilon}(t).\]

    \medskip

    In the context of implicit Lagrange-Poincaré reduction, we are interested in the following example of covariant variation. Suppose $q(t)$ is a curve in $Q$ and $\epsilon\mapsto q_{\epsilon}(t)$ is a deformation of $q(t)$. This induces a variation of the curve $x(t)\oplus \bar{\xi}(t):= \pi(q(t))\oplus [q(t), A(q(t), \dot{q}(t))]_G$ given by \[\epsilon\mapsto x_{\epsilon}(t)\oplus \bar{\xi}_{\epsilon}(t):=  \pi(q_{\epsilon}(t))\oplus [q_{\epsilon}(t), A(q_{\epsilon}(t), \dot{q}_{\epsilon}(t))]_G.\]The covariant variation in this case is given by $\delta x\oplus \delta^A \bar{\xi}$, where \[\delta^A \bar{\xi} = \frac{D\bar{\eta}}{Dt} + [\bar{\xi}, \bar{\eta}] + \tilde{B}(\delta x, \dot{x})\] with $\bar{\eta} = [q, A(q, \delta q)]_G$. It is under these constrained variations that we will consider stationary points of the reduced action principle.
\end{remark}

\subsection{The Implicit Lagrange-Poincaré Reduction Theorem}

Let $\Pr_{TQ/G}:TQ\rightarrow TQ/G$, $\Pr_{\cotQ/G}:\cotQ\rightarrow \cotQ/G$ and $\Pr_{\P Q/G}: \P Q\rightarrow \P Q/G$ denote the corresponding projections onto equivalence classes. Suppose $\L\in\Cin(TQ)$ is invariant under the tangent lifted action of $G$ on $TQ$. The isomorphism $\Psi_A$ between $TQ/G$ and $T(Q/G)\oplus \alag$ yields a reduced Lagrangian $\ell\in \Cin(T(Q/G)\oplus \alag)$ such that $\L = \ell\circ \Psi_A\circ \Pr_{TQ/G}$. Then, we have a reduced action functional corresponding to $\S$ on $\P (Q/G) \oplus \alag \oplus \dalag$ given by 
\begin{align}\label{eq:deterministic_reduced_action}
    \S^{red}(x(t), u(t), y(t), \bar{\zeta}(t), \bar{\mu}(t)) &= \int_0^T \left[\ell(x(t), u(t), \bar{\zeta}(t))dt + \dotp{y(t)}{\dot{x}(t) - u(t)}\right.\nonumber\\
    &+\left.\dotp{\bar{\mu}(t)}{\bar{\xi}(t) - \bar{\zeta}(t)}\right].
\end{align}
The next theorem relates the critical points of $\S$ and $\S^{red}$. The proof of it can be found in \cite{yoma}.
\begin{theorem}[Implicit Lagrange-Poincaré Reduction Theorem]\label{thm:deterministic_LP_reduction}
The following are equivalent:
\begin{enumerate}
    \item The curve $(q(t), v(t), p(t))$ is a critical point of the Hamilton-Pontryagin action $\S$ for all variations $\delta q$, $\delta v$ and $\delta p$ with $\delta q(0) = \delta q(T) = 0.$
    \item The curve $(q(t), v(t), p(t))$ satisfies the implicit Euler-Lagrange equations \eqref{eq:deterministic_implicit_EL}.
    \item Let $[q(t), v(t), p(t)]_G$ denote the reduced curve and $(x(t), u(t), y(t), \bar{\zeta}(t), \bar\mu(t))$ denote the curve $\tilde{\Psi}_A([q(t), v(t), p(t)]_G)$. Then $(x(t), u(t), y(t), \bar{\zeta}(t), \bar\mu(t))$ is a critical point of the reduced action $\S^{red}$ for arbitrary variations $\delta u$, $\delta v$, $\delta y$ and $\delta \bar\mu$ and for variation $\delta x\oplus \covar\bar\xi$ such that $\delta x(0) = \delta x(T) = 0$ and 
    \[\covar\bar{\xi} = \frac{D\bar{\zeta}}{Dt} + [\bar{\xi}, \bar{\zeta}] + \tilde{B}(\delta x, \dot{x}),\]where $\bar\zeta(t)$ is an arbitrary curve in $\alag$ with $\bar\zeta(0) = \bar\zeta(T) = 0$. 
    \item The curve $(x(t), u(t), y(t), \bar{\zeta}(t), \bar\mu(t))$ satisfies the \textbf{horizontal Lagrange-Poincaré equations}
    \begin{equation}\label{eq:deterministic_horizontal_LP}
        \frac{Dy}{Dt} = \frac{\partial \ell}{\partial x} - \dotp{\bar{\mu}}{i_{\dot{x}}\tilde{B}},\:\:\dot{x} = u,\:\:y = \frac{\partial \ell}{\partial u},
    \end{equation}
    and the \textbf{vertical Lagrange-Poincaré equations}
    \begin{equation}\label{eq:deterministic_vertical_LP}
        \frac{D\bar\mu}{Dt} = \ad^*_{\bar{\xi}}\bar{\mu},\:\:\bar{\xi} = \bar{\zeta},\:\:\bar\mu = \frac{\partial \ell}{\partial \bar\zeta}.
    \end{equation}
\end{enumerate}
\end{theorem}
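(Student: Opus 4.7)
Since (1) $\Leftrightarrow$ (2) is precisely Theorem \ref{thm:deterministic_EL}, the substance of the theorem lies in (1) $\Leftrightarrow$ (3) and (3) $\Leftrightarrow$ (4). I would establish these by reducing the variational principle in the first step and then extracting the reduced equations by explicit variation in the second.

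For (1) $\Leftrightarrow$ (3), the plan is to show that $\S$ and $\S^{red}$ agree pointwise under the bijective correspondence of curves and admissible variations induced by $\tilde{\Psi}_A \circ \Proj{\P Q/G}$. Using $G$-invariance of $\L$ and the definition of $\ell$, the Lagrangian term $\L(q, v)$ descends to $\ell(x, u, \bar\zeta)$ with $\bar\zeta = [q, A(q, v)]_G$. The constraint pairing $\dotp{p}{\dot q - v}$ splits, via the Weinstein decomposition ${(\Psi_A\inv)}^*$, into $\dotp{y}{\dot x - u} + \dotp{\bar\mu}{\bar\xi - \bar\zeta}$, where $\bar\xi = [q, A(q, \dot q)]_G$. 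To match variations, I would use the horizontal-vertical decomposition $T_qQ = \hor_q \oplus \ver_q$: any $\delta q$ with $\delta q(0) = \delta q(T) = 0$ corresponds bijectively to a pair $(\delta x, \bar\eta)$ with endpoint-vanishing $\delta x$ and $\bar\eta$, via $\delta x = T\pi(\delta q)$ and $\bar\eta = [q, A(q, \delta q)]_G$, while arbitrary $\delta v, \delta p$ correspond to arbitrary $\delta u, \delta\bar\zeta, \delta y$, and $\delta\bar\mu$. Theorem \ref{thm:deterministic_variations}, applied separately to $\delta q^h$ and $\delta q^v$ and combined with \eqref{eq:covariantderivativeassociatedbundle}, produces the induced covariant variation $\covar \bar\xi$ in exactly the constrained form of item (3).

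For (3) $\Leftrightarrow$ (4), I would take the full variation of $\S^{red}$ term by term. Free variations in $\bar\mu, y, u$, and $\bar\zeta$ yield the pointwise relations $\bar\xi = \bar\zeta$, $\dot x = u$, $y = \pard{\ell}{u}$, and $\bar\mu = \pard{\ell}{\bar\zeta}$. The remaining constrained variation $\delta x \oplus \covar\bar\xi$ produces
\[\int_0^T\left[\pard{\ell}{x}\cdot\delta x + \dotp{y}{\delta\dot x} + \dotp{\bar\mu}{\frac{D\bar\eta}{Dt} + [\bar\xi, \bar\eta] + \tilde B(\delta x, \dot x)}\right]dt = 0.\]
Covariant integration by parts on $T^*(Q/G)$ converts $\dotp{y}{\delta\dot x}$ into $-\dotp{Dy/Dt}{\delta x}$, while covariant integration by parts on $\dalag$, via \eqref{eq:covariantderivativedualbundle}, converts $\dotp{\bar\mu}{D\bar\eta/Dt}$ into $-\dotp{D\bar\mu/Dt}{\bar\eta}$; the bracket term is rewritten as $\dotp{\ad^*_{\bar\xi}\bar\mu}{\bar\eta}$. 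The integrand then separates into a $\delta x$-piece and a $\bar\eta$-piece, and the fundamental lemma applied independently to each yields the horizontal equations \eqref{eq:deterministic_horizontal_LP} and the vertical equations \eqref{eq:deterministic_vertical_LP}.

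The main technical point is verifying the constrained form of $\covar \bar\xi$, since this is what couples the $\delta x$ and $\bar\eta$ variations and is responsible for the curvature term $\dotp{\bar\mu}{i_{\dot x}\tilde B}$ in the horizontal LP equations. This follows from the decomposition $\delta q = \delta q^h + \delta q^v$ and Theorem \ref{thm:deterministic_variations}: the horizontal piece contributes $\tilde B(\delta x, \dot x)$ via \eqref{eq:horizontal_variation}, while the vertical piece contributes $\frac{D\bar\eta}{Dt} + [\bar\xi, \bar\eta]$ via \eqref{eq:vertical_variations} combined with \eqref{eq:covariantderivativeassociatedbundle}. A secondary subtlety concerns the choice of connection on $T^*(Q/G)$ used in defining $Dy/Dt$: it must be dual to the base connection on $T(Q/G)$ so that the intrinsic integration by parts yielding \eqref{eq:deterministic_horizontal_LP} is well-posed independently of coordinates.
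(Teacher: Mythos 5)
Your proposal is correct and follows essentially the same route as the source: the paper itself defers the proof of this theorem to Yoshimura and Marsden, and your outline (action equality under $\tilde{\Psi}_A$, bijective matching of variations via the horizontal–vertical splitting and Theorem \ref{thm:deterministic_variations}, then covariant integration by parts against the constrained variation $\delta x\oplus\covar\bar{\xi}$) is precisely that argument, mirrored step for step in the paper's own proof of the stochastic analogue in Appendix A. The two technical points you single out — the form of $\covar\bar{\xi}$ from \eqref{eq:total_covariant_variation} and the duality of the connections on $T(Q/G)$ and $T^*(Q/G)$ via \eqref{eq:covariantderivativedualbundle} — are exactly the ones the paper relies on, so there is nothing to add.
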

\begin{remark}
    The covariant derivative $\frac{Dy}{Dt}$ can be defined by fixing an affine connection $\nabla^{T(Q/G)}$ on the manifold $Q/G$. Then the definition of $\frac{Dy}{Dt}$ is analogous to that of $\frac{D\bar{\mu}}{Dt}$ (see Equation \eqref{eq:covariantderivativedualbundle}).
\end{remark}
\begin{remark}
    In the previous theorem, the partial derivatives of $\ell$ are interpreted as follows: the partial derivatives $\frac{\partial \ell}{\partial u}$ and $\frac{\partial \ell}{\partial \bar{\zeta}}$ are defined by:
    \begin{align}\label{eq:partial_derivatives_fiber}
        \dotp{\frac{\partial \ell}{\partial u}}{v} &= \frac{d}{ds}\Big|_{s = 0}\ell(x,u + sv,\bar{\zeta})\nonumber\\
        \dotp{\frac{\partial \ell}{\partial \bar{\zeta}}}{\bar{\eta}} &= \frac{d}{ds}\Big|_{s = 0}\ell(x,u,\bar{\zeta} + s\bar{\eta}),
    \end{align}
    where $v$ and $\bar{\zeta}$ are arbitrary elements of $T_x(Q/G)$ and $\pi_{\alag}^{-1}(x)$ respectively. To define the partial derivative $\frac{\partial l}{\partial x}$, we use the connection $\nabla^{T(Q/G)}$ on the $Q/G$. Then $\nabla^{T(Q/G)}\oplus \nabla^{\alag}$ defines an affine connection on the bundle $\nabla^{T(Q/G)}\oplus \nabla^{\alag}$. Let $x(t)$ be a curve in $Q/G$ with $x(0) = x_0$ and $(x(t), u^h(t), \bar{\zeta}^h(t))$ denote the horizontal lift of $x(t)$ with respect to the connection $\nabla^{T(Q/G)}\oplus \nabla^{\alag}$ with $(x(0), u^h(0), \bar{\zeta}^h(0)) = (x_0, u_0, \bar{\zeta}_0) \in T(Q/G)\oplus \alag$. We define 
    \begin{equation}\label{eq:partial_derivatives_x}
        \dotp{\frac{\partial \ell}{\partial x}\Big|_{(x_0, u_0, \bar\zeta_0)}}{(x(0), \dot{x}(0))} = \frac{d}{dt}\Big|_{t = 0}\ell(x(t), u^h(t), \bar{\zeta}^h(t)).
    \end{equation}
\end{remark}

\section{The Stochastic Hamilton-Pontryagin Principle}

The stochastic Hamilton-Pontryagin principle has been explored by Bou-Rabee and Owhadi \cite{Bou_Rabee_2008}, Street and Takao \cite{street2023}, Saha \cite{saha2025stochastic}, and Li, Gay-Balmaz, Shi and Wang \cite{li2025variational}. In this section, following \cite{saha2025stochastic}, we introduce variations of semimartingales on manifolds and describe the stochastic Hamilton-Pontryagin principle.

\subsection{Deformations of Manifold-Valued Semimartingale}

We will always consider continuous semimartingales defined on a probability space $(\Omega, \mathcal{F}, P)$. Given predictable stopping times $S$ and $T$, we define 
\begin{align*}
    [[S,T]] &= \{(\omega, t)\in \Omega\times[0, \infty)\:|\:S(\omega)\leq t\leq T(\omega)\}\\
    [[S,T[[ &= \{(\omega, t)\in \Omega\times[0, \infty)\:|\:S(\omega)\leq t< T(\omega)\}\\
    ]]S,T]] &= \{(\omega, t)\in \Omega\times[0, \infty)\:|\:S(\omega)<t\leq T(\omega)\}\\
    ]]S,T[[ &= \{(\omega, t)\in \Omega\times[0, \infty)\:|\:S(\omega)< t< T(\omega)\}.
\end{align*}Given a stopping time $T$, we let $\stopproc{\Gamma}{T}$ denote the stopped process $\Gamma^{|T}_t = \Gamma_{t\wedge T}$. If $M$ is a smooth manifold we denote by $\mathscr{S}(M)$ the space of (continuous) semimartingales on $M$ and suppose $\Gamma\in \mathscr{S}(M)$. Usually we will assume that $M$ is connected.
\begin{definition}
    A \textbf{deformation} of $\Gamma$ is a map $\epsilon\in (-s,s)\mapsto \Gamma_{\epsilon}\in \mathscr{S}(M)$ satisfying the following conditions:
    \begin{itemize}
        \item $\Gamma_{0,t} = \Gamma_t$.
        \item Given any $f\in \Cin(M)$ there exists a $TM$-valued semimartingale $\delta\Gamma$ over $\Gamma$ such that $\frac{f(\Gamma_{\epsilon}) - f(\Gamma)}{\epsilon}$ converges to $df(\delta\Gamma)$ in the semimartingale topology as $\epsilon \rightarrow 0$.
    \end{itemize}
    Given a deformation $\epsilon\mapsto \Gamma_{\epsilon}$ of $\Gamma$, the corresponding $TM$-valued semimartingale $\delta \Gamma$ is called the \textbf{variation} of $\Gamma$.
\end{definition}
\begin{remark}
    Following Definition 2.7 and 2.9 in Arnaudon and Thalmaier \cite{arnaudon1998}, it follows that the map $\epsilon \mapsto \Gamma_{\epsilon}$ is differentiable at $\epsilon = 0$ with respect to the semimartingale topology on $M$.
\end{remark}
\begin{definition}
    We say that $\Gamma\in\mathscr{S}(M)$ is \textbf{admissible} if, given any $Y\in\mathscr{S}(TM)$ over $\Gamma$, there exists a deformation $\epsilon\mapsto \Gamma_{\epsilon}$ of $\Gamma$ with $\delta \Gamma = Y$.
\end{definition}
\begin{remark}
    If $M$ is compact then, given any semimartingale $Y$ in $TM$ over $\Gamma$, there exists a deformation $\epsilon\mapsto \Gamma_{\epsilon}$ of $\Gamma$ such that $\delta \Gamma = Y$. This is a consequence of Corollary 4.3 in Arnaudon and Thalmaier \cite{arnaudon1998}. 
\end{remark}
Next, we describe how to construct variations of $\Gamma$ that vanish at time $t = 0$, $t = T$, as well as at the entry and exit times for a chart. Let $K$ be a closed subset of $M$. Let $\hittingtime{K}(\Gamma)$ denote the hitting time of $\Gamma$ for $K$ and $\exittime{K}(\Gamma)$ denote the exit time of $\Gamma$ from $K$. If $\tau^{(h,e)}_K:= \exittimeproc{K}{\Gamma_{t+\hittingtimeproc{K}{\Gamma}}}$ then, on the event\[[[\hittingtime{K}(\Gamma), \hittingtime{K}(\Gamma)+ \tau^{(h,e)}_K]]:= \{(\omega, t)\in \Omega\times[0, \infty)\:|\:\hittingtime{K}(\Gamma)\leq t\leq \hittingtime{K}(\Gamma)+ \tau^{(h,e)}_K\}\]$\Gamma$ takes its values in $K$. Now consider a vector field $X$ that is supported on the interior of $K$ and a smooth function $g\in\Cin(\R)$ that is supported on $(0,T)$. Let $Y_t:= g(t)X(\Gamma_t)$. Then $Y$ is $TM$-valued semimartingale over $\Gamma$ such that $Y\equiv 0$ in $[[0, \hittingtime{K}(\Gamma)]]\bigcup [[\left(\hittingtime{K}(\Gamma)+\tau^{(h,e)}_K\right)\wedge T,\infty[[$. The admissibility hypothesis shows that there exists a deformation $\epsilon\mapsto \Gamma_{\epsilon}$ of $\Gamma$ with $\delta \Gamma = Y$. Deformations of $\epsilon\mapsto \Gamma_{\epsilon}$ of $\Gamma$ such that the associated variations vanish in $[[0, \hittingtime{K}(\Gamma)]]\bigcup [[\left(\hittingtime{K}(\Gamma)+\tau^{(h,e)}_K\right)\wedge T,\infty[[$ are called $(K,T)$-\textbf{deformations} and their associated variations are called $(K,T)$-\textbf{variations}. 

\medskip

The next lemma, proven in \cite{saha2025stochastic}, is a stochastic version of the fundamental lemma of the calculus of variations:
\begin{lemma}\label{lem:stochastic_fundlem}
    Let $M$ be a smooth $n$-manifold and $U\subseteq M$ be a coordinate chart. We identify $U$ with an open subset of $\R^n$, also denoted by $U$. Let $\Gamma$ be an admissible semimartingale and $\Xi$ be a map between semimartingales on $M$ and semimartingales on $\R^n$ satisfying $\Xi(\Gamma)_{A_t} = \Xi(\Gamma_{A_t})$ for any continuous change of time $t\mapsto A_t$. If for every $(\bar{U},T)$-deformation $\epsilon\mapsto \Gamma_{\epsilon}$ we have
\[\int_{\hittingtime{\bar{U}}(\Gamma)}^{\hittingtime{\bar{U}}(\Gamma)+\tau^{(h,e)}_{\bar{U}}}\dotp{\delta\Gamma^{|T}}{\del \Xi(\stopproc{\Gamma}{T})} = 0,\]
    then $\del\Xi(\stopproc{\Gamma}{T}) = 0$ in $]]\hittingtime{U}, \hittingtime{U}+\tau^{(h,e)}_{U}[[$. Here $\del\Xi(\stopproc{\Gamma}{T}) = 0$ means that $\Xi(\Gamma^{|T}) - \Xi(\Gamma^{|T})_{\hittingtime{U}} = 0$ almost surely in $]]\hittingtime{U}(\Gamma), \hittingtime{U}(\Gamma)+\tau^{(h,e)}_{U}[[$.
\end{lemma}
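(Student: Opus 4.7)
The plan is to reduce the stochastic statement to the classical deterministic fundamental lemma of the calculus of variations, applied sample-pathwise, by constructing coordinate-aligned test variations via the admissibility hypothesis. I work locally in the chart $U$, identified with an open subset of $\R^n$; on the stochastic interval $[[\hittingtime{\bar U}(\Gamma), \hittingtime{\bar U}(\Gamma) + \tau^{(h,e)}_{\bar U}]]$ the process $\Gamma$ takes values in $\bar U$, so $\Xi(\stopproc{\Gamma}{T})$ admits a coordinate decomposition $(\Xi^1, \ldots, \Xi^n)$ there. Showing that each component $\Xi^k(\stopproc{\Gamma}{T})$ is constant on $]]\hittingtime{U}(\Gamma), \hittingtime{U}(\Gamma) + \tau^{(h,e)}_{U}[[$ suffices.

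For each coordinate index $k$, fix a smooth cutoff $\chi_k : M \to \R$ with compact support inside $U$ and $\chi_k \equiv 1$ on a neighborhood of $\bar U$ (passing to a smaller relatively compact subchart if necessary), and set $X_k := \chi_k\, \partial/\partial x^k$, a globally defined vector field on $M$. For every smooth $g : \R \to \R$ compactly supported in $(0, T)$, the process $Y_t := g(t)\, X_k(\Gamma_t)$ is a $TM$-valued semimartingale over $\Gamma$ vanishing in $[[0, \hittingtime{\bar U}(\Gamma)]] \cup [[(\hittingtime{\bar U}(\Gamma) + \tau^{(h,e)}_{\bar U}) \wedge T, \infty[[$. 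Admissibility produces a $(\bar U, T)$-deformation $\Gamma_\epsilon$ with $\delta\Gamma = Y$, and the hypothesis of the lemma then yields, after pairing in coordinates and using $\chi_k \circ \Gamma \equiv 1$ on the relevant event,
\[
\int_{\hittingtime{\bar U}(\Gamma)}^{\hittingtime{\bar U}(\Gamma) + \tau^{(h,e)}_{\bar U}} g(t)\, \del \Xi^k(\stopproc{\Gamma}{T})_t \,=\, 0.
\]

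Because $g$ is deterministic and smooth, it has zero quadratic covariation with $\Xi^k(\stopproc{\Gamma}{T})$, so Stratonovich integration by parts reduces this pathwise to
\[
\bigl[g(t)\, \Xi^k(\stopproc{\Gamma}{T})_t \bigr]_{\hittingtime{\bar U}(\Gamma)}^{\hittingtime{\bar U}(\Gamma) + \tau^{(h,e)}_{\bar U}} \,=\, \int_{\hittingtime{\bar U}(\Gamma)}^{\hittingtime{\bar U}(\Gamma) + \tau^{(h,e)}_{\bar U}} \Xi^k(\stopproc{\Gamma}{T})_t\, g'(t)\, dt.
\]
Choosing $g$ so that its support lies strictly inside a deterministic subinterval $(a,b) \subset (0,T)$ annihilates the boundary term on the event $\{\hittingtime{\bar U}(\Gamma) < a,\; \hittingtime{\bar U}(\Gamma) + \tau^{(h,e)}_{\bar U} > b\}$, leaving $\int_{(a,b)} \Xi^k(\stopproc{\Gamma}{T})_t\, g'(t)\, dt = 0$ for every such $g$. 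Running over a countable dense family of $g$'s on each rational subinterval $(a,b) \subset (0,T)$ and applying the classical fundamental lemma pathwise forces $\Xi^k(\stopproc{\Gamma}{T})$ to be $t$-constant on each such $(a,b)$ that lies inside the random interval, almost surely. Joining rational subintervals and iterating over $k$ gives $\del \Xi(\stopproc{\Gamma}{T}) = 0$ on the desired set.

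The principal obstacle is the sample-pathwise treatment of the random endpoints: because the classical fundamental lemma is deterministic, it has to be applied $\omega$-by-$\omega$, which forces the use of a countable family of deterministic test functions together with a measurable exhaustion of the stochastic interval rather than a single cleverly chosen $g$, and null-set bookkeeping has to be managed so that the pointwise conclusions aggregate to an almost-sure statement. A minor technical aside is that the admissibility hypothesis is precisely what is needed to realize $Y_t = g(t)\, X_k(\Gamma_t)$ as the variation of a genuine $(\bar U, T)$-deformation, as anticipated in the construction described immediately before the lemma.
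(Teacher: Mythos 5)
The paper does not actually prove this lemma itself — it cites \cite{saha2025stochastic} — so the only in-paper material to compare against is the construction of $(K,T)$-deformations sketched immediately before the statement. Your proposal follows exactly that template: test with $\delta\Gamma = g(t)\,\chi_k(\Gamma_t)\,\partial/\partial x^k$, use that $g$ is deterministic and of finite variation so the Stratonovich integral has vanishing covariation correction, integrate by parts, and run a pathwise du Bois--Reymond argument over a countable family of test functions and rational subintervals, with the null-set bookkeeping you flag. That analytic core is correct.

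The genuine gap is in the cutoff step. A smooth $\chi_k$ compactly supported in $U$ cannot be $\equiv 1$ on a neighbourhood of $\bar U$: its (closed) support would then contain $U$, hence equal $\bar U$, forcing $\bar U \subseteq U$. You acknowledge this by ``passing to a smaller relatively compact subchart,'' but then everything you actually establish — the identity $\int g\,\del\Xi^k = 0$ and the resulting constancy of $\Xi^k$ — is localized to the hitting and exit times of that smaller chart $U'$, not of $U$. The stated conclusion is constancy of $\Xi(\Gamma^{|T})$ relative to its value at $\tau^h_U$ on all of $]]\tau^h_U, \tau^h_U+\tau^{(h,e)}_U[[$, which includes times when $\Gamma$ is arbitrarily close to $\partial U$ and outside every fixed $U'$; the intervals $]]\tau^h_{U'}, \tau^h_{U'}+\tau^{(h,e)}_{U'}[[$ do not cover it, since $\Gamma$ may dip out of $U'$ and re-enter while remaining in $U$. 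What is missing is a patching argument: for each $t$ in the open random interval one has $\Gamma_t \in U$, hence $\Gamma$ stays in some compact $K' \subset U$ on a neighbourhood of $t$; the subchart version gives local constancy of $\Xi^k$ there; and connectedness of the interval upgrades local constancy to global constancy, with the common value pinned to $\Xi^k_{\tau^h_U}$ by path continuity. Relatedly, you never invoke the time-change hypothesis $\Xi(\Gamma)_{A_t} = \Xi(\Gamma_{A_t})$, which is presumably what licenses commuting $\Xi$ with stopping and with restriction to random intervals; a complete proof should indicate where it enters.
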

\subsection{The Stochastic Implicit Euler-Lagrange Equations}
For defining variations of semimartingales, we considered differentiability in the semimartingale topology. On the other hand, the variation of action functionals is described by differentiability in the topology of uniform convergence on compact sets in probability (ucp).
\begin{definition}
    Let $\Gamma\in\mathscr{S}(M)$ be admissible and $\epsilon\mapsto \Gamma_{\epsilon}$ be a deformation of $\Gamma$. Given a 1-form $\alpha$ on $M$ and a smooth function $f\in \Cin(M)$, we define\begin{enumerate}
        \item $\D\Sint{f(\Gamma)}{X} = \limucp \Sint{\frac{f(\Gamma_{\epsilon}) - f(\Gamma)}{\epsilon}}{X_t}$.
        \item $\D\Sint{\alpha}{\Gamma} = \limucp \frac{1}{\epsilon}\left(\Sint{\alpha}{\Gamma_{\epsilon}} - \Sint{\alpha}{\Gamma}\right)$,
\end{enumerate}where the convergence is taken in the ucp topology.
\end{definition}

\begin{lemma}\label{lem:variationsofsemimartginales}
Let $\Gamma\in\mathscr{S}(M)$ and $\epsilon\mapsto \Gamma_{\epsilon}$ be a deformation of $\Gamma$.
\begin{enumerate}
    \item For every real semimartingale $X$ and $f\in \Cin(M)$
    \begin{equation}\label{eq:variationoffdx}
        \D\int f(\Gamma)\del X = \int df(\delta \Gamma)\del X
    \end{equation}
    \item For every 1-form $\alpha$ on $M$
    \begin{equation}\label{eq:variationofalphadgamma}
        \D\int \alpha(\Gamma)\del \Gamma = \int i_{\delta \Gamma}\d\alpha \del X + \langle\alpha(\Gamma),\delta \Gamma\rangle - \langle\alpha(\Gamma_0), \delta\Gamma_0 \rangle,
    \end{equation}
    where $\d\alpha$ denotes the exterior derivative of $\alpha$.
\end{enumerate}
\end{lemma}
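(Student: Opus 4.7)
The plan is to handle the two parts separately. Part (1) follows almost directly from the definition of the variation $\delta\Gamma$: since $(f(\Gamma_\epsilon) - f(\Gamma))/\epsilon$ converges to $df(\delta\Gamma)$ in the semimartingale topology by hypothesis, and Stratonovich integration against a fixed semimartingale $X$ is continuous from the semimartingale topology into the ucp topology, I can pass the limit through the integral to get exactly $\int df(\delta\Gamma)\del X$.

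For part (2), the strategy is to localize to a coordinate chart and compute. Fixing coordinates $x^i$ on $U \subseteq M$ with $\alpha = \alpha_i\,dx^i$, I write
\[
\int \alpha(\Gamma_\epsilon)\del\Gamma_\epsilon = \sum_i \int \alpha_i(\Gamma_\epsilon)\del\Gamma_\epsilon^i.
\]
Differentiating each summand in $\epsilon$ at $\epsilon = 0$ requires differentiability of Stratonovich integrals in \emph{both} the integrand and the integrator. For the integrand, part (1) of the lemma gives the contribution $\int (\partial_j \alpha_i)(\Gamma)(\delta\Gamma)^j\del\Gamma^i$. For the integrator, continuity of the Stratonovich integral in the semimartingale topology of the integrator yields the contribution $\int \alpha_i(\Gamma)\del(\delta\Gamma)^i$, where $(\delta\Gamma)^i$ denotes the coordinate components of the variation (which are real-valued semimartingales over $\Gamma$).

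Next I apply the Stratonovich integration-by-parts formula to the second contribution:
\[
\int \alpha_i(\Gamma)\del(\delta\Gamma)^i = \langle\alpha(\Gamma),\delta\Gamma\rangle - \langle\alpha(\Gamma_0),\delta\Gamma_0\rangle - \int (\delta\Gamma)^i \del\alpha_i(\Gamma).
\]
Expanding $\del\alpha_i(\Gamma) = (\partial_j\alpha_i)(\Gamma)\del\Gamma^j$ and relabeling indices, the two interior integrals combine to $\int (\partial_j \alpha_i - \partial_i \alpha_j)(\Gamma)(\delta\Gamma)^j\del\Gamma^i$, which is the coordinate expression for $\int i_{\delta\Gamma}\d\alpha\del\Gamma$. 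Both sides of the identity are manifestly coordinate-invariant, so this establishes the formula in the chart.

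The main technical obstacle is the dual differentiability, i.e., justifying that $\epsilon \mapsto \int \alpha_i(\Gamma_\epsilon)\del\Gamma_\epsilon^i$ is differentiable in ucp with the expected two-term derivative. This rests on the continuity of Stratonovich integration in the semimartingale topology on \emph{both} arguments, which must be invoked for the integrator as well; the hypothesis that $\Gamma$ is admissible ensures the variation $\delta\Gamma$ is a genuine semimartingale over $\Gamma$ and that $\epsilon \mapsto \Gamma_\epsilon$ is differentiable in the semimartingale topology in the sense used in \cite{arnaudon1998}. A secondary issue is globalizing from charts: this is handled by stopping $\Gamma$ at the exit time from $U$ using the machinery recalled before Lemma \ref{lem:stochastic_fundlem}, patching with a partition of unity on $M$, and using the additivity of the Stratonovich integral in time. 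Since both sides of the asserted identity are tensorial in $\alpha$ and $\delta\Gamma$ and the formula holds locally, this completes the proof.
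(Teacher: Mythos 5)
The paper does not prove this lemma itself --- it is imported from \cite{saha2025stochastic} --- so there is no in-paper argument to compare against; judged on its own terms, your proof is correct and is the natural one. Part (1) is exactly as you say: the difference quotients converge in the semimartingale topology by the definition of a deformation, and Stratonovich integration against a fixed integrator is continuous into ucp. For part (2), splitting the $\epsilon$-derivative into an integrand term and an integrator term, then applying Stratonovich integration by parts and the chain rule $\del\alpha_i(\Gamma) = \partial_j\alpha_i(\Gamma)\del\Gamma^j$, correctly produces the antisymmetric combination $(\partial_j\alpha_i - \partial_i\alpha_j)(\delta\Gamma)^j\del\Gamma^i = i_{\delta\Gamma}\d\alpha(\del\Gamma)$ plus the boundary terms; note that the integrator on the right-hand side of \eqref{eq:variationofalphadgamma} should read $\del\Gamma$ rather than $\del X$, a typo you have silently corrected. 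Three small points. First, invoking ``part (1)'' for the integrand contribution is slightly loose, since in the cross term $\int\frac{\alpha_i(\Gamma_\epsilon)-\alpha_i(\Gamma)}{\epsilon}\del\Gamma_\epsilon^i$ the integrator moves as well; you do flag the required joint continuity of the Stratonovich integral in both arguments with respect to the semimartingale topology, and since this M\'emin--Emery-type statement is the real technical content of the lemma it deserves an explicit citation rather than a passing mention. Second, admissibility of $\Gamma$ is not needed here: the definition of a deformation already furnishes $\delta\Gamma$ as a $TM$-valued semimartingale over $\Gamma$; admissibility is only the converse surjectivity statement. Third, for globalization a spatial partition of unity is awkward because the chart computation requires $\Gamma$ to remain in the chart; the cleaner route is a sequence of stopping times localizing $\Gamma$ to charts, across which the boundary terms $\langle\alpha(\Gamma),\delta\Gamma\rangle$ telescope --- which is presumably what you intend by ``additivity of the Stratonovich integral in time.''
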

The next result may be obtained either by purely deterministic arguments or as a corollary of the previous lemma:
\begin{corollary}\label{cor:importantcorollary}
    Let $\gamma(t)$ be a smooth curve in $M$ and $\alpha$ be a 1-form on $M$. Suppose $\epsilon\mapsto \gamma_{\epsilon}(t)$ be a variation of $\gamma(t)$. Then
    \begin{equation}
        \delta\dotp{\alpha(\gamma(t))}{\dot{\gamma}(t)} = i_{\delta \gamma}\d \alpha (\dot{\gamma}(t))+ \frac{d}{dt}\dotp{\alpha(\gamma(t))}{\delta \gamma}.
    \end{equation}
\end{corollary}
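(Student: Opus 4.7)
The plan is to observe that this is a purely deterministic identity about smooth curves and families, so the cleanest strategy is to repackage both sides as partial derivatives of a single two-parameter object and then invoke naturality of the exterior derivative.

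Concretely, I would introduce the map $F:(-s,s)\times[0,T]\rightarrow M$ defined by $F(\epsilon,t) = \gamma_{\epsilon}(t)$, and consider the pullback one-form $F^{*}\alpha$ on $(-s,s)\times[0,T]$. Writing $F^{*}\alpha = A_{\epsilon}\,d\epsilon + A_{t}\,dt$, the coefficients are precisely
\[
A_{\epsilon}(\epsilon,t) = \dotp{\alpha(\gamma_{\epsilon}(t))}{\partial_{\epsilon}F(\epsilon,t)},\qquad A_{t}(\epsilon,t) = \dotp{\alpha(\gamma_{\epsilon}(t))}{\partial_{t}F(\epsilon,t)}.
\]
Setting $\epsilon = 0$, these restrict to $\dotp{\alpha(\gamma)}{\delta\gamma}$ and $\dotp{\alpha(\gamma)}{\dot\gamma}$ respectively, so that $\partial_{t}A_{\epsilon}\big|_{\epsilon=0} = \frac{d}{dt}\dotp{\alpha(\gamma(t))}{\delta\gamma(t)}$ and $\partial_{\epsilon}A_{t}\big|_{\epsilon=0} = \delta\dotp{\alpha(\gamma(t))}{\dot\gamma(t)}$.

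Next, I would apply naturality, $d(F^{*}\alpha) = F^{*}(d\alpha)$. The left side equals $(\partial_{\epsilon}A_{t} - \partial_{t}A_{\epsilon})\,d\epsilon\wedge dt$, while the right side equals $d\alpha(\partial_{\epsilon}F,\partial_{t}F)\,d\epsilon\wedge dt$. Evaluating at $\epsilon = 0$ and rearranging gives
\[
\delta\dotp{\alpha(\gamma(t))}{\dot\gamma(t)} - \frac{d}{dt}\dotp{\alpha(\gamma(t))}{\delta\gamma(t)} = d\alpha(\delta\gamma(t),\dot\gamma(t)) = i_{\delta\gamma}d\alpha(\dot\gamma(t)),
\]
which is the desired identity.

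As an alternative route (mentioned in the text), I would view $\gamma(t)$ as a degenerate semimartingale and apply Lemma \ref{lem:variationsofsemimartginales}(2) to the Stratonovich integral $\int \alpha(\gamma)\del\gamma$, which reduces to the ordinary integral $\int_0^{t}\dotp{\alpha(\gamma(s))}{\dot\gamma(s)}\,ds$; the pointwise identity then follows by differentiating the resulting equation in $t$. There is no real obstacle to either approach — the only care required is bookkeeping around what $\delta$ means (a $\partial_\epsilon|_{\epsilon=0}$ derivative of a family of smooth curves versus the variation of a semimartingale), which is exactly why factoring the argument through the two-parameter map $F$ and the naturality of $d$ is the most transparent execution.
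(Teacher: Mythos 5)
Your proposal is correct, and it matches what the paper intends: the paper gives no explicit proof of this corollary, remarking only that it "may be obtained either by purely deterministic arguments or as a corollary of the previous lemma," and your main argument (pulling $\alpha$ back along the two-parameter map $F(\epsilon,t)=\gamma_\epsilon(t)$ and using $d(F^*\alpha)=F^*(d\alpha)$) is exactly the standard deterministic route, while your alternative is exactly the reduction to Lemma \ref{lem:variationsofsemimartginales}. The computation of the coefficients of $F^*\alpha$, the sign in $\partial_\epsilon A_t - \partial_t A_\epsilon = d\alpha(\delta\gamma,\dot\gamma)$, and the identification with $i_{\delta\gamma}\d\alpha(\dot\gamma)$ are all correct.
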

Next, we define the stochastic action that we will consider. 
\begin{definition}
Let $\L\in\Cin(TQ)$ and consider smooth functions $L_1, \cdots, L_k\in \Cin(Q)$ (thought of as stochastic potentials) and vector fields $V_1, \cdots, V_k$ on $Q$ (thought of as the contribution of the noise to the velocity). Given $k+1$ semimartingales $X^0, \cdots, X^k$ on $\R$ and a $\P Q$-valued semimartingale $\Gamma_t = (q_t, v_t, p_t)$ we define the \textbf{stochastic Hamilton-Pontryagin action integral} as 
    \begin{align}\label{eq:stochastic_Hamilton-Pontryagin_action}
        \Ac{X}{\Gamma} = &\int_0^T\left(\L(q_t, v_t)\del X^0_t +\sum_{i = 1}^k L_i(q_t)\del X^i_t \right.\nonumber\\ &\left.+ \dotp{p_t}{\del q_t - v_t\del X^0_t - \sum_{i = 1}^k V_i(q_t)\del X^i_t}\right).
    \end{align}
\end{definition}
Typically, we are interested in the case $X^0 = t$ and $X^i$ is a Brownian motion. 

\medskip

We now state the stochastic Hamilton-Pontryagin principle:
\begin{theorem}[Stochastic Hamilton-Pontryagin Principle]
    For every semimartingale $X = (X^0, \cdots, X^k)$ on $\R^{k+1}$, if $\Gamma_t = (q_t, v_t, p_t)\in \mathscr{S}(\P Q)$ is admissible then $\D \Ac{X}{\Gamma} = 0$ for all deformations $\epsilon\mapsto \Gamma_{\epsilon}$ such that $\delta q_t$ vanishes at $t = 0$ and $t= T$, if and only if, $\stopproc{\Gamma}{T} = \left(q_t^{|T}, v_t^{|T}, p_t^{|T}\right)$ satisfies the \textbf{stochastic implicit Euler-Lagrange equations} given by
    \begin{align}\label{eq:sellocal}
    \del q_t &= v_t\del X^0_t + \sum_{i=1}^kV_i(q_t)\del X^i_t\nonumber\\
    \del p_t &= \frac{\partial}{\partial q_t}\left(\L\del X_t^0 + \sum_{i = 1}^k\left(L_i - \dotp{p_t}{V_i(q_t)}\right)\del X^i_t\right)\nonumber\\
    \left(p_t - \pard{\L}{v_t}\right)\del X^0_t&=0.
\end{align}
\end{theorem}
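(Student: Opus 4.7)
The plan is to differentiate $\Ac{X}{\Gamma}$ term-by-term using Lemma~\ref{lem:variationsofsemimartginales}, group the resulting contributions according to the independent components $\delta q_t$, $\delta v_t$, $\delta p_t$ of $\delta\Gamma_t$, and then invoke the stochastic fundamental lemma (Lemma~\ref{lem:stochastic_fundlem}) chart-by-chart to obtain the three pointwise equations of \eqref{eq:sellocal}.

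The ingredients $\L(q_t,v_t)$, $L_i(q_t)$, $\dotp{p_t}{v_t}$ and $\dotp{p_t}{V_i(q_t)}$ are smooth scalar functions on $\P Q$, so their variations are handled immediately by \eqref{eq:variationoffdx}. The only intrinsically non-routine piece is the canonical coupling $\int_0^T \dotp{p_t}{\del q_t}$, which I view as the Stratonovich integral of the tautological 1-form $\theta = p_i\,\mathbf{d}q^i$ on $\cotQ$ along the projection of $\Gamma_t$ to $\cotQ$ (equivalently, as the integral of the pullback of $\theta$ to $\P Q$). Applying \eqref{eq:variationofalphadgamma} and using $\mathbf{d}\theta = \mathbf{d}p\wedge\mathbf{d}q$ in Darboux coordinates gives
\begin{equation*}
\D\int_0^T \dotp{p_t}{\del q_t} = \int_0^T \bigl(\dotp{\delta p_t}{\del q_t}-\dotp{\delta q_t}{\del p_t}\bigr) + \dotp{p_T}{\delta q_T}-\dotp{p_0}{\delta q_0},
\end{equation*}
and the boundary contributions vanish by the hypothesis $\delta q_0 = \delta q_T = 0$.

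Summing all pieces and rearranging, $\D\Ac{X}{\Gamma} = I_v + I_p + I_q$ where
\begin{align*}
I_v &= \int_0^T\dotp{\pard{\L}{v}-p_t}{\delta v_t}\del X^0_t,\\
I_p &= \int_0^T\dotp{\delta p_t}{\del q_t-v_t\del X^0_t-\sum_{i=1}^k V_i(q_t)\del X^i_t},\\
I_q &= \int_0^T\dotp{\delta q_t}{-\del p_t+\pard{\L}{q}\del X^0_t+\sum_{i=1}^k\pard{}{q}\bigl(L_i-\dotp{p_t}{V_i(q_t)}\bigr)\del X^i_t}.
\end{align*}
Since $\delta v_t$ and $\delta p_t$ are admissible fiberwise variations subject to no endpoint condition, Lemma~\ref{lem:stochastic_fundlem} (applied in charts of $TQ$ and $\cotQ$) forces the integrands of $I_v$ and $I_p$ to vanish identically, producing the third and first equations of \eqref{eq:sellocal}. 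For $I_q$ I use $(\bar U,T)$-variations with $\bar U$ ranging over a coordinate cover of $Q$; Lemma~\ref{lem:stochastic_fundlem} then gives vanishing of the $I_q$-integrand on each $]]\hittingtime{U},\hittingtime{U}+\tau^{(h,e)}_{U}[[$, and these patch together to yield the middle equation globally on $[[0,T]]$. The converse direction is immediate: if $\stopproc{\Gamma}{T}$ satisfies \eqref{eq:sellocal} then $I_v = I_p = I_q = 0$ for every admissible variation with $\delta q_0 = \delta q_T = 0$, whence $\D\Ac{X}{\Gamma}=0$.

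The main obstacle is the intrinsic treatment of the coupling $\int\dotp{p_t}{\del q_t}$: one must carefully justify that \eqref{eq:variationofalphadgamma} applies to the tautological 1-form pulled back to $\P Q$, producing both the interior integrand $\dotp{\delta p}{\del q}-\dotp{\delta q}{\del p}$ and the correct boundary contribution $\dotp{p_T}{\delta q_T}-\dotp{p_0}{\delta q_0}$ in a coordinate-free way. Once this is settled, the remainder is a routine localization argument together with chart-level invocations of Lemma~\ref{lem:stochastic_fundlem}.
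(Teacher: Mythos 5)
Your proposal is correct and follows essentially the same strategy the paper uses for the reduced analogue (Theorem \ref{thm:variation_of_Sred}, proved in Appendix A): vary term by term via Lemma \ref{lem:variationsofsemimartginales}, treat the coupling $\int\dotp{p_t}{\del q_t}$ as the Stratonovich integral of a canonical $1$-form so that \eqref{eq:variationofalphadgamma} yields $\int(\dotp{\delta p}{\del q}-\dotp{\delta q}{\del p})$ plus boundary terms killed by $\delta q_0=\delta q_T=0$, then group by $\delta q$, $\delta v$, $\delta p$ and localize with $(K,T)$-deformations and Lemma \ref{lem:stochastic_fundlem}. The only cosmetic difference is that the paper applies the fundamental lemma on regular coordinate balls of the full Pontryagin bundle rather than separately in charts of $TQ$, $\cotQ$ and $Q$, but this does not change the argument.
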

Note that if $X^0 = t$ and $X^i = 0$ for all $i = 1, \cdots, k$ then the stochastic Hamilton-Pontryagin action and the implicit Euler-Lagrange equations agree with their deterministic counterparts respectively.

\section{Connectors and the Stochastic Covariant Derivative}
In the stochastic case, the covariant derivatives in the vertical and horizontal Lagrange-Poincaré equations are replaced by stochastic covariant derivatives. On a vector bundle with a connection, this is defined in terms of the connector or the connection map. 

\subsection{The Connector on a Vector Bundle}

Let $E\xrightarrow{\pi_E} M$ be a vector bundle with a connection $\nabla$ and let $GL(E)$ denote the frame bundle of $E$. From the connection $\nabla$, we obtain a projection $\Phi_E: TE\rightarrow \Ver E$ to the vertical bundle $\Ver E$ of $TE$. Let $\vlift^E: E\times_M E\rightarrow \Ver E$ denote the vertical lift map given by 
\begin{equation}
    \vlift_m^E(v_{m_1}, v_{m_2}) = \frac{d}{ds}\Big|_{s = 0}(v_{m_1} + sv_{m_2}),
\end{equation}
for all $m\in M$, $v_{m_1}, v_{m_2} \in \pi_E^{-1}(m)$. For proof of the next lemma, we refer the reader to Michor \cite{michor2008topics}.
\begin{lemma}\label{lem:propertiesofvlift}
    The vertical lift map $\vlift^E$ satisfies the following properties:
    \begin{enumerate}
        \item $\vlift^E$ is a vector bundle isomorphism.
        \item Let $\Pr_i$ denote the projection onto the $i$-th factor of $E\times_M E$. Then $\Pr_1\circ {(\vlift^{E})}^{-1} = \pr{TE}|_{\Ver E}$, where $\pr{TE}:TE\rightarrow E$ is the projection onto the basepoint. 
    \end{enumerate}
\end{lemma}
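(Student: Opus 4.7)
The plan is to verify both assertions by passing to a local trivialization of the vector bundle $\pi_E:E\to M$, where the vertical lift becomes an entirely linear-algebraic object. Fix $m\in M$ and a trivializing chart $\pi_E^{-1}(U)\cong U\times \R^n$ with base coordinates $x^i$ and fiber coordinates $w^a$. For $v_{m_1}=(m,w_1)$ and $v_{m_2}=(m,w_2)$ in the same fiber, the curve $s\mapsto v_{m_1}+sv_{m_2}$ reads $s\mapsto (m,w_1+sw_2)$, and differentiating at $s=0$ gives
\[
\vlift^E_m(v_{m_1},v_{m_2}) \;=\; w_2^{a}\,\frac{\partial}{\partial w^a}\bigg|_{(m,w_1)}\in\Ver_{(m,w_1)}E.
\]
This local formula is the engine of the proof.

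For (1), I would first observe that $E\times_M E$ carries a vector bundle structure over $E$ via the projection $\Pr_1$, with fibre $\pi_E^{-1}(m)$ over $v_m$, while $\Ver E$ is a vector subbundle of $TE\to E$. The local formula shows that $\vlift^E$ covers the identity on $E$ (the basepoint on both sides is $v_{m_1}$), is linear in the second argument $v_{m_2}$, and is smooth. Its pointwise inverse sends $c^a\,\partial/\partial w^a\in\Ver_{(m,w_1)}E$ to $((m,w_1),(m,c))\in E\times_M E$; this is smooth and fibrewise linear. Since $\vlift^E$ is defined intrinsically (independently of the trivialization), the locally defined inverses patch to a global smooth inverse, so $\vlift^E$ is a vector bundle isomorphism over $E$.

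For (2), by construction $\vlift^E(v_{m_1},v_{m_2})$ is a tangent vector based at $v_{m_1}$, so $\pr{TE}\bigl(\vlift^E(v_{m_1},v_{m_2})\bigr)=v_{m_1}=\Pr_1(v_{m_1},v_{m_2})$. Given $Y\in\Ver_{v_m}E$, write $(\vlift^E)^{-1}(Y)=(v_m,Z)$; then $\Pr_1\circ(\vlift^E)^{-1}(Y)=v_m=\pr{TE}(Y)$, which is the claim.

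There is no genuine obstacle here; the only thing to be careful about is bookkeeping of which vector bundle structure is in play (in particular, that $E\times_M E$ is viewed as a bundle over $E$ through $\Pr_1$, not over $M$), so that ``vector bundle isomorphism'' in (1) means an isomorphism of bundles over $E$.
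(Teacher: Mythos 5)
Your proof is correct. The paper does not prove this lemma at all---it simply cites Michor's \emph{Topics in Differential Geometry} for it---so there is no in-paper argument to compare against; your local-trivialization computation (identifying $\vlift^E$ with $(w_1,w_2)\mapsto w_2^a\,\partial/\partial w^a$, checking fibrewise linearity and bijectivity over $E$, and reading off part (2) from the basepoint of the lifted vector) is the standard proof and is complete, including the correct care taken to view $E\times_M E$ as a bundle over $E$ via $\Pr_1$.
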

\begin{definition}
    The map $\vpr_E:= \Pr_2\circ {(\vlift^E)}^{-1}: \Ver E\rightarrow E$ is called the \textbf{vertical projection}. The map $K_{\nabla}:= \vpr_E\circ \Phi_E: TE\rightarrow E$ is called the \textbf{connector} associated to $\nabla$.
\end{definition}
Let $F:E\rightarrow \R$ and $v_0 \in E$ be an arbitrary point with $\pi_E(v_0) = m_0$. Let $u_{m_0} \in T_{m_0} M$. We define 
\begin{equation}
    \dotp{\frac{\partial F}{\partial m}\Big|_{v_{0}}}{(m_0, u_{m_0})} = \frac{d}{dt}\Big|_{t= 0} F(m^h(t))
\end{equation}
where $m^h(t)$ is the horizontal lift of a curve $m(t)$ in $M$, with $m(0)= m_0$, $\dot{m}(0) = u_{m_0}$ and $m^h(0) = v_0$. 
\begin{remark}
    This is independent of the choice of the curve $m(t)$ in $M$, since the tangent vector $\dot{m}^h(0)$ depends only on the horizontal lift of $u_{m_0}$.
\end{remark}
We also define 
\begin{equation}
    \dotp{\frac{\partial F}{\partial v}\Big|_{v_0}}{v} = \frac{d}{ds}\Big|_{s = 0} F(v_0 +sv) = \dotp{dF}{ \vlift^E_{m_0}(v_0, v)},
\end{equation}
for all $v\in\pi_E^{-1}(m_0)$. Note that the definition of $\frac{\partial F}{\partial v}$ is independent of $\nabla$.
\begin{theorem}\label{thm:variation_vector_bundle}
    Suppose $\epsilon\mapsto v_{m,\epsilon}(t)$ is a deformation of $v_m(t)$. Then
    \begin{equation}\label{eq:variation_vector_bundle}
        \dotp{dF}{\delta v_m(t)} = \dotp{\frac{\partial F}{\partial m}}{\delta m(t)} + \dotp{\frac{\partial F}{\partial v}}{K_{\nabla}(\delta v_{m}(t))}.
    \end{equation}
\end{theorem}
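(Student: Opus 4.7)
The plan is to split the variation $\delta v_m(t)\in T_{v_m(t)}E$ into its horizontal and vertical components relative to $\nabla$ and then recognize the two terms on the right-hand side as $dF$ evaluated on each piece. This is essentially a chain-rule computation once the connection splitting of $TE$ is made explicit.

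First I would write $\delta v_m(t) = H_t + V_t$, where $V_t := \Phi_E(\delta v_m(t))\in \ver_{v_m(t)}E$ and $H_t := \delta v_m(t) - V_t$ is horizontal. Because $T\pi_E$ annihilates vertical vectors and restricts to an isomorphism on each horizontal subspace, $H_t$ is exactly the horizontal lift at $v_m(t)$ of $\delta m(t)=T\pi_E(\delta v_m(t))$. By the definition of $\frac{\partial F}{\partial m}$ (via a curve in $M$ through $m(t)$ with initial velocity $\delta m(t)$ and its horizontal lift starting at $v_m(t)$), this immediately gives
\[\dotp{dF}{H_t} = \dotp{\frac{\partial F}{\partial m}\bigg|_{v_m(t)}}{\delta m(t)}.\]

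Next I would handle the vertical piece using Lemma \ref{lem:propertiesofvlift}(2): since $\Pr_1\circ (\vlift^E)^{-1} = \pr{TE}|_{\Ver E}$, applying $(\vlift^E)^{-1}$ to $V_t$ yields first component $\pr{TE}(V_t) = v_m(t)$. By the definition $K_\nabla := \vpr_E\circ \Phi_E$, the second component is $K_\nabla(\delta v_m(t))$, so $V_t = \vlift^E_{m}\bigl(v_m(t),\, K_\nabla(\delta v_m(t))\bigr)$. The defining identity $\dotp{dF}{\vlift^E_m(v_0,v)} = \dotp{\frac{\partial F}{\partial v}\big|_{v_0}}{v}$ then gives $\dotp{dF}{V_t} = \dotp{\frac{\partial F}{\partial v}}{K_\nabla(\delta v_m(t))}$.

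Adding the two contributions yields the claimed identity \eqref{eq:variation_vector_bundle}. There is no serious obstacle here; the only point that requires a small argument is checking that the horizontal component of $\delta v_m(t)$ really is the horizontal lift of $\delta m(t)$, which follows from the complementarity of the horizontal and vertical subbundles together with the fact that $T\pi_E$ composed with horizontal lifting is the identity on $TM$. The argument is pointwise in $t$ and so works verbatim for both smooth curves and semimartingale variations.
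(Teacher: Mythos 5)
Your proof is correct and follows essentially the same route as the paper: both split $\delta v_m(t)$ into horizontal and vertical parts, identify the horizontal part as the horizontal lift of $\delta m(t)$ to recover $\dotp{\frac{\partial F}{\partial m}}{\delta m(t)}$, and use Lemma \ref{lem:propertiesofvlift} together with the definition of $K_{\nabla}$ to write the vertical part as $\vlift^E_{m}\bigl(v_m(t), K_{\nabla}(\delta v_m(t))\bigr)$ and recover $\dotp{\frac{\partial F}{\partial v}}{K_{\nabla}(\delta v_m(t))}$. The only cosmetic difference is that the paper realizes the horizontal term by differentiating $F$ along horizontal lifts of the deformed base curves $m_{\epsilon}(t)$, whereas you invoke the definition of $\frac{\partial F}{\partial m}$ pointwise; these are equivalent.
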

\begin{proof}
    We have
    \begin{equation*}
        \dotp{dF}{\delta v_m(t)} = \dotp{dF|_{\Hor E}}{\Hor\delta v_m(t)} + \dotp{dF|_{\Ver E}}{\Ver\delta v_m(t)}
    \end{equation*}
    where $\Hor\delta v_m(t)$ and $\Ver\delta v_m(t)$ are the horizontal and vertical components of $\delta v_m(t)$ respectively. For the first term on the right, horizontally lift $m_{\epsilon}(t) := \pi_E(v_{m,\epsilon}(t))$ to a curve $m^h_{\epsilon}(t)$ starting at $v_{m,\epsilon}(0)$. Then
    \begin{equation*}
        \dotp{dF|_{\Hor E}}{\Hor\delta v_m(t)} = \dotp{dF}{\delta m^h(t)} = \frac{d}{d\epsilon}\Big|_{\epsilon = 0}F(m^h_{\epsilon}(t)) = \dotp{\frac{\partial F}{\partial m}}{\delta m(t)}.
    \end{equation*}
    For the second term on the right, let $w(t) = {(\vlift^E)^{-1}}\circ \Phi_E(\delta v_m(t))$. Then, by Lemma \ref{lem:propertiesofvlift} $\Pr_1(w(t)) = v_m(t)$, and by definition of $K_{\nabla}$, $\Pr_2(w(t)) = K_{\nabla}(\delta v_m(t))$. Therefore, we have
    \begin{align*}
        \dotp{dF|_{\Ver E}}{\Ver\delta v_m(t)} &= \dotp{dF_{\Ver E}}{\Phi_E(\delta v_m(t))}\\
        &= \left\langle dF|_{\Ver E},(\vlift^E)(\mathrm{Pr}_1(w(t)), \mathrm{Pr}_2(w(t)))\right\rangle\\
        &= \dotp{dF|_{\Ver E}}{(\vlift^E)(v_m(t),K_{\nabla}(\delta v_m(t))}\\
        &= \dotp{dF|_{\Ver E}}{\frac{d}{ds}\Big|_{s = 0}(v_m(t) + K_{\nabla}(\delta v_m(t)))}\\
        &= \frac{d}{ds}\Big|_{s= 0} F(v_m(t) + K_{\nabla}(\delta v_m(t)))\\
        &= \dotp{\frac{\partial F}{\partial v}}{K_{\nabla}(\delta v_{m}(t))}.
    \end{align*}
    This completes the proof.
\end{proof}

\subsection{The Stochastic Covariant Derivative}\label{sec:stochastic_covariant_derivative}

Let $E\xrightarrow{\pi_E} M$ be a vector bundle with a connection $\nabla$ and let $E^*$ denote the dual bundle. Let $K_{\nabla}$ denote the associated connector. Suppose $\Gamma$ is a semimartingale in $E$. We shall take Proposition 1 in Catuogno, Ledesma, and Ruffino \cite{catuogno2013note} as the definition of the (Stratonovich) stochastic covariant derivative of $\Gamma$:
\begin{definition}
    For every section $\alpha: M\rightarrow E^*$ of the dual bundle $E^*$, we define the \textbf{stochastic covariant derivative} $\int \alpha\Del \Gamma$ by 
    \begin{equation}\label{eq:definition_stochastic_covariant_derivative}
\int \dotp{\alpha}{ \Del\Gamma} := \int \alpha(\pi_E(\Gamma))\circ K_{\nabla}\del \Gamma.\end{equation}
\end{definition}
We denote Equation \eqref{eq:definition_stochastic_covariant_derivative} in differential notation by 
\begin{equation*}
    \Del \Gamma: = K_{\nabla}\del \Gamma.
\end{equation*}
Either by purely deterministic arguments or as a special case of this definition, we obtain
\begin{equation}\label{eq:connectionmap}
    \frac{Dv(t)}{Dt}= K_{\nabla}(\dot{v}(t)).
\end{equation}

It follows from the expression of the covariant derivative on the associated bundle that given a semimartingale $[q_t, \bar{\zeta}_t]$ in $\alag$, we have
\begin{equation}\label{eq:stochastic_covariant_derivative_associated_bundle}
    \Del [q_t, \bar{\zeta}_t]_G = [q_t, [\zeta_t, A(q_t, \del q_t)] + \del \zeta_t]_G.
\end{equation}
Also note that if $K_{\nabla_{\alag}}$ denotes the connector on $\alag$ and we have a connection ${\nabla_{\dalag}}$ on the dual bundle $\dalag$ defined by Equation \eqref{eq:covariantderivativedualbundle} then 
\begin{equation}\label{eq:covariant_Derivative_dual_bundle_connector}
\frac{d}{dt}\dotp{\bar{\mu}(t)}{\bar{\zeta}(t)} = \dotp{\bar{\mu}(t)}{K_{\nabla^{\alag}}\dot{\bar{\zeta}}(t)} +  \dotp{K_{\nabla^{\dalag}}\dot{\bar{\mu}}(t)}{\bar{\zeta}(t)},
\end{equation}
where $\bar{\zeta}(t)$ and $\bar{\mu}(t)$ are curves in $\alag$ and $\dalag$ respectively that projects to the same curve in $Q/G$. For semimartingales $\bar{\zeta}\in\mathscr{S}(\alag)$ and $\bar{\mu}\in\mathscr{S}(\dalag)$ that have the same projection to $Q/G$, we define $\int \dotp{\Del \bar{\mu}}{\bar{\zeta}}$ by 
\begin{equation}\label{eq:stochastic_covariant_derivative_dual_bundle}
    \dotp{\bar{\mu}}{\bar{\zeta}} = \int \dotp{\Del \bar{\mu}}{\bar{\zeta}} + \int \dotp{\bar{\mu}}{\Del \bar{\zeta}}.
\end{equation}

\section{Stochastic Implicit Lagrange-Poincaré Reduction}

\subsection{The Reduced Action Functional}

We now assume that $\L\in\Cin(TQ)$, $L_i\in\Cin(Q)$ and $V_i\in\VecF{Q}$ are $G$-invariant. Let $l\in\Cin(T(Q/G)\oplus \alag)$ denote the reduced Lagrangian corresponding to $L$. By $G$-invariance of $L_i$, there exists $\ell_i\in \Cin(Q)$ such that $\ell_i\circ\pi = L_i$. Similarly, $G$-invariance of $V_i$ implies that there exists sections $\beta_i:Q/G\rightarrow \alag$ and vector fields $\Theta_i\in \VecF{Q/G}$ such that $\beta_i([q]) = [q, A(V_i(q))]_G$ and $\Theta_i \circ \pi = T\pi \circ V_i$. Note that $G$-invariance of $V_i$ and equivariance of $A$ implies that $\beta_i$ is well-defined. Similar to the deterministic reduced action functional $\S^{red}$ in Equation \eqref{eq:deterministic_reduced_action}, we shall describe a reduced stochastic action $\S_X^{red}$ corresponding to the stochastic Hamilton-Pontryagin action $\S_X$ in Equation \eqref{eq:stochastic_Hamilton-Pontryagin_action}.

\medskip

Denote by $G_{\P Q}: \P Q\rightarrow \R$ the pairing map $(u_q, p_q)\in \P Q\mapsto \dotp{p_q}{u_q}$. Let $G_{\P (Q/G)}$ denote the corresponding map for $\P (Q/G)$. Let $\rho_{T\cotQ}: T\cotQ\rightarrow \P Q$ denote the map $(q, p,u_q, u_p)\in T\cotQ\mapsto (q, u_q, p)\in \P Q$. Yoshimura and Marsden \cite{yoshimura061} show that this map is coordinate independent. Next, let \begin{align*}
    \pr{Q}&:\P Q\rightarrow Q\\
    \pr{TQ}&: \P Q\rightarrow TQ\\
    \pr{\cotQ}&: \P Q\rightarrow \cotQ
\end{align*}
denote the projections $(q, v, p)\in \P Q\mapsto q\in Q$, $(q, v, p)\in \P Q\mapsto (q,v)\in TQ$ and $(q, v, p)\in \P Q\mapsto (q, p)\in \cotQ$, respectively. The corresponding projections in $\P (Q/G)$ are denoted by $\pr{Q/G}$, $\pr{T(Q/G)}$, and $\pr{T^*(Q/G)}$ , respectively. Consider the 1-form $\mathcal{G}_{\P Q}$ on $\P Q$ defined by the composition $\mathcal{G}_{\P Q} = G_{\P Q}\circ \rho_{T\cotQ} \circ T\pr{\cotQ}$. The corresponding 1-form on $\P (Q/G)$ is denoted by $\mathcal{G}_{\P(Q/G)}$. 

\medskip

Given a vector field $V\in \VecF{Q}$, let $\bar{V}:\P Q\rightarrow \P Q$ denote the map defined by $\bar{V}(q,v,p) = (q,V(q), p) = (V\circ\pr{Q})\oplus \pr{\cotQ}$. If $V$ is a vector field on $Q/G$ instead, then we will slightly abuse notation to still denote by $\bar{V}:\P (Q/G)\rightarrow \P (Q/G)$ the map $\bar{V}(x, u, y) = (u, V(x), y)$, where $(x,u,y)\in \P (Q/G)$. We define generalized energies $E_j\in \Cin(\P Q)$ by 
\[E_j = \begin{cases}
    G - \L\circ \pr{TQ}, \:\text{if }j = 0,\\
    G\circ \bar{V}_j - L_j\circ\pr{Q},\:\text{if }j = 1,\cdots, k.
\end{cases}\]
In local coordinates, we have
\[E_j(q,v,p) = \begin{cases}
    \dotp{p_q}{v_q} - L(q,v_q), \:\text{if }j = 0,\\
    \dotp{p_q}{V_j(q)} - L_j(q),\:\text{if }j = 1,\cdots, k.
\end{cases}\]
In Saha \cite{saha2025stochastic} it is shown that the stochastic Hamilton-Pontryagin action functional can be written as
\begin{equation}\label{eq:action_functional_intrinsic}
    \Ac{X}{\Gamma} = \int_0^T \mathcal{G}_{\P Q}\del \Gamma - \sum_{j = 0}^k\int_0^TE_j(\Gamma)\del X^j.
\end{equation}
Now suppose $\L, L_1,\cdots, L_k$ and $V_1, \cdots, V_k$ are all $G$-invariant. Then, for all $g\in G$, $\Ac{X}{g\Gamma} = \Ac{X}{\Gamma}$. Therefore $\Ac{X}{\cdot}$ drops to an action $\S_{X}^{\reduced}({\cdot})$ on $\P Q/G$ defined by 
\[\S_X^{\reduced}([\Gamma]) = \Ac{X}{\Gamma}.\]
Via the isomorphism $\P Q/G\cong\P(Q/G)\oplus\alag\oplus\dalag$, we can identify $[\Gamma]$ with a semimartingale $\Gamma^{\P(Q/G)}\oplus \bar{\zeta}\oplus \bar{\mu}$, where $\Gamma^{\P(Q/G)}\in \mathscr{S}(\P(Q/G))$, $\bar{\zeta}\in\mathscr{S}(\alag)$ and $\bar{\mu}\in\mathscr{S}(\dalag)$. 

\medskip

Let $\ell\in\Cin(T(Q/G)\oplus\alag)$ denote the reduced Lagrangian corresponding to $\L$, $l_i\in\Cin(Q/G)$ denote the reduced Lagrangians corresponding to $L_i$, $V_i^{\reduced}\in\VecF{Q/G}$ denote the reduced vector fields corresponding to $V_i$, and $\bar{\beta}_i:Q/G\rightarrow \alag$ denote the section $\bar{\beta}_i([q]) = [q, \beta_i(q)]_G$, where $\beta_i(q) = A(V_i(q))$. Note that $\bar{\beta}_i$ is well-defined since $V_i$ is symmetric under the $G$-action and $A$ is $G$-equivariant. We define $\mathcal{E}_j\in\Cin(\P(Q/G)\oplus\alag\oplus\dalag)$ by 
\[\mathcal{E}_j(x, u , y, \bar{\zeta}, \bar{\mu}) = \begin{cases}
    \dotp{y}{u} + \dotp{\bar{\mu}}{\bar{\zeta}} - \ell(x, u, \bar{\zeta}),\:\text{if }j = 0,\\
    \dotp{y}{V_j^{\reduced}(x)} + \dotp{\bar{\mu}}{\bar{\beta}_j(x)} - l_j(x),\:\text{if }j = 1,\cdots,k.
\end{cases}\]
Denote by $G_{\alag}:\alag\oplus\dalag\rightarrow \R$ the fibrewise pairing map $(\bar{\zeta}_x, \bar{\mu}_x)\mapsto \dotp{\bar{\zeta}_x}{\bar{\mu}_x}$, where $\bar{\zeta}_x, \bar{\mu}_x$ are elements in the fibre over $x\in Q/G$ respectively. If $\bar{\mu}$ is a semimartingale in $\dalag$ and $q_t$ is a semimartingale in $Q$ with $\bar{\mu}_t$ projecting onto $\pi(q_t)$, we consider the semimartingale $\Psi_{\bar{\mu}}\in \cotQ$ over $q_t$ defined by 
\begin{equation}\label{eq:definition_of_Psi_bar_mu}
    \Psi_{\bar{\mu}_t}(v_{q_t}) = G_{\alag}(\bar{\mu}_t, [q_t, A(q_t, v_{q_t})]_G)
\end{equation}
for all $v_{q_t}\in T_{q_t} Q$. 
\begin{proposition}
    Let $q^1_t$ and $q^2_t$ be semimartingales in $Q$ such that $q_{1_t} = g_t q_{2_t}$ for some semimartingale $g_t$ in $G$. Denote by $\Psi^1_{\bar{\mu}_t}$ and $\Psi^2_{\bar{\mu}_t}$ the corresponding $T^*Q$-valued semimartingales over $q^1_t$ and $q^2_t$ respectively. Then \[\int \Psi^1_{\bar{\mu}_t}\del q^1_t = \int \Psi^2_{\bar{\mu}_t}\del q^2_t.\]
\end{proposition}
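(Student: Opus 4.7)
The plan is to reduce the claim to a pointwise Stratonovich identity and then simplify using the geometry of the principal bundle. The key ingredients are the Stratonovich chain rule for group actions, the two defining properties of the principal connection $A$, and the equivalence relation defining $\alag$.

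First I would apply the Stratonovich chain rule to the smooth action map $\phi\colon G\times Q\to Q$ evaluated on the semimartingale $(g_t, q^2_t)$. Since Stratonovich integrals satisfy the classical chain rule, this yields
\[\del q^1_t = (\del g_t\, g_t\inv)_Q(q^1_t) + T_{q^2_t}\phi_{g_t}(\del q^2_t),\]
where $\xi_Q$ denotes the fundamental vector field of $\xi\in\lag$ on $Q$.

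Applying $A$ to both sides and using the characterising properties $A(q, \xi_Q(q))=\xi$ and $A(gq, T\phi_g v)=\Ad_g A(q,v)$ of the principal connection, I obtain
\[A(q^1_t, \del q^1_t) = \del g_t\, g_t\inv + \Ad_{g_t}\, A(q^2_t, \del q^2_t).\]
Pairing with $\bar{\mu}_t$ through $G_{\alag}$, the summand containing $\Ad_{g_t}\, A(q^2_t, \del q^2_t)$ collapses to $\Psi^2_{\bar{\mu}_t}(\del q^2_t)$ via the equivalence $[gq, \Ad_g\xi]_G=[q,\xi]_G$, leaving only a residual pairing $\langle \bar{\mu}_t, [q^1_t, \del g_t\, g_t\inv]_G\rangle$.

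The main obstacle is handling this residual term and showing that its Stratonovich integral vanishes. My plan is to reinterpret it using the covariant Stratonovich derivative on $\dalag$ from Section~4: since $\bar{\mu}_t$ represents the same element of $\dalag$ regardless of the choice of lift, the object $[q^1_t, \del g_t\, g_t\inv]_G$ should match, up to a covariant differential, the correction arising from the change of gauge, so that an integration by parts via Equation~\eqref{eq:stochastic_covariant_derivative_dual_bundle} cancels the term. Equivalently, invoking the equivariance $J(gp)=\Ad^*_{g\inv}J(p)$ of the momentum map $J\colon\cotQ\to\dlag$ allows the residual to be re-expressed in a manifestly gauge-invariant form, once more yielding the asserted equality.
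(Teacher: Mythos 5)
Your opening computation is correct and is, in fact, more careful bookkeeping than the paper's own argument. The Stratonovich chain rule for the action map gives $\del q^1_t = (\del g_t\, g_t\inv)_Q(q^1_t) + T\phi_{g_t}(\del q^2_t)$, and the two defining properties of the principal connection then yield
\[\Psi^1_{\bar{\mu}_t}(\del q^1_t) = \Psi^2_{\bar{\mu}_t}(\del q^2_t) + \dotp{\bar{\mu}_t}{[q^1_t,\, \del g_t\, g_t\inv]_G}.\]
The paper's proof takes a shorter route: it establishes only the pointwise covector identity $\Psi^1_{\bar{\mu}_t}(v_{q^1_t}) = \Psi^2_{\bar{\mu}_t}(g_t v_{q^1_t})$ and then passes directly to the integrals; that passage amounts to substituting $\del q^1_t = T\phi_{g_t}(\del q^2_t)$, i.e.\ to discarding precisely the fundamental-vector-field term that your chain rule makes explicit. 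Up to this point you have correctly isolated where the content of the statement lies.

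The gap is your final step, and it cannot be closed by the devices you propose, because the residual $\int \dotp{\bar{\mu}_t}{[q^1_t, \del g_t\, g_t\inv]_G}$ is not zero for a general semimartingale $g_t$. Concretely, take $Q = G = S^1$, so that $Q/G$ is a point, $\alag\cong\lag\cong\R$, and $A$ is the Maurer--Cartan form; let $\bar{\mu}_t\equiv 1$ and $g_t = e^{it}$. Then $\del g_t\, g_t\inv$ corresponds to $dt$, the residual equals $\int_0^T dt = T\neq 0$, and indeed $\int\Psi^1\del q^1 - \int\Psi^2\del q^2 = T$. Neither rescue works: equivariance of $J$ merely rewrites the term in a gauge-covariant form without making it vanish, and an integration by parts via Equation \eqref{eq:stochastic_covariant_derivative_dual_bundle} produces boundary terms and a term in $\Del\bar{\mu}_t$ for which there is nothing to cancel against. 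The identity genuinely holds only when the residual is absent --- most importantly when $g_t$ is a \emph{constant} element of $G$, which is the case actually required for $\Ac{X}{g\Gamma} = \Ac{X}{\Gamma}$ and hence for the action to descend to $\P Q/G$. Your decomposition is the right way to see all of this; what is missing is the recognition that the residual term must be hypothesized away (e.g.\ $g$ constant, or $\dotp{\bar{\mu}_t}{[q^1_t,\del g_t\, g_t\inv]_G}=0$) rather than argued away.
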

\begin{proof}
    Let $v_{q^1_t}$ be an arbitrary tangent vector in $T_{q^1_t} Q$. Then $v_{q^2_t} = g_tv_{q^1_t}$ is an arbitrary tangent vector in $T_{q^2_t}Q$. It suffices to show that $\Psi^1_{\bar{\mu}_t}(v_{q^1_t}) = \Psi^2_{\bar{\mu}_t}(v_{q^2_t})$. We have,
    \begin{align*}
        \Psi^2_{\bar{\mu}_t}(v_{q^2_t}) &= G_{\alag}(\bar{\mu}_t, [q^2_t, A(q^2_t,v_{q^2_t})]_G)\\
        &= \dotp{\bar{\mu}_t}{[g_tq^1_t, A(g_tq^1_t, g_t v_{q^1_t})]_G}\\
        &= \dotp{\bar{\mu}_t}{[g_tq^1_t, \Ad_{g_t}A(q^1_t, v_{q^1_t})]_G}\\
        &= \dotp{\bar{\mu}_t}{[q^1_t, A(q^1_t, v_{q^1_t})]_G}\\
        &= \Psi^1_{\bar{\mu}_t}(v_{q^1_t}),
    \end{align*}
    as required.
\end{proof}
Given a semimartingale $q_t$ in $Q$, we let $\bar{\xi}_t = [q_t,\xi_t]_G$, where $\xi_t =\int A\del q_t$. Let us define,
\begin{equation}
    \int \dotp{\bar{\mu}_t}{\del \bar{\xi}_t} = \int \Psi_{\bar{\mu}_t}\del q_t.
\end{equation}
\begin{remark}
    If $\Gamma$ is a semimartingale in a manifold $M$, we usually think of $\del \Gamma$ as a formal tangent vector in $TM$. However, in this case, we will think of $\del\bar{\xi}$ formally as an element of $\alag$, rather than $T\alag$. To motivate this interpretation, let us note that if $\xi_t = \int A(q_t)\del q_t$ then $\del \xi_t = A(q_t)\del q_t$ is a formal element in $\lag$. When $q_t = q(t)$ is a deterministic curve, we have $\dot{\xi}(t) = A(q(t), \dot{q}(t))$. The curve $A(q(t), \dot{q}(t))$ is referred to as $\xi(t)$ in Yoshimura and Marsden \cite{yoma}, and they set $\bar{\xi}(t) = [q(t), \xi(t)]_G$. In our case, we replace $\xi(t)$ by $\dot{\xi}(t)$ so that $\bar{\xi}(t) = [q(t), \dot{\xi}(t)]_G$. But when $q_t$ is a semimartingale, we replace $\dot{\xi}(t)=A(q(t), \dot{q}(t)) $ by $\del \xi_t = A(q_t)\del q_t$. Then, the right side is the formal Stratonovich differential $[q_t, A(q_t)\del q_t]_G$, and hence we must replace $\bar{\xi}(t)$ by a Stratonovich differential $\del \bar{\xi}_t$. The definition of $\int \dotp{\bar{\mu}}{\del \bar{\xi}_t}$ interprets the pairing of $\del \bar{\xi}_t$ with a semimartingale in $\dalag$ as a well-defined Stratonovich integral in $Q$.
\end{remark}
\begin{theorem}\label{thm:the_reduced_action_functional}
The reduced action functional $\S_X^{\reduced}(\cdot)$ is given by
\begin{align}\label{eq:the_reduced_action_functional}
    \S_X^{\reduced}(\Gamma^{\P(Q/G)}\oplus \bar{\zeta}\oplus \bar{\mu}) &= \int_0^T\mathcal{G}_{\P(Q/G)}\del \Gamma^{\P(Q/G)} + \int_0^T\dotp{\bar{\mu}}{\del\bar{\xi}}\nonumber\\& - \sum_{j = 0}^k\mathcal{E}_j(\Gamma^{\P(Q/G)}, \bar{\zeta},\bar{\mu})\del X^j.
\end{align}
\end{theorem}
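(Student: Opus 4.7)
The plan is to start from the coordinate-invariant form of the action recorded in Equation~\eqref{eq:action_functional_intrinsic}, namely $\Ac{X}{\Gamma} = \int_0^T \mathcal{G}_{\P Q}\del \Gamma - \sum_{j = 0}^k\int_0^T E_j(\Gamma)\del X^j$, and to rewrite each of the two kinds of term in reduced variables under the bundle isomorphism $\tilde{\Psi}_A$. Writing $(x_t, u_t, y_t, \bar{\zeta}_t, \bar{\mu}_t)$ for $\tilde{\Psi}_A([q_t, v_t, p_t]_G)$, the defining formula for $(\Psi_A^{-1})^*$ supplies the pointwise duality
\[\dotp{p_q}{v_q} = \dotp{y}{T_q\pi(v_q)} + \dotp{\bar{\mu}}{[q, A(v_q)]_G}\]
for every $v_q \in T_qQ$, where $y = (p_q)_q^{h^*}$ and $\bar{\mu} = [q, J(p_q)]_G$. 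This single identity drives the rest of the argument.

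First I would handle the canonical term. In local coordinates $\int_0^T \mathcal{G}_{\P Q}\del \Gamma = \int_0^T \dotp{p_t}{\del q_t}$, and applying the pointwise identity above with $v_q = \del q_t$, together with naturality of Stratonovich differentials under smooth maps, produces
\[\int_0^T \dotp{p_t}{\del q_t} = \int_0^T \dotp{y_t}{T\pi(\del q_t)} + \int_0^T \dotp{\bar{\mu}_t}{[q_t, A(\del q_t)]_G}.\]
Since $x_t = \pi(q_t)$ gives $\del x_t = T\pi(\del q_t)$, the first integral equals $\int_0^T \mathcal{G}_{\P(Q/G)}\del \Gamma^{\P(Q/G)}$, while by Equation~\eqref{eq:definition_of_Psi_bar_mu} the second equals $\int_0^T \Psi_{\bar{\mu}_t}\del q_t$, which is $\int_0^T \dotp{\bar{\mu}_t}{\del \bar{\xi}_t}$ by the definition given immediately above the theorem.

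Next I would verify $E_j = \mathcal{E}_j$ after reduction. For $j = 0$, applying the pointwise decomposition to $\dotp{p}{v}$ and using $\L = \ell\circ\Psi_A\circ\mathrm{Pr}_{TQ/G}$ gives $E_0 = \dotp{y}{u} + \dotp{\bar{\mu}}{\bar{\zeta}} - \ell(x,u,\bar{\zeta}) = \mathcal{E}_0$. For $j\geq 1$, applying the decomposition to $\dotp{p}{V_j(q)}$ together with $T\pi\circ V_j = V_j^{\reduced}\circ\pi$, $\bar{\beta}_j(x) = [q, A(V_j(q))]_G$, and $L_j = l_j\circ\pi$ yields $E_j = \mathcal{E}_j$. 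Substituting both identifications into the intrinsic action and using $\S_X^{\reduced}([\Gamma]) = \Ac{X}{\Gamma}$ then produces the formula claimed in the theorem.

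The main subtlety I anticipate lies in the vertical term: the integrand $\int_0^T\dotp{\bar{\mu}_t}{[q_t, A(\del q_t)]_G}$ is written using a representative $q_t$ in the $G$-orbit, but the proposition immediately preceding the theorem guarantees that this Stratonovich integral is independent of the choice of representative, so it descends unambiguously to the reduced space and, by definition, equals $\int_0^T \dotp{\bar{\mu}_t}{\del\bar{\xi}_t}$. Once this invariance is accepted, the remainder of the argument is routine pointwise algebra carried under the Stratonovich integral sign.
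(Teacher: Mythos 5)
Your proposal is correct and follows essentially the same route as the paper: both rest on the pointwise horizontal/vertical splitting $\dotp{p}{v} = \dotp{y}{T\pi(v)} + \dotp{\bar{\mu}}{[q,A(v)]_G}$ (which the paper phrases via $v = \Hor v + \Ver v$ and you phrase via the defining formula for $(\Psi_A^{-1})^*$, the same identity), applied both to the generalized energies $E_j$ and, with $v = \del q_t$, to the canonical term $\int \mathcal{G}_{\P Q}\del\Gamma$. Your explicit remark that the preceding proposition guarantees representative-independence of $\int\Psi_{\bar{\mu}_t}\del q_t$ is a point the paper uses implicitly, but it does not change the argument.
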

\begin{proof}
    Let $(q, v, p)\in \P Q$, $x = \pi(q)$, $u = T_q\pi(v)$, $y = (p)_q^{h^*}$, $\bar{\zeta} = [q, A(q,v)]$ and $\bar{\mu} = [q, J(q, p)]_G$. Then, we have, \[\dotp{p}{v} = \dotp{p}{\Hor(v)} + \dotp{p}{\Ver(v)} = \dotp{y}{u} + \dotp{\bar{\mu}}{\bar{\zeta}},\]and similarly,
    \[\dotp{p}{V_i(q)} = \dotp{p}{\Hor V_i(q)} + \dotp{p}{\Ver V_i(q)} = \dotp{y}{V_i^{\reduced}(x)} + \dotp{\bar{\mu}}{\bar{\beta}_i(x)}.\]
    As a result,
    \begin{align*}E_j(q,v,p) &= \begin{cases}
    \dotp{p_q}{v_q} - L(q,v_q), \:\text{if }j = 0,\\
    \dotp{p_q}{V_j(q)} - L_j(q),\:\text{if }j = 1,\cdots, k.
\end{cases}\\
&= \begin{cases}
    \dotp{y}{u} + \dotp{\bar{\mu}}{\bar{\zeta}} - \ell(x, u, \bar{\zeta}),\:\text{if }j = 0,\\
    \dotp{y}{V_j^{\reduced}(x)} + \dotp{\bar{\mu}}{\bar{\beta}_j(x)} - l_j(x),\:\text{if }j = 1,\cdots,k.
\end{cases}\\
&= \mathcal{E}_j(x, u , y, \bar{\zeta}, \bar{\mu}).\end{align*}
Next suppose $(\dot{q}, \dot{v}, \dot{p})\in T_{(q,v,p)}\P Q$. Let $\pi_{\P(Q/G)}: \P (Q/G)\oplus \alag\oplus\dalag \rightarrow \P(Q/G)$ denote the projection onto the $\P(Q/G)$ factor. Then, we obtain a map $\Pi_{\P (Q/G)}: \P Q\rightarrow \P (Q/G)$ by $\Pi_{\P(Q/G)} = \pi_{\P(Q/G)}\circ\tilde{\Psi}_A\circ\Pr_{\P Q/G}$. Let $(\dot{x}, \dot{u}, \dot{y}) = T_{(q, v, p)}\Pi_{\P (Q/G)}(\dot{q}, \dot{v}, \dot{p})\in T_{(x, u, y)}\P(Q/G)$. Then
\begin{align*}
    \mathcal{G}_{\P Q}(q, v, p)(\dot{q}, \dot{v}, \dot{p}) &= \dotp{p}{\dot{q}}\\
    &= \dotp{p}{\Hor \dot{q}} + \dotp{p}{\Ver\dot{q}}\\
    &= \dotp{y}{\dot{x}} + \dotp{\bar{\mu}}{[q,A(q, \dot{q})]_G}\\
    &= \mathcal{G}_{\P(Q/G)}(x, u, y)(\dot{x}, \dot{u}, \dot{y}) + \Psi_{\bar{\mu}}(\dot{q}).
\end{align*}
Consequently, we obtain
\begin{align*}
    \S_X^{\reduced}(\Gamma^{\P(Q/G)}\oplus \bar{\zeta}\oplus \bar{\mu}) &= \int_0^T\mathcal{G}_{\P(Q/G)}\del \Gamma^{\P(Q/G)} + \int_0^T\dotp{\bar{\mu}}{\del\bar{\xi}}\nonumber\\& - \sum_{j = 0}^k\mathcal{E}_j(\Gamma^{\P(Q/G)}\oplus\bar{\zeta}\oplus\bar{\mu})\del X^j.
\end{align*}
    
\end{proof}
If $\Gamma_t = (q_t, v_t, p_t)$ in local coordinates and $\tilde{\Psi}_A \circ \Pr_{\P Q} (\Gamma_t) = (x_t, u_t, y_t, \bar{\zeta}_t, \bar{\mu}_t)$, then the local coordinate expression of $\S^{\reduced}_X(\cdot)$ is given by 
\begin{align}\label{eq:reduced_action_local_coordinates}
\S_X^{\reduced}(x_t, u_t, y_t, \bar{\zeta}_t, \bar{\mu}_t) = &\int_0^T \ell(x_t, u_t, \bar{\zeta}_t)\del X^0_t + \sum_{i = 1}^k l_i(x_t) \del X^i_t + \dotp{y_t}{ \del x_t - u_t \del X^0_t - \sum_{i = 1}^kV_i^{\reduced}(x_t)\del X^i_t} \nonumber\\&+ \dotp{ \bar{\mu}_t}{ \del \bar{\xi}_t - \bar{\zeta}_t\del X^0_t - \sum_{i=1}^k \bar{\beta}_i(x_t) \del X^i_t}.
\end{align}

\subsection{Vertical and Horizontal Variations}
Let us consider a semimartingale $q_t$ in $Q$. Define 
\[\xi_t = \int A(q_t)\del q_t \in \mathscr{S}(\lag).\]

\begin{theorem}\label{thm:horizontalandverticalvariations}
    Suppose $\epsilon\mapsto q_{t,\epsilon}$ is a deformation of $q_t$.
    \begin{enumerate}
        \item If $\delta q_t$ is vertical then
                \begin{equation}\label{eq:stochastic_vertical_variations}
                    \D \xi_t = \int \ad_{\eta_t}\del\xi_t + \eta_t - \eta_0
                \end{equation}
                where $\eta_t = A(q_t, \delta q_t)$.
        \item If $\delta q$ is horizontal then
        \begin{equation}\label{eq:stochastic_horizontal_variations}
            \D \xi_t = \int B(q_t)(\delta q_t, \del q_t):= \int i_{\delta q_t}B\del q_t.
        \end{equation}
        \end{enumerate}
\end{theorem}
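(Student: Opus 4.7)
The plan is to apply Lemma \ref{lem:variationsofsemimartginales}(2) to the $\lag$-valued stochastic integral $\xi_t = \int A(q_t)\del q_t$, and then to expand $\d A$ using the structure equation \eqref{eq:curvatureformulas} so as to isolate the curvature and bracket contributions. Although Lemma \ref{lem:variationsofsemimartginales} is stated for real-valued 1-forms, it extends immediately to $\lag$-valued ones by working componentwise: after fixing a basis $\{e_a\}$ of $\lag$, write $A = A^a e_a$ and $\xi_t = \xi_t^a e_a$, where each $A^a$ is a real-valued 1-form on $Q$ and $\xi_t^a = \int A^a(q_t)\del q_t$. The deformation $\epsilon \mapsto q_{t,\epsilon}$ of $q_t$ induces a deformation of each $\xi_t^a$ to which the scalar lemma applies directly.

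Summing the resulting scalar identities against $e_a$ yields
\begin{equation*}
    \D \xi_t \;=\; \int i_{\delta q_t}\d A\,\del q_t \;+\; A(q_t)(\delta q_t) \;-\; A(q_0)(\delta q_0) \;=\; \int i_{\delta q_t}\d A\,\del q_t \;+\; \eta_t - \eta_0.
\end{equation*}
By \eqref{eq:curvatureformulas}, $\d A(\delta q_t, \del q_t) = B(\delta q_t, \del q_t) + [A(\delta q_t), A(\del q_t)]$, so the remaining integral formally splits into a curvature term and a bracket term. In the vertical case, $B$ vanishes whenever one of its slots is vertical (by the definition $B(X,Y) = -A([\Hor X, \Hor Y])$), while $A(\delta q_t) = \eta_t$ and $A(\del q_t) = \del \xi_t$ in the formal Stratonovich sense; the bracket contribution becomes $\int [\eta_t, \del\xi_t] = \int \ad_{\eta_t}\del \xi_t$, yielding \eqref{eq:stochastic_vertical_variations}. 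In the horizontal case, $A(\delta q_t) = 0$, so $\eta_t \equiv \eta_0 \equiv 0$ and the bracket term drops out, leaving $\D \xi_t = \int B(\delta q_t, \del q_t) = \int i_{\delta q_t}B\del q_t$, which is \eqref{eq:stochastic_horizontal_variations}.

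The principal technical obstacle is less in the algebraic manipulation than in rigorously interpreting the Lie-algebra-valued Stratonovich objects. One has to confirm that applying Lemma \ref{lem:variationsofsemimartginales} componentwise is basis-independent (which follows from the linearity of the scalar formula in the coefficients of $\alpha$), and that the symbol $\int \ad_{\eta_t}\del\xi_t$ means precisely $\int \ad_{\eta_t} A(q_t)\del q_t$, i.e.\ the $\lag$-valued Stratonovich integral of the predictable integrand $v_q \mapsto \ad_{\eta_t}(A(q_t)(v_q))$ against the $Q$-valued semimartingale $q_t$. Once this identification is fixed, the proof reduces to the elementary cancellation above, exactly mirroring the deterministic computation in Theorem \ref{thm:deterministic_variations} but with $\dot q$ replaced by the Stratonovich differential $\del q_t$ and $\dot\eta$ replaced by the boundary contribution $\eta_t - \eta_0$ characteristic of Lemma \ref{lem:variationsofsemimartginales}(2).
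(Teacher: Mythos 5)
Your proposal is correct and follows essentially the same route as the paper: apply Lemma \ref{lem:variationsofsemimartginales}(2) to $\xi_t = \int A(q_t)\del q_t$, expand $i_{\delta q_t}\d A$ via the structure equation \eqref{eq:curvatureformulas}, and observe that the curvature term dies when $\delta q_t$ is vertical while the bracket term and boundary terms die when $\delta q_t$ is horizontal. Your explicit remarks on the componentwise (basis-independent) extension of the lemma to $\lag$-valued 1-forms and on the meaning of $\int\ad_{\eta_t}\del\xi_t$ are points the paper leaves implicit, but they do not change the argument.
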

\begin{proof}
    \begin{enumerate}
        \item Since $\Hor \delta q_t = 0$ it follows from \eqref{eq:curvatureformulas} that $B(q_t)(\delta q_t, \cdot) = 0$. Hence, by Lemma \ref{lem:variationsofsemimartginales}
\begin{align*}
   \D \xi_t &= \int i_{\delta q_t}dA \del q_t + \dotp{A(q_t)}{\delta q_t} - \dotp{A(q_0)}{\delta q_0}\\
   &= \int \ad_{\eta_t}A(q_t)\del q_t+ \eta_t - \eta_0\\
   &= \int \ad_{\eta_t}\del \int A(q_t)\del q_t+ \eta_t - \eta_0\\
   &= \int \ad_{\eta_t}\del \xi_t + \eta_t - \eta_0.
\end{align*}
\item Since $A(q_t,\delta q_t) = 0$, we have
\begin{align*}
    \D \xi_t &= \int i_{\delta q_t}dA \del q_t + \dotp{A(q_t)}{\delta q_t} - \dotp{A(q_0)}{\delta q_0}\\
    &= \int i_{\delta q_t} B \del q_t\\
    &= \int B(q_t)(\delta q_t, \del q_t).
\end{align*}
    \end{enumerate}
\end{proof}

\subsection{Covariant Variations on the Adjoint Bundle}\label{sec:covariant_variations_adjoint_bundle}

In this section, we discuss variation of the term $\int \dotp{\bar{\mu}_t}{\del\bar{\xi}_t}$ in the reduced action functional.
\begin{theorem}
Let $q_t$ be a semimartingale in $Q$, $\bar{\mu}_t$ be a semimartingale in $\dalag$ that projects to $x_t := \pi(q_t)$ in $Q/G$ and $\bar{\xi}_t = [q_t, \xi_t]_G$ with $\xi_t = \int A(q_t) \del q_t$. Suppose $\epsilon\mapsto q_{\epsilon,t}$ and $\epsilon\mapsto \bar{\mu}_{\epsilon,t}$ are deformations of $q_t$ and $\bar{\mu}_t$ respectively. Set $x_{\epsilon,t} = \pi(q_{\epsilon,t})$, $\eta_t = A(q_t, \delta q_t)$ and $\bar{\eta}_t = [q_t, \eta_t]_G$. Then
\begin{align}\label{eq:covariantvariationontheadjointbundle}
    \D\int \dotp{\bar{\mu}_t}{\del \bar{\xi}_t} &=
 \int \dotp{K_{\nabla^{\dalag}} \delta \bar{\mu}_t - \ad^*_{\bar{\eta}_t}\bar{\mu}_t}{\del \bar{\xi}}+\int \dotp{\bar{\mu}_t}{i_{\delta x_t}\tilde{B}(x_t)}\del x_t- \int \dotp{\Del \bar{\mu}_t} {\bar{\eta}_t}\nonumber\\
    &+ \dotp{\bar{\mu}_t}{\bar{\eta}_t} - \dotp{\bar{\mu}_0}{\bar{\eta}_0}.
\end{align}
\end{theorem}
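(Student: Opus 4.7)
The plan is to begin with the defining identity
\[
\int \dotp{\bar{\mu}_t}{\del\bar{\xi}_t} = \int \Psi_{\bar{\mu}_t}\del q_t, \qquad \Psi_{\bar{\mu}_t}(v_q) = \dotp{\bar{\mu}_t}{[q_t, A(q_t, v_q)]_G},
\]
and compute $\D\int\Psi_{\bar{\mu}_t}\del q_t$ by decomposing the joint deformation $\epsilon\mapsto(q_{\epsilon,t},\bar{\mu}_{\epsilon,t})$ into three sources: (a) the $\bar{\mu}$-variation with $q_t$ held fixed, (b) the horizontal part of $\delta q_t$, and (c) the vertical part of $\delta q_t$. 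Each source will be handled by the appropriate stochastic variational tool from the preceding sections, and the required formula will emerge after one integration by parts on the dual bundle.

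For (a), since $\Psi_{\bar{\mu}_t}$ is fibrewise linear in $\bar{\mu}\in\dalag$, applying Theorem~\ref{thm:variation_vector_bundle} to $\bar{\mu}\mapsto\dotp{\bar{\mu}}{[q_t,A(q_t,v_q)]_G}$ produces only the vertical contribution $\dotp{K_{\nabla^{\dalag}}\delta\bar{\mu}_t}{[q_t,A(q_t,v_q)]_G}$; integrating this along $\del q_t$ and unpacking the definition of $\bar{\xi}_t$ yields the term $\int\dotp{K_{\nabla^{\dalag}}\delta\bar{\mu}_t}{\del\bar{\xi}_t}$. For (b), I would momentarily freeze $\bar\mu_t$ and treat $\Psi_{\bar{\mu}_t}$ as a 1-form on $Q$, applying Lemma~\ref{lem:variationsofsemimartginales}. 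The exterior-derivative contribution appearing there, restricted to the horizontal part of $\delta q_t$, reduces via equation~(\ref{eq:curvatureformulas}) and Theorem~\ref{thm:horizontalandverticalvariations}(2) to $\int\dotp{\bar{\mu}_t}{[q_t,B(q_t)(\delta q_t,\del q_t)]_G}$, which the defining relation~(\ref{eq:reduced_curvature}) of the reduced curvature rewrites as $\int\dotp{\bar{\mu}_t}{i_{\delta x_t}\tilde{B}(x_t)}\del x_t$.

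For (c), set $\eta_t=A(q_t,\delta q_t)$ and $\bar{\eta}_t=[q_t,\eta_t]_G$, so that only the vertical part of $\delta q_t$ contributes. Theorem~\ref{thm:horizontalandverticalvariations}(1) delivers $\D\xi_t=\int\ad_{\eta_t}\del\xi_t+\eta_t-\eta_0$. Twisting this $\lag$-valued identity equivariantly through $[q_t,\cdot]_G$ to land in $\alag$, pairing against $\bar{\mu}_t\in\dalag$, and invoking $\dotp{\bar{\mu}}{[\bar\xi,\bar{\eta}]}=-\dotp{\ad^*_{\bar{\eta}}\bar{\mu}}{\bar{\xi}}$ produces both the adjoint term $-\int\dotp{\ad^*_{\bar{\eta}_t}\bar{\mu}_t}{\del\bar{\xi}_t}$ and a raw boundary contribution $\dotp{\bar{\mu}_t}{\bar{\eta}_t}-\dotp{\bar{\mu}_0}{\bar{\eta}_0}$. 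To reconcile this boundary pairing with the fact that $\bar{\mu}_t$ itself varies over the curve $x_t$, I would then apply the integration-by-parts identity~(\ref{eq:stochastic_covariant_derivative_dual_bundle}) for the covariant Stratonovich derivative on $\dalag$, which produces the remaining $-\int\dotp{\Del\bar{\mu}_t}{\bar{\eta}_t}$ and leaves the endpoint terms in the stated form.

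The main obstacle will be the bookkeeping that links the $\lag$-valued identity from Theorem~\ref{thm:horizontalandverticalvariations}(1) to a genuine statement on the adjoint bundle $\alag$, because the twisting $[q_t,\cdot]_G$ interacts nontrivially with the Stratonovich differential of $q_t$: this interaction is precisely what the connector $K_{\nabla^{\alag}}$, and dually $K_{\nabla^{\dalag}}$, is designed to absorb through the formula~(\ref{eq:covariantderivativeassociatedbundle}). A related subtlety is that the three sources of variation above are not quite independent, since $\bar{\mu}_t$ is constrained to lie over $x_t=\pi(q_t)$; separating them cleanly amounts to invoking the horizontal lift to decouple the genuine fiber variation $K_{\nabla^{\dalag}}\delta\bar{\mu}_t$ from the induced base variation $\delta x_t$ before reassembling the three contributions into the single identity claimed.
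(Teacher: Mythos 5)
Your outline reproduces, almost step for step, the \emph{formal} derivation that the paper itself presents immediately after its proof of this theorem (the computation beginning ``Let us now show that the above formula can also be obtained by a formal computation\dots''), rather than the proof itself, and the places where you flag ``bookkeeping'' obstacles are exactly the places where that formal computation fails to be rigorous on its own. Concretely: in step (b) you propose to freeze $\bar{\mu}_t$ and apply Lemma \ref{lem:variationsofsemimartginales} to $\Psi_{\bar{\mu}_t}$ ``as a 1-form on $Q$'', but $\Psi_{\bar{\mu}_t}$ is a $T^*Q$-valued semimartingale along $q_t$, not a fixed 1-form on $Q$, so the lemma as stated does not apply to it; and in step (c) you pair $\bar{\mu}_t$ against the identity $\D\xi_t=\int\ad_{\eta_t}\del\xi_t+\eta_t-\eta_0$ of Theorem \ref{thm:horizontalandverticalvariations}, which is a statement about the variation of the running integral $\xi_t$, whereas what is needed is the variation of $\int\dotp{\bar{\mu}_t}{\del\bar{\xi}_t}$ with $\bar{\mu}_t$ inside the integral. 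Passing from one to the other requires interchanging $\frac{D}{D\epsilon}$ with the Stratonovich differential in $t$, i.e.\ the step $\Deldeleps(\del\bar{\xi}_{\epsilon,t})=\del(\delta^A\bar{\xi}_t)$, which the paper explicitly labels as formal and justifies only by reference to its rigorous argument.

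The paper's actual proof avoids treating your three sources of variation separately: it introduces the genuine, deterministic 1-form $\bar{\Psi}$ on the fibered product $\dalag\times_{Q/G}Q$ with $\bar{\Psi}_{(\bar{\mu},q)}(v_{\bar{\mu}},v_q)=\dotp{\bar{\mu}}{[q,A(q,v_q)]_G}$, writes $\int\dotp{\bar{\mu}_t}{\del\bar{\xi}_t}=\int\bar{\Psi}\del(\bar{\mu}_t,q_t)$, and applies Lemma \ref{lem:variationsofsemimartginales} exactly once to the joint semimartingale $(\bar{\mu}_t,q_t)$. All of the decomposition you describe then takes place at the purely tensorial level of computing $i_{(w_{\bar{\mu}},w_q)}\d\bar{\Psi}(v_{\bar{\mu}},v_q)$ along deterministic curves via Corollary \ref{cor:importantcorollary}, Equation \eqref{eq:total_covariant_variation} and Equation \eqref{eq:covariantderivativedualbundle}; in particular the cross term $-\int\dotp{\Del\bar{\mu}_t}{\bar{\eta}_t}$ falls out of $\d\bar{\Psi}$ directly rather than being recovered by an after-the-fact integration by parts. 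To make your route rigorous you would need either this fibered-product device or a proved stochastic Leibniz rule letting you differentiate under the Stratonovich integral term by term; as written, the proposal identifies the right ingredients and the right final bookkeeping but leaves the analytically essential step unproved.
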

\begin{proof}
To make our calculations easier, we begin by considering the 1-form $\bar{\Psi}$ on $\dalag\times_{Q/G} Q$ defined by $\bar{\Psi}_{(\bar{\mu}, q)}{(v_{\bar{\mu}}, v_q)} = \Psi_{\bar{\mu}}(v_q) = \dotp{\bar{\mu}}{[q,A(q, v_q)]_G}$, where $(\bar{\mu}, q)\in  \dalag\times_{Q/G} Q$ and $(v_{\bar{\mu}}, v_q) \in T_{(\bar{\mu}, q)}(\dalag\times_{Q/G} Q)$. Then, we have,
\begin{equation}
    \int \dotp{\bar{\mu_t}}{\del \bar{\xi}_t} = \int \bar{\Psi} \del (\bar{\mu}_t, q_t).
\end{equation}
This implies, by Lemma \ref{lem:variationsofsemimartginales}, that
\[\D \int \dotp{\bar{\mu_t}}{\del \bar{\xi}_t} =  \int i_{(\delta \bar{\mu}_t, \delta q_t)}\d\bar{\Psi} \del (\bar{\mu}_t, q_t) + \dotp{\bar{\Psi}(\bar{\mu}_t, q_t)}{(\delta \bar{\mu}_t, \delta q_t)} - \dotp{\bar{\Psi}(\bar{\mu}_0, q_0)}{(\delta \bar{\mu}_0, \delta q_0)}.\]
We now compute $i_{(\delta \bar{\mu}_t, \delta q_t)}\d\bar{\Psi}$ by Corollary \ref{cor:importantcorollary}. Consider tangent vectors $(v_{\bar{\mu}}, v_q), (w_{\bar{\mu}}, w_q) \in T_{(\bar{\mu}, q)}(\dalag\times_{Q/G} Q)$.  Let $q_{\epsilon}(t)$ and $\bar{\mu}_{\epsilon}(t)$ be curves in $Q$ and $\dalag$ respectively that project on the same curve $x_{\epsilon}(t)$ and set $q_0(t) = q(t)$, $\bar{\mu}_0(t) = \bar{\mu}(t)$ and $x_0(t) = x(t)$. We assume that $q(0) = q$, $\bar{\mu}(0) = \bar{\mu}$, $\dot{q}(0) = v_q$, $\delta q(0) = w_q$, $\dot{\bar{\mu}}(0) = v_{\bar{\mu}}$ and $\delta \bar{\mu}(0) = w_{\bar{\mu}}$. Further let $x(0) = x$, $v_x = \dot{x}(0)$ and $w_x = \delta x(0)$. We have
\begin{align*}
    i_{(\delta \bar{\mu}(t), \delta q(t))}\d\bar{\Psi}(\dot{\bar{\mu}}, \dot{q}) &= \delta \dotp{\bar{\Psi}(\bar{\mu}(t), q(t))}{(\dot{\bar{\mu}}(t), \dot{q}(t))} - \frac{d}{dt}\dotp{\bar{\Psi}(\bar{\mu}(t), q(t))}{(\delta{\bar{\mu}}(t), \delta{q}(t))}\\
    &= \delta \dotp{\bar{\mu}(t)}{[q(t), A(q(t), \dot{q}(t))]_G} - \frac{d}{dt} \dotp{\bar{\mu}(t)}{[q(t), A(q(t), \delta q(t))]_G}.
\end{align*}
We now use Equation \eqref{eq:covariantderivativedualbundle} to obtain
\begin{align*}
    i_{(\delta \bar{\mu}(t), \delta q(t))}\d\bar{\Psi}(\dot{\bar{\mu}}, \dot{q}) &= \delta \dotp{\bar{\mu}(t)}{[q(t), A(q(t), \dot{q}(t))]_G} - \frac{d}{dt} \dotp{\bar{\mu}(t)}{[q(t), A(q(t), \delta q(t))]_G}\\
    &= \dotp{\Deldeleps \bar{\mu}_{\epsilon}(t)}{[q(t), A(q(t), \dot{q}(t))]_G} +\dotp{\bar{\mu}}{\Deldeleps[q(t), A(q(t), \dot{q}(t))]_G}\\
    &- \dotp{\frac{D}{Dt} \bar{\mu}(t)}{[q(t), A(q(t), \delta{q}(t))]_G} +\dotp{\bar{\mu}}{\frac{D}{Dt}[q(t), A(q(t), \delta{q}(t))]_G}\\
    &= \dotp{\Deldeleps \bar{\mu}_{\epsilon}(t)}{[q(t), A(q(t), \dot{q}(t))]_G} - \dotp{\bar{\mu}(t)}{[q(t), \ad_{A(q(t), \delta q(t))}A(q(t), \dot{q}(t)) ]_G} \\
    & + \dotp{\bar{\mu}}{\frac{D}{Dt}[q(t), A(q(t), \delta{q}(t))]_G} + \dotp{\bar{\mu}(t)}{[q(t), B(q(t))(\delta q(t),\dot{q}(t))]_G}\\
    &- \dotp{\frac{D}{Dt} \bar{\mu}(t)}{[q(t), A(q(t), \delta{q}(t))]_G}-\dotp{\bar{\mu}}{\frac{D}{Dt}[q(t), A(q(t), \delta{q}(t))]_G}.
\end{align*}
In the last step, we have used Equation \eqref{eq:total_covariant_variation} to write 
\begin{align*}\Deldeleps [q_{\epsilon}(t), A(q_{\epsilon}(t), \dot{q}_{\epsilon}(t))]_G &= \frac{D}{Dt}[q(t), A(q(t), \delta q(t)]_G - [q(t), \ad_{A(q(t), \delta q(t))}A(q(t), \dot{q}(t))]_G \\&+ [q(t), B(q(t))(\delta q(t), \dot{q}(t))]_G.\end{align*}
We observe that the term $\dotp{\bar{\mu}}{\frac{D}{Dt}[q(t), A(q(t), \delta{q}(t))]_G}$ cancels out. The covariant derivatives can be expressed by the connector via Equation \eqref{eq:connectionmap} to yield
\begin{align*}
    i_{(\delta \bar{\mu}(t), \delta q(t))}\d\bar{\Psi}(\dot{\bar{\mu}}, \dot{q})&= \dotp{\Deldeleps \bar{\mu}_{\epsilon}(t)}{[q(t), A(q(t), \dot{q}(t))]_G} - \dotp{\bar{\mu}(t)}{[q(t), \ad_{A(q(t), \delta q(t))}A(q(t), \dot{q}(t)) ]_G}\\
    &+ \dotp{\bar{\mu}(t)}{[q(t), B(q(t))(\delta q(t),\dot{q}(t))]_G} - \dotp{\frac{D}{Dt} \bar{\mu}(t)}{[q(t), A(q(t), \delta{q}(t))]_G}\\
    &= \dotp{K_{\nabla^{\dalag}} \bar{\mu}_{\epsilon}(t)}{[q(t), A(q(t), \dot{q}(t))]_G} - \dotp{\ad^*_{[q(t), A(q(t), \delta q(t))]_G}\bar{\mu}(t)}{A(q(t), \dot{q}(t)) ]_G}\\
    &+ \dotp{\dotp{\bar{\mu}(t)}{i_{\delta x(t)}\tilde{B}(x(t))}}{\dot{x}(t)} - \dotp{K_{\nabla^{\dalag}} \bar{\mu}(t)}{[q(t), A(q(t), \delta{q}(t))]_G}
\end{align*}
Evaluating at $t = 0$ we obtain
\begin{align*}
    i_{(w_{\bar{\mu}}, w_q)}\d\bar{\Psi}(v_{\bar{\mu}}, v_q)
    &= \dotp{K_{\nabla^{\dalag}} w_{\bar{\mu}}}{[q, A(q, v_q)]_G} - \dotp{\ad^*_{[q, A(q, w_q)]_G}\bar{\mu}}{A(q, v_q) ]_G}\\
    &+ \dotp{\dotp{\bar{\mu}}{i_{w_x}\tilde{B}(x)}}{v_x} - \dotp{K_{\nabla^{\dalag}} v_{\bar{\mu}}}{[q, A(q, w_q)]_G}.
\end{align*}
Since all of the tangent vectors chosen are arbitrary, we obtain
\begin{align*}
    \int i_{(\delta \bar{\mu}_t, \delta q_t)}\d \bar{\Psi} \del (\bar{\mu}_t, q_t) &= \int \Psi_{K_{\nabla^{\dalag}} \delta \bar{\mu}_t - \ad^*_{[q, A(q, \delta q_t)]_G}\bar{\mu}_t}\del q_t + \int \dotp{\bar{\mu}_t}{i_{\delta x_t}\tilde{B}(x_t)}\del x_t - \int \dotp{\Del \bar{\mu}_t} {[q_t, A(q_t, \delta q_t)]_G}\\
    &= \int \dotp{K_{\nabla^{\dalag}} \delta \bar{\mu}_t - \ad^*_{\bar{\eta}_t}\bar{\mu}_t}{\del \bar{\xi}}+\int \dotp{\bar{\mu}_t}{i_{\delta x_t}\tilde{B}(x_t)}\del x_t- \int \dotp{\Del \bar{\mu}_t} {\bar{\eta}_t},
\end{align*}
which yields
\begin{align*}
    \D\int \dotp{\bar{\mu}_t}{\del \bar{\xi}_t} &= \int \dotp{K_{\nabla^{\dalag}} \delta \bar{\mu}_t - \ad^*_{\bar{\eta}_t}\bar{\mu}_t}{\del \bar{\xi}}+\int \dotp{\bar{\mu}_t}{i_{\delta x_t}\tilde{B}(x_t)}\del x_t- \int \dotp{\Del \bar{\mu}_t} {\bar{\eta}_t}\\
    &+ \dotp{\bar{\Psi}(\bar{\mu}_t, q_t)}{(\delta \bar{\mu}_t, \delta q_t)} - \dotp{\bar{\Psi}(\bar{\mu}_0, q_0)}{(\delta \bar{\mu}_0, \delta q_0)}\\
    &= \int \dotp{K_{\nabla^{\dalag}} \delta \bar{\mu}_t - \ad^*_{\bar{\eta}_t}\bar{\mu}_t}{\del \bar{\xi}}+\int \dotp{\bar{\mu}_t}{i_{\delta x_t}\tilde{B}(x_t)}\del x_t- \int \dotp{\Del \bar{\mu}_t} {\bar{\eta}_t}\\
    &+ \dotp{\bar{\mu}_t}{\bar{\eta}_t} - \dotp{\bar{\mu}_0}{\bar{\eta}_0}.
\end{align*}
This completes the proof.
\end{proof}
Let us now show that the above formula can also be obtained by a formal computation by considering covariant variations of the semimartingale $\bar{\xi}_t = [q_t, \xi_t]_G$. Since $\epsilon\mapsto \bar{\xi}_{\epsilon,t}$ is pathwise smooth, the covariant derivative $\frac{D}{D\epsilon}\Big|_{\epsilon = 0}\bar{\xi}_{\epsilon,t}$ with respect to $\epsilon$ in the ucp topology agrees with the pathwise computed covariant $\frac{D}{D\epsilon}\Big|_{\epsilon = 0} \bar{\xi}_{t}(\cdot)$. This is a consequence of Arnaudon and Thalmaier \cite[Proposition 2.8]{arnaudon1998}. Using the covariant derivative on the adjoint bundle and Proposition \ref{thm:horizontalandverticalvariations}, we obtain
\begin{align*}
    \delta^A\bar{\xi}_t&\coloneqq \frac{D}{D\epsilon}\Big|_{\epsilon = 0}\bar{\xi}_{\epsilon,t}\\
    &= \left[q_t, -\ad_{\eta_t}\int A(q_t)\del q_t + \D \xi_t\right]_G\\
    &= \left[q_t, -\ad_{\eta_t}\int A(q_t)\del q_t\right]_G + \left[q_t, \int \del \eta_t + \int\ad_{\eta_t}\del \xi_t\right]_G\\
    &+ \left[q_t, \int i_{\delta q_t} B\del q_t\right]_G.
\end{align*}
We now formally set
\begin{align*}\int \dotp{\bar{\mu}_t}{\del (\delta^A\bar{\xi}_t)} &= \int \dotp{\bar{\mu}_t}{\left[q_t, -\ad_{\eta_t} A(q_t)\del q_t\right]_G} + \int \dotp{\bar{\mu}_t}{\left[q_t,  \del \eta_t + \ad_{\eta_t}\del \xi_t\right]_G}\\
&+\int \dotp{\bar{\mu}_t}{ \left[q_t,  i_{\delta q_t} B\del q_t\right]_G}.\end{align*}
To obtain each of the integrals on the right as Stratonovich integrals, we formally treat the Stratonovich differentials as tangent vectors in each of the pairings involved. For the first term, we have 
\begin{align*}
    \dotp{\bar{\mu}_t}{\left[q_t, -\ad_{\eta_t} A(q_t)\del q_t\right]_G}  &= -\dotp{\bar{\mu}_t}{\ad_{\bar{\eta}_t}[q_t, A(q_t)\del q_t]_G}\\
    &= -\dotp{\ad^*_{\bar{\eta}_t}\bar{\mu}_t}{[q_t, A(q_t)\del q_t]_G}
\end{align*}
which shows that $\int \dotp{\bar{\mu}_t}{\left[q_t, -\ad_{\eta_t} A(q_t)\del q_t\right]_G}$ corresponds to $-\int\dotp{\ad^*_{\bar{\eta}_t}\bar{\mu}_t}{\del \bar{\xi}_t} $. For the next term, we note that the stochastic covariant derivative of $\bar{\eta}_t$ on the adjoint bundle is given by 
\[\Del \bar{\eta}_t = [q_t, [{\eta}_t,A(q_t)\del q_t] + \del \eta_t]_G.\]
Hence, $\int \dotp{\bar{\mu}_t}{\left[q_t,  \del \eta_t + \ad_{\eta_t}\del \xi_t\right]_G}$ corresponds to $\int \dotp{\bar{\mu}}{\Del \bar{\eta}}$. For the final term on the right side, we have
\begin{align*}
    \dotp{\bar{\mu}_t}{ \left[q_t,  i_{\delta q_t} B\del q_t\right]} &= \dotp{\bar{\mu}_t}{\tilde{B}(x_t)(\delta x_t, \del x_t)}\\
    &= \dotp{\dotp{\bar{\mu}_t}{i_{\delta x_t}\tilde{B}(x_t)}}{\del x_t}
\end{align*}
and hence, we can write
\[\int\dotp{\bar{\mu}_t}{ \left[q_t,  i_{\delta q_t} B\del q_t\right]} = \int \dotp{\bar{\mu}_t}{i_{\delta x_t}\tilde{B}(x_t)}\del x_t. \]
Thus, we obtain
\begin{align*}
    \int \dotp{\bar{\mu}_t}{\del (\delta^A\bar{\xi}_t)} &= -\int\dotp{\ad^*_{\bar{\eta}_t}\bar{\mu}_t}{\del \bar{\xi}_t} + \int \dotp{\bar{\mu}_t}{\Del \bar{\eta}_t} + \int \dotp{\bar{\mu}_t}{i_{\delta x_t}\tilde{B}(x_t)}\del x_t.
\end{align*}
which we will write as 
\begin{align*}
    {\del (\delta^A\bar{\xi}_t)} &= -\ad_{\bar{\eta}_t}{\del \bar{\xi}_t} + {\Del \bar{\eta}_t} + {i_{\delta x_t}\tilde{B}(x_t)}\del x_t.
    \end{align*}
to emphasize the similarity with the deterministic case (see Equation \eqref{eq:total_covariant_variation}). By Equation \eqref{eq:stochastic_covariant_derivative_dual_bundle}, we have
\[\dotp{\bar{\mu}}{\bar{\eta}} - \dotp{\bar{\mu}_0}{{\bar{\eta}_0}} = \int\dotp{D\bar{\mu}}{\bar{\eta}} +\int \dotp{\bar{\mu}}{\Del \bar{\eta}}.\]
Putting all of this together, we have
\begin{align}\label{eq:auxiliary_eq_covariant_variation}
    \int \dotp{\bar{\mu}_t}{\del (\delta^A\bar{\xi}_t)} &= -\int\dotp{\ad^*_{\bar{\eta}_t}\bar{\mu}_t}{\del \bar{\xi}_t} - \int\dotp{D\bar{\mu}}{\bar{\eta}} + \int \dotp{\bar{\mu}_t}{i_{\delta x_t}\tilde{B}(x_t)}\del x_t\nonumber\\
    &+ \dotp{\bar{\mu}_t}{\bar{\eta}_t} - \dotp{\bar{\mu}_0}{{\bar{\eta}_0}}.
\end{align}
Note that the terms on the right are well-defined Stratonovich integrals. By formally considering $\Deldeleps(\del \bar{\xi}_{\epsilon,t}) = \del(\delta^A\bar{\xi}_t)$, we obtain
\[\D \int \dotp{\bar{\mu}_t}{\del\bar{\xi}_t} = \int \dotp{K_{\nabla^{\dalag}}\delta\bar{\mu}_t}{\del\bar{\xi}_t} + \int \dotp{\bar{\mu}_t}{\del (\delta^A\bar{\xi}_t)}.\]
Replacing $\int \dotp{\bar{\mu}_t}{\del (\delta^A\bar{\xi}_t)}$ by the expression in Equation \eqref{eq:auxiliary_eq_covariant_variation} gives us the expression for $\delta\int\dotp{\bar{\mu}_t}{\del \bar{\xi}_t}$ obtained in Equation \eqref{eq:covariantvariationontheadjointbundle}. 
\begin{remark}\label{rem:stochastic_covariant_variations}
    As in the deterministic case, for the semimartingale $\bar{\xi}_t = [q_t, \int A(q_t)\del q_t]_G$, we will only consider deformations $\epsilon\mapsto \bar{\xi}_{\epsilon,t}$ such that their projection on $Q/G$ is $x_t = \pi(q_t)$. The corresponding variations can be identified with a semimartingale in $\alag$, rather than one in $T\alag$. The variation $\delta ^A\bar{\xi}$ is an instance of $\alag$-fiber variation. 

    \medskip
    
    We have already seen that if $q_t$ is a semimartingale such that $x_t = \pi(q_t)$ and $\bar{\xi}_t = [q_t, \int A(q_t)\del q_t]_G$ then, given a deformation $\epsilon\mapsto q_{\epsilon, t}$, we (formally) have
    \begin{align*}
    {\del (\delta^A\bar{\xi}_t)} &= -\ad_{\bar{\eta}_t}{\del \bar{\xi}_t} + {\Del \bar{\eta}_t} + {i_{\delta x_t}\tilde{B}(x_t)}\del x_t.
    \end{align*}
    Moreover, we also have
    \begin{align*}
        \D \int \dotp{\bar{\mu}_t}{\del\bar{\xi}_t} - \int \dotp{K_{\nabla^{\dalag}}\delta\bar{\mu}_t}{\del\bar{\xi}_t}
        &= \int \dotp{\bar{\mu}_t}{\del (\delta^A\bar{\xi}_t)}\end{align*}
    where 
    \begin{align*}
        \int \dotp{\bar{\mu}_t}{\del (\delta^A\bar{\xi}_t)}&=-\int \dotp{\ad^*_{\bar{\eta}_t}\bar{\mu}_t}{\del \bar{\xi}}+\int \dotp{\bar{\mu}_t}{i_{\delta x_t}\tilde{B}(x_t)}\del x_t- \int \dotp{\Del \bar{\mu}_t} {\bar{\eta}_t}\\
    &+ \dotp{\bar{\mu}_t}{\bar{\eta}_t} - \dotp{\bar{\mu}_0}{\bar{\eta}_0}.
    \end{align*}
    Thus, in the stochastic case, the variations $\delta^A\bar{\xi}_t$ are those $\alag$-fiber variations for which 
    \begin{align*}
        \D\int \dotp{\bar{\mu}_t}{\del\bar{\xi}_t} - \int \dotp{K_{\nabla^{\dalag}}\delta \bar{\mu}_t}{\del\bar{\xi}_t}
        &= \int \dotp{\bar{\mu}_t}{\Del\bar{\eta}_t} + \int \dotp{\bar{\mu}_t}{i_{\delta x_t}\tilde{B}(x_t)}\del x_t  -\int \dotp{\ad^*_{\bar{\eta}_t}\bar{\mu}_t}{\del \bar{\xi}}\\
        &= -\int \dotp{\ad^*_{\bar{\eta}_t}\bar{\mu}_t}{\del \bar{\xi}}+\int \dotp{\bar{\mu}_t}{i_{\delta x_t}\tilde{B}(x_t)}\del x_t- \int \dotp{\Del \bar{\mu}_t} {\bar{\eta}_t}\\
    &+ \dotp{\bar{\mu}_t}{\bar{\eta}_t} - \dotp{\bar{\mu}_0}{\bar{\eta}_0},
    \end{align*}
    where $\bar{\eta}_t = [q_t, A(q_t, \delta q_t)]_G$. The variations of the form $\delta x_t \oplus \delta^A\bar{\xi}_t$ are the stochastic analogues of the deterministic covariant variations. The reduced variational principle $\S_{X}^{\mathrm{red}}$ will be consider under constrained variations of this form. 
\end{remark}



\subsection{The Stochastic Implicit Lagrange-Poincaré Reduction Theorem}

In this section we will address the critical points of $S_X^{\mathrm{red}}$ given by Equation \eqref{eq:reduced_action_local_coordinates} and prove the stochastic version of the implicit Lagrange-Poincaré reduction theorem. 

\medskip

Fix a connection $\nabla^{T(Q/G)}$ on $T(Q/G)$, and let $T^*(Q/G)$ denote the connection on $T^*(Q/G)$ obtained from $T(Q/G)$. Then $ \grad^{T(Q/G)}\oplus\grad^{\alag}$ defines a connection on $\grad^{T(Q/G)}\oplus\grad^{\alag}$ with associated connector $K_{\nabla^{T(Q/G)}}\oplus K_{\nabla^{\alag}}$, where $K_{\nabla^{T(Q/G)}}$ and $K_{\nabla^{\alag}}$ are the connectors corresponding to $\grad^{T(Q/G)}$ and $\grad^{\alag}$ respectively. 

\begin{thm}\label{thm:variation_of_Sred}
Let $\Gamma_t = (x_t, u_t, y_t, \bar{\zeta}_t, \bar{\mu}_t)$ be an admissible semimartingale in $\P (Q/G)\oplus \alag\oplus \dalag$, $K$ be a regular coordinate ball in $\P (Q/G)\oplus \alag\oplus \dalag$, $\tau_K^h$ be the hitting time of $\Gamma$ for $K$ and $\tau_{K}^{(h,e)}$ be the exit time of $\Gamma_{t + \tau_K^k}$ for $K$. Suppose $\epsilon \mapsto\Gamma^{|T}_{\epsilon,t} =(x^{|T}_{\epsilon,t}, u^{|T}_{\epsilon,t},y^{|T}_{\epsilon,t},  \bar{\zeta}^{|T}_{\epsilon,t}, \bar{\mu}^{|T}_{\epsilon,t})$ is a deformation of $\Gamma_t^{|T} = (x^{|T}_t, u^{|T}_t, y^{|T}_t, \bar{\zeta}^{|T}_t, \bar{\mu}^{|T}_t)$ such that the variations $\delta u^{|T}_t, \delta \bar{\zeta}^{|T}_t, \delta  y^{|T}_t$ and $\delta \bar{\mu}^{|T}_t$ are arbitrary and the variations $\delta x^{|T}_t\oplus \delta^A( \bar{\xi}^{|T}_t)$ satisfy $\delta x^{|T} = 0$ at $t = 0$ and $t = T$ and \begin{align*}
    {\del (\delta^A\bar{\xi}^{|T}_t)} &= -\ad_{\bar{\eta}_t}{\del \bar{\xi}^{|T}_t} + {\Del \bar{\eta}_t} + {i_{\delta x^{|T}_t}\tilde{B}(x_t)}\del x_t
    \end{align*} in the sense of Remark \ref{rem:stochastic_covariant_variations}. Here $\bar{\eta}$ is an arbitrary semimartingale in $\alag$ that vanishes at $t = 0$ and $t = T$. Then
\begingroup
\allowdisplaybreaks
\begin{align}\label{eq:variationinSred}
    \D \S^{\mathrm{red}}_X(x_{t}, u_{t}, y_{t}, \bar{\zeta}_t,\bar{\mu}_{t})&=
    \int_0^T \left\langle \frac{\partial}{\partial x_t^{|T}} \left( \ell \del X^0_t + \sum_{i =1}^k\left(l_i -  \dotp{y_t^{|T}}{V_i^{\reduced}(x_t^{|T})} - \dotp{\bar{\mu}_t^{|T}}{\bar{\beta}_i(x_t^{|T})}\right)\del X^i_t\right)\right.\nonumber\\
    &\left. - \dotp{\bar{\mu}_t}{i_{\del x_t}\tilde{B}(x_t)} - \Del y_t^{|T},\delta x_t^{|T}\right\rangle +\int_0^T\left\langle\frac{\partial \ell}{\partial u^{|T}_t} - y^{|T}_t, K_{\grad^{T(Q/G)}}\delta u^{|T}_t\right\rangle \del X^0_t\nonumber\\
    &+ \int_0^T\left\langle\frac{\partial \ell}{\partial {\bar{\zeta}}^{|T}_t} - \bar{\mu}^{|T}_t, K_{\grad^{\alag}}\delta {\bar{\zeta}}_t\right\rangle \del X^0_t\nonumber\\&+\int_0^T \langle K_{\grad^{T^*(Q/G)}} \delta y^{|T}_t, \del x^{|T}_t - u^{|T}_t\del X^0_t - \sum_{i = 1}^k V_i^{\mathrm{red}}(x^{|T}_t)\del X^i_t\rangle\nonumber\\ &+ \int_0^T\langle K_{\grad^{\dalag}}\delta \bar{\mu}^{|T}_t, \del \bar{\xi}^{|T}_t - \bar{\zeta}^{|T}_t\del X^0_t - \sum_{i = 1}^k\bar{\beta}_i(x^{|T}_t)\del X^i_t\rangle\nonumber\\&+\int_0^T \langle -\Del\bar{\mu}^{|T}_t+\ad^*_{\del \bar{\xi}^{|T}_t}\bar{\mu}^{|T}_t, \bar{\eta}_t\rangle
 \end{align}
 \endgroup
Consequently, $ \D \S^{\mathrm{red}}_X(x_{t}, u_{t},y_{t}, \bar{\zeta}_{t},  \bar{\mu}_{t}) = 0$ holds if and only if the \textbf{horizontal stochastic Lagrange-Poincaré equations} 
 \begin{align*}
     \Del y_t &= \frac{\partial}{\partial x_t} \left( \ell \del X^0_t + \sum_{i =1}^k\left(l_i -  \dotp{y_t}{V_i^{\reduced}(x_t)} - \dotp{\bar{\mu}_t}{\bar{\beta}_i(x_t)}\right)\del X^i_t\right) - \dotp{\bar{\mu}_t}{i_{\del x_t}\tilde{B}(x_t)} \\
     \del x_t &= u_t\del X^0_t + \sum_{i = 1}^kV_i^{\mathrm{red}}(x_t)\del X^i_t\\
     \left(y_t - \frac{\partial \ell}{\partial u_t}\right)\del X^0_t &= 0
 \end{align*}
and the \textbf{vertical stochastic implicit Lagrange-Poincaré equations}
\begin{align*}
    \Del \bar{\mu}_t &= \ad^*_{\del \bar{\xi}_t} \bar{\mu}_t\\
    \del \bar{\xi}_t &= \bar{\zeta}_t\del X^0_t + \sum_{i = 1}^k\bar{\beta}_i(x_t)\del X^i_t\\
    \left(\bar{\mu}_t - \frac{\partial \ell}{\partial \bar{\zeta}_t}\right)\del X^0_t &= 0.
\end{align*}
are satisfied by $(x_t^{|T}, u_t^{|T}, y_t^{|T}, \bar{\zeta}_t^{|T}, \bar{\mu}_t^{|T})$.
\end{thm}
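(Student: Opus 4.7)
The plan is to compute $\D\S_X^{\mathrm{red}}$ piece by piece starting from the formula \eqref{eq:reduced_action_local_coordinates}, using the variational calculus developed in Lemma \ref{lem:variationsofsemimartginales} and Theorem \ref{thm:variation_vector_bundle}, together with the adjoint-bundle variation formula \eqref{eq:covariantvariationontheadjointbundle}. After collecting and regrouping all terms according to which independent variation they pair against, the equations follow by invoking the stochastic fundamental lemma (Lemma \ref{lem:stochastic_fundlem}) on each coefficient separately.

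More concretely, I would split the action into four groups: (i) the Lagrangian contribution $\int \ell(x_t,u_t,\bar{\zeta}_t)\del X^0_t + \sum_i\int l_i(x_t)\del X^i_t$, (ii) the momentum pairing $\int \dotp{y_t}{\del x_t - u_t\del X^0_t - \sum_i V_i^{\mathrm{red}}(x_t)\del X^i_t}$, (iii) the adjoint pairing $\int \dotp{\bar{\mu}_t}{\del\bar{\xi}_t}$, and (iv) the correction $-\int\dotp{\bar{\mu}_t}{\bar{\zeta}_t\del X^0_t + \sum_i \bar{\beta}_i(x_t)\del X^i_t}$. For (i), Theorem \ref{thm:variation_vector_bundle} applied fiberwise on $T(Q/G)\oplus\alag$ produces the $\partial\ell/\partial x_t$ contribution together with $\partial\ell/\partial u_t$ paired against $K_{\grad^{T(Q/G)}}\delta u_t$ and $\partial\ell/\partial\bar{\zeta}_t$ paired against $K_{\grad^{\alag}}\delta\bar{\zeta}_t$; each $l_i$ contributes only a $\partial l_i/\partial x_t$ term. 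For (ii), Lemma \ref{lem:variationsofsemimartginales}(2) applied to the canonical one-form underlying $\mathcal{G}_{\P(Q/G)}$ produces a boundary term killed by $\delta x_t|_{0,T}=0$, an interior term pairing $K_{\grad^{T^*(Q/G)}}\delta y_t$ with $\del x_t - u_t\del X^0_t - \sum_i V_i^{\mathrm{red}}(x_t)\del X^i_t$, and a term $-\int\dotp{\Del y_t}{\delta x_t}$ extracted using the defining relation \eqref{eq:covariant_Derivative_dual_bundle_connector} of $\Del$. For (iii) I invoke \eqref{eq:covariantvariationontheadjointbundle} verbatim, which yields the curvature forcing $\int \dotp{\bar{\mu}_t}{i_{\delta x_t}\tilde{B}(x_t)}\del x_t$, a term pairing $K_{\grad^{\dalag}}\delta\bar{\mu}_t - \ad^*_{\bar{\eta}_t}\bar{\mu}_t$ against $\del\bar{\xi}_t$, the vertical term $-\int\dotp{\Del\bar{\mu}_t}{\bar{\eta}_t}$, and a boundary term killed by $\bar{\eta}_t|_{0,T}=0$. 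Finally (iv) is handled directly by Lemma \ref{lem:variationsofsemimartginales}(1) and Theorem \ref{thm:variation_vector_bundle}. Reassembling these contributions by the variation against which they pair produces \eqref{eq:variationinSred}.

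To extract the equations from $\D\S_X^{\mathrm{red}}=0$, I would treat each family of variations separately. Applying Lemma \ref{lem:stochastic_fundlem} to the arbitrary $(K,T)$-variations $\delta u_t$ and $\delta\bar{\zeta}_t$ yields the Legendre relations $(y_t-\partial\ell/\partial u_t)\del X^0_t = 0$ and $(\bar{\mu}_t - \partial\ell/\partial\bar{\zeta}_t)\del X^0_t = 0$; applied to $\delta y_t$ and $\delta\bar{\mu}_t$ it yields the reconstruction identities $\del x_t = u_t\del X^0_t + \sum_i V_i^{\mathrm{red}}(x_t)\del X^i_t$ and $\del\bar{\xi}_t = \bar{\zeta}_t\del X^0_t + \sum_i\bar{\beta}_i(x_t)\del X^i_t$. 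The remaining two independent data are the constrained $\delta x_t$ and the arbitrary $\bar{\eta}_t$, coupled through the covariant-variation identity for $\delta^A\bar{\xi}_t$ in Remark \ref{rem:stochastic_covariant_variations}; after substituting the reconstruction identities above into the $\delta x_t$ coefficient, the fundamental lemma applied to $\bar{\eta}_t$ yields the vertical equation $\Del\bar{\mu}_t = \ad^*_{\del\bar{\xi}_t}\bar{\mu}_t$, and applied to $\delta x_t$ yields the horizontal equation with curvature forcing $-\dotp{\bar{\mu}_t}{i_{\del x_t}\tilde{B}(x_t)}$.

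The principal technical difficulty is the bookkeeping of boundary and cross terms: several endpoint contributions of the form $\dotp{\bar{\mu}_t}{\bar{\eta}_t}-\dotp{\bar{\mu}_0}{\bar{\eta}_0}$ and analogous $y$-pairings must all be verified to vanish under the imposed endpoint conditions on $\delta x_t$ and $\bar{\eta}_t$, and the variations $\delta x_t$ and $\bar{\eta}_t$ must be decoupled cleanly so that the fundamental lemma applies on each coefficient independently. A secondary subtlety is keeping the formal interpretation of $\del(\delta^A\bar{\xi}_t)$ as an $\alag$-valued Stratonovich differential, per Remark \ref{rem:stochastic_covariant_variations}, consistent throughout, so that the curvature and adjoint terms emerge from \eqref{eq:covariantvariationontheadjointbundle} with their intrinsic meaning intact rather than acquiring spurious chart-dependent corrections.
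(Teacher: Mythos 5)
Your proposal is correct and takes essentially the same route as the paper's proof in Appendix A: the same term-by-term decomposition of $\S_X^{\mathrm{red}}$ (differing only in that you group the three $y$-pairings into one block where the paper treats $\int\dotp{y_t}{\del x_t}$ separately via an explicit computation of $i_{\delta\Gamma}\d\mathcal{G}_{\P(Q/G)}$), the same invocations of Lemma \ref{lem:variationsofsemimartginales}, Theorem \ref{thm:variation_vector_bundle} and Equation \eqref{eq:covariantvariationontheadjointbundle}, and the same concluding application of Lemma \ref{lem:stochastic_fundlem} to $(K,T)$-variations localized between hitting and exit times.
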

\begin{proof}
    We refer the reader to Appendix A for proof of the theorem.
\end{proof}

We now summarize what we have obtained so far in the next theorem. This serves as the stochastic analogue of the deterministic implicit Lagrange-Poincaré reduction theorem (Theorem \ref{thm:deterministic_LP_reduction}):
\begin{theorem}[Stochastic Implicit Lagrange-Poincaré Reduction Theorem]
    The following are equivalent:
    \begin{enumerate}
        \item An admissible semimartingale $\Gamma_t = (q_t, v_t, p_t)$ in $\P Q$ is a critical point of the stochastic action integral
        \begin{align}
        \Ac{X}{\Gamma} = &\int_0^T\left(\L(q_t, v_t)\del X^0_t +\sum_{i = 1}^k L_i(q_t)\del X^i_t \right.\nonumber\\ &\left.+ \dotp{p_t}{\del q_t - v_t\del X^0_t - \sum_{i = 1}^k V_i(q_t)\del X^i_t}\right).
    \end{align}for all deformations $\epsilon\mapsto \Gamma_{\epsilon}$ such that $\delta q_t$ vanishes at $t = 0$ and $t= T$.
    \item The stochastic implicit Euler-Lagrange equations 
    \begin{align}
    \del q_t &= v_t\del X^0_t + \sum_{i=1}^kV_i(q_t)\del X^i_t\nonumber\\
    \del p_t &= \frac{\partial}{\partial q_t}\left(\L\del X_t^0 + \sum_{i = 1}^k\left(L_i - \dotp{p_t}{V_i(q_t)}\right)\del X^i_t\right)\nonumber\\
    \left(p_t - \pard{\L}{v_t}\right)\del X^0_t&=0.
\end{align}are satisfied by $\Gamma^{|T}_t = (q_t^{|T}, v_t^{|T}, p_t^{|T})$.
    \item The reduced semimartingale \[[q_t, v_t, p_t]_G \cong (x_t, u_t, y_t, \bar{\zeta}_t, \bar{\mu}_t)\]
    in the reduced Pontryagin bundle $\P Q/G\cong \P (Q/G)\oplus \alag\oplus \dalag$ is a critical point of the reduced action integral \begin{align}
\S_X^{\reduced}(x_t, u_t, y_t, \bar{\zeta}_t, \bar{\mu}_t) = &\int \ell(x_t, u_t, \bar{\zeta}_t)\del X^0_t + \sum_{i = 1}^k l_i(x_t) \del X^i_t + \dotp{y_t}{ \del x_t - u_t \del X^0_t - \sum_{i = 1}^kV_i^{\reduced}(x_t)\del X^i_t} \nonumber\\&+ \dotp{ \bar{\mu}_t}{ \del \bar{\xi}_t - \bar{\zeta}_t\del X^0_t - \sum_{i=1}^k \bar{\beta}_i(x_t) \del X^i_t}
\end{align}
for variations arbitrary variations $\delta u_t, \delta y_t, \delta \bar{\zeta}_t$ and $\delta \bar{\mu}_t$ and for variations
$\delta x^{|T}_t\oplus \delta^A( \bar{\xi}^{|T}_t)$ such that $\delta x^{|T} = 0$ at $t = 0$ and $t = T$ and \begin{align*}
    {\del (\delta^A\bar{\xi}^{|T}_t)} &= -\ad_{\bar{\eta}_t}{\del \bar{\xi}^{|T}_t} + {\Del \bar{\eta}_t} + {i_{\delta x^{|T}_t}\tilde{B}(x_t)}\del x_t
    \end{align*} where $\bar{\eta}$ is an arbitrary semimartingale in $\alag$ that vanishes at $t = 0$ and $t = T$.
    \item The horizontal stochastic implicit Lagrange-Poincaré equations
 \begin{align*}
     \Del y_t &= \frac{\partial}{\partial x_t} \left( \ell \del X^0_t + \sum_{i =1}^k\left(l_i -  \dotp{y_t^{|T}}{V_i^{\reduced}(x_t)} - \dotp{\bar{\mu}_t}{\bar{\beta}_i(x_t)}\right)\del X^i_t\right) - \dotp{\bar{\mu}_t}{i_{\del x_t}\tilde{B}(x_t)}\\
     \del x_t &= u_t\del X^0_t + \sum_{i = 1}^kV_i^{\mathrm{red}}(x_t)\del X^i_t\\
     \left(y_t - \frac{\partial \ell}{\partial u_t}\right)\del X^0_t &= 0
 \end{align*}
and the vertical stochastic implicit Lagrange-Poincaré equations
\begin{align*}
    \Del \bar{\mu}_t &= \ad^*_{\del \bar{\xi}_t} \bar{\mu}_t\\
    \del \bar{\xi}_t &= \bar{\zeta}_t\del X^0_t + \sum_{i = 1}^k\bar{\beta}_i(x_t)\del X^i_t\\
    \left(\bar{\mu}_t - \frac{\partial \ell}{\partial \bar{\zeta}_t}\right)\del X^0_t &= 0
\end{align*}
are satisfied by $(x_t^{|T}, u_t^{|T}, y_t^{|T}, \bar{\zeta}_t^{|T}, \bar{\mu}_t^{|T})$.
    \end{enumerate}
\end{theorem}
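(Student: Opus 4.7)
The plan is to chain together three equivalences, only one of which requires genuinely new argument. The equivalence (1) $\Leftrightarrow$ (2) is exactly the stochastic Hamilton-Pontryagin principle stated in Section 3, and (3) $\Leftrightarrow$ (4) is precisely Theorem \ref{thm:variation_of_Sred} (whose proof is in Appendix A). So the only content to establish is (1) $\Leftrightarrow$ (3), which I would obtain by translating between deformations upstairs in $\P Q$ and constrained deformations downstairs in $\P(Q/G)\oplus \alag\oplus\dalag$.

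For (1) $\Leftrightarrow$ (3), the central tool is Theorem \ref{thm:the_reduced_action_functional}, which gives $\S_X(\Gamma) = \S_X^{\reduced}(\tilde{\Psi}_A\circ \Pr_{\P Q/G}([\Gamma]))$ under $G$-invariance of $\L$, the $L_i$, and the $V_i$. Differentiating in the ucp topology yields $\D\S_X(\Gamma) = \D\S_X^{\reduced}(\tilde{\Psi}_A\circ \Pr_{\P Q/G}([\Gamma]))$ for any pair of compatible deformations, so it suffices to exhibit a bijective correspondence between (a) deformations $\epsilon\mapsto \Gamma_{\epsilon} = (q_{\epsilon,t}, v_{\epsilon,t}, p_{\epsilon,t})$ with $\delta q_t$ vanishing at $t=0$ and $t=T$ and arbitrary $\delta v_t, \delta p_t$, and (b) the constrained deformations of $(x_t, u_t, y_t, \bar{\zeta}_t, \bar{\mu}_t)$ described in (3).

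For the forward direction, project $\Gamma_{\epsilon}$ via $\tilde{\Psi}_A\circ \Pr_{\P Q/G}$. The variation $\delta q_t$ decomposes into $\delta x_t = T\pi(\delta q_t)$ together with the vertical piece encoded by $\bar{\eta}_t = [q_t, A(q_t, \delta q_t)]_G$; both vanish at the endpoints since $\delta q_t$ does. Remark \ref{rem:stochastic_covariant_variations} and the formula for $\del(\delta^A \bar{\xi}_t)$ derived in Section 5.3 then give exactly the required constraint on $\delta^A\bar{\xi}_t$. The unconstrained variations $\delta v_t$ and $\delta p_t$ project through the isomorphism $\tilde{\Psi}_A$ to fully arbitrary variations of $u, y, \bar{\zeta}, \bar{\mu}$. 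Conversely, given a constrained downstairs deformation, reconstruct the upstairs variation fiber-by-fiber: horizontally lift $\delta x_t$ via $A$, add the vertical piece associated to $\bar{\eta}_t$, and invert $\tilde{\Psi}_A$ on the fibers to recover $\delta v_t, \delta p_t$ from $(\delta u_t, \delta y_t, \delta \bar{\zeta}_t, \delta \bar{\mu}_t)$. The admissibility hypothesis on $\Gamma$, combined with the $(K,T)$-deformation construction of Section 3, guarantees that the reconstructed tangent semimartingale is realized by an actual deformation.

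The main obstacle I expect is precisely this reverse direction: one must verify that the reconstructed variation defines an admissible deformation of $\Gamma$ (not merely a $T\P Q$-valued semimartingale over it), and that projecting the reconstructed deformation reproduces the prescribed downstairs deformation in all of its components. This requires combining the admissibility hypothesis, the fiber-bundle structure of $\tilde{\Psi}_A$, and a local-chart reduction in the spirit of Lemma \ref{lem:stochastic_fundlem}, while keeping consistent track of the formal covariant variation $\delta^A\bar{\xi}_t$ from Section 5.3 so that the required constraint survives the reconstruction.
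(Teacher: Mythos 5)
Your proposal matches the paper's own (implicit) proof exactly: the paper presents this theorem as a summary, with (1)$\Leftrightarrow$(2) supplied by the stochastic Hamilton-Pontryagin principle of Section 3, (1)$\Leftrightarrow$(3) by the $G$-invariance identity $\S_X(\Gamma)=\S_X^{\reduced}([\Gamma])$ of Theorem \ref{thm:the_reduced_action_functional} together with the correspondence of variations worked out in Sections 5.2--5.3, and (3)$\Leftrightarrow$(4) by Theorem \ref{thm:variation_of_Sred}. The reverse-direction subtlety you flag (realizing a prescribed constrained downstairs variation by an admissible upstairs deformation) is real but is also left implicit in the paper, so your treatment is at least as complete as the original.
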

\begin{remark}
    In the deterministic case, that is, $X^0 = t$, $X^i = 0$, all semimartingales may be replaced by smooth curves. We also replace $\del\bar{\xi}_t$ by $[q(t), A(q(t), \dot{q}(t))]_Gdt$. Under these replacements, the stochastic implicit horizontal and vertical Lagrange-Poincaré equations agree with their deterministic counterparts.
\end{remark}
\subsubsection*{Coordinate expressions}

Suppose $Q$ is an $n$-manifold and $Q/G$ has dimension $r$. We choose a trivialization of $Q$ as $U\times G$, where $U\subseteq\R^r$ is an open set, and $G$ acts only on the second factor by left multiplication. Let $(x,g) = (x^{\alpha}, g^a)$ be an element of $U\times G$. The principal connection $A$ acts on a tangent vector $(\dot{x}, \dot{g})$ by \[A(\dot{x}, \dot{g}) = \Ad_g(A_{(x,e)}(\dot{x}, g^{-1}\dot{g})) = \Ad_g(A_e(x)\dot{x} + g^{-1}\dot{g}) =\Ad_g(A_e(x)\dot{x}) + \dot{g}g^{-1},\]
where $A_e(x)\dot{x} = A_{(x, e)}(\dot{x}, 0)$. Hence, $A(x,g) = \Ad_g(A_e(x)dx + g^{-1}dg)$. Let $\xi = g^{-1}\dot{g}$. Then \[\bar{\xi} := [(x,g), A(x,g,\dot{x}, \dot{g})]_G = [(x, e), A_e(x)\dot{x} + \xi]_G.\]Trivializing the adjoint bundle as $(U\times G\times \lag)/G\cong U\times \lag$ identifies $\bar{\xi}$ with $(x, A_e(x)\dot{x} + \xi)$. Let us write $A_e(x)$ locally as $A_{\alpha}^a(x)$. Then, with identify $\bar{\xi}$ with its second component and simply write $\bar{\xi}^a = \xi^a +A^{a}_{\alpha}\dot{x}^{\alpha}$. If we now use Stratonovich differentials, then this reads $\del\bar{\xi}^a = \del \xi^a + A^{a}_{\alpha}\del {x}^{\alpha}$, where $\del \bar{\xi}^a$, as a Stratonovich differential in $(U\times G\times \lag)/G\cong U\times \lag$ is given by $[(x_t, e), A_e(x_t) \del x_t + \del \xi_t ]$ and $\del \xi_t = g_t^{-1}\del g_t$.

\medskip

Let $C^{b}_{cd}$ denote the structure constants of the Lie algebra. Then the $\lag$-valued curvature 2-form $B$ is given locally by 
\[B^{b}_{\alpha\beta} = \pard{A^b_{\beta}}{x^{\alpha} } - \pard{A^b_{\alpha}}{x^{\beta} } - C^b_{cd}A^c_{\alpha}A^d_{\beta}.\]
Let the $G$-invariant vector fields $V_i$ be written locally as $V_i(x,g) = V_i^{\alpha}(x)\pard{}{x^{\alpha}} + g\beta^{a}(x)$, where $\beta^a:U\rightarrow \lag$ is a smooth map. Then the reduced vector fields $V_i^{\reduced}$ on $U$ given locally by $V_i^{\reduced}(x) = V_i^{\alpha}(x)\pard{}{x^{\alpha}}$. The sections $\bar{\beta}_i$ of the associated bundle are given by \[\bar{\beta}_i(x) = [(x,e), A(V_i(x,e))]_G.\] We think of these as maps from $U\rightarrow \lag$ given by $\bar{\beta}_i^a(x) = A^{a}_{\alpha}(x)V_i^{\alpha}(x) + \beta_i^{a}(x)$. 

\medskip

Following the local coordinate calculations done in Cendra, Marsden and Ratiu \cite{cendra1997lagrangian}, the local form of the horizontal stochastic implicit Lagrange-Poincaré equations is given by
\begin{align*}
    \del y_{\alpha_t} &= \pard{}{x^{\alpha}_t}\left(\ell\del X^0_t + \sum_{i = 1}^k l_i\del X^i_t\right) - \sum_{i = 1}^k\left(y_{\beta_t}\pard{V_i^{\beta}(x_t)}{x^{\alpha}_t}+\right.\\&+\left. \bar{\mu}_a\left(C^a_{bd}\bar{\beta}_i^b(x_t)A^d_{\alpha} +\pard{\beta_i^a(x_t)}{x^{\alpha}_t}\right)\right)\del X^i_t\\&+\bar{\mu}_a B^a_{\alpha\beta}\del x^{\beta}_t - \pard{\ell}{\bar{\zeta}^a}    C^a_{db}A^b_{\alpha}\del \bar{\xi}_t^d \\
    \del x^{\alpha}_t &= u^{\alpha}_t \del X^0_t + \sum_{i = 1}^k V_i^{\alpha}(x_t)\del X^i_t\\
    \left(y_{\alpha_t} - \pard{\ell}{x^{\alpha}_t}\right)\del X^0_t &= 0
\end{align*}
and the stochastic vertical implicit Lagrange-Poincaré equations are given by 
\begin{align*}
    \del \bar{\mu}_{b_t} &= \bar{\mu}_{a_t}C^a_{db}(\del \bar{\xi}^d_t -A^{d}_{\alpha}\del x^{\alpha}_t)\\
    \del\bar{\xi}^d_t &= \bar{\zeta}^d_t\del X^0_t + \sum_{i = 1}^k \bar{\beta}_i^d(x_t)\del X^i_t\\
    \left(\bar{\mu}_{b_t} - \pard{\ell}{\bar{\zeta}^b_t}\right)\del X^0_t &= 0.
\end{align*}
\begin{remark}
    If $X^0_t = t$ and $X^i_t = 0$ then $\bar{\mu} = \pard{\ell}{\bar{\zeta}}$ and $\del \bar{\xi}$ is replaced by the deterministic $\lag$-valued curve $\bar{\xi} = [q, A(q, \dot{q})]_G$. In this case, the local form of the reduced equations agree with the local form of the implicit Lagrange-Poincaré equations given in Yoshimura and Marsden \cite{yoma}.
\end{remark}
If the bundle $Q\rightarrow Q/G$ is equipped with a trivial connection in local coordinates, that is, $A = 0$, then the horizontal equations in local coordinates are given by
\begin{align*}
    \del y_{\alpha_t} &= \pard{}{x^{\alpha}_t}\left(\ell\del X^0_t + \sum_{i = 1}^k l_i\del X^i_t\right) - \sum_{i = 1}^k\left(y_{\beta_t}\pard{V_i^{\beta}(x_t)}{x^{\alpha}_t}+\right.\\&+\left. \bar{\mu}_a\pard{\beta_i^a(x_t)}{x^{\alpha}_t}\right)\del X^i_t\\
    \del x^{\alpha}_t &= u^{\alpha}_t \del X^0_t + \sum_{i = 1}^k V_i^{\alpha}(x_t)\del X^i_t\\
    \left(y_{\alpha_t} - \pard{\ell}{x^{\alpha}_t}\right)\del X^0_t &= 0
\end{align*}
and the vertical equations are given by 
\begin{align*}
    \del \bar{\mu}_{b_t} &= \bar{\mu}_{a_t}C^a_{db}(\del \bar{\xi}^d_t)\\
    \del\bar{\xi}^d_t &= \bar{\zeta}^d_t\del X^0_t + \sum_{i = 1}^k {\beta}_i^d(x_t)\del X^i_t\\
    \left(\bar{\mu}_{b_t} - \pard{\ell}{\bar{\zeta}^b_t}\right)\del X^0_t &= 0.
\end{align*}
We will call these equations the \textbf{stochastic Hamel's equations}. When $X^0_t = t$ and $X^i_t = 0$ then these equations correspond to Hamel's equations described in Marsden and Scheurle \cite{marsden1993reduced}.
\subsubsection*{Special Cases}
We now discuss four special cases.

\begin{enumerate}
\item Suppose $Q = G$. In this case, $Q/G$ is a point and hence the horizontal implicit stochastic Lagrange-Poincaré equations vanish. The $G$-invariant vector fields $V_i$ are of the form $V_i(g) = T_eL_g \beta_i$, where $\beta_i \in \lag$ is a fixed element, and $g\in G$. We also have $\alag\cong \lag$ and $\dalag \cong \dlag$. Then, the vertical implicit stochastic Lagrange-Poincaré equations are given by
\begin{align*}
    \del \mu_t &= \ad^*_{\del \xi_t}\mu_t\\
    \del \xi_t &= \zeta_t\del X^0_t + \sum_{i = 1}^k \beta_i \del X^i_t\\
    \left(\mu_t - \pard{\ell}{\zeta_t}\right)\del X^0_t &= 0.
\end{align*}
When $X^0_t = t$ and $X^i$ is a Brownian motion, these equations agree with the stochastic Euler-Poincaré equations given in Street and Takao \cite{street2023}. 

\item Assume that $G$ is an abelian group. From Equation \eqref{eq:covariantderivativeassociatedbundle}, the covariant derivative in the associated bundle agrees with the fibre derivative since the Lie bracket vanishes. Then, by Equation \eqref{eq:covariantderivativedualbundle}, it follows that the covariant derivative on the dual bundle agrees with the fibre derivative as well. Also, from the stochastic vertical implicit Lagrange-Poincaré equations, we see that $\Del \bar{\mu}_t = 0$. Writing $\bar{\mu}_t = [q_t, \mu_t]_G$, we see that $\mu_t$ is conserved along solutions. 

\medskip

As a special case, note that if $G =\{e\}$ then the vertical stochastic implicit Lagrange-Poincaré equations vanish and the horizontal stochastic implicit Lagrange-Poincaré equations agree with the stochastic implicit Euler-Lagrange equations.

\item We now assume that the noise vector fields  $V_i$ are horizontal. In this case, $\bar{\beta}_i = 0$, so the stochastic horizontal and vertical implicit Lagrange-Poincaré equations are given by

\begin{align*}
     \Del y_t &= \frac{\partial}{\partial x_t} \left( \ell \del X^0_t + \sum_{i =1}^k\left(l_i -  \dotp{y_t^{|T}}{V_i^{\reduced}(x_t)} \right)\del X^i_t\right) - \dotp{\bar{\mu}_t}{i_{\del x_t}\tilde{B}(x_t)}\\
     \del x_t &= u_t\del X^0_t + \sum_{i = 1}^kV_i^{\mathrm{red}}(x_t)\del X^i_t\\
     \left(y_t - \frac{\partial \ell}{\partial u_t}\right)\del X^0_t &= 0
 \end{align*}
and 
\begin{align*}
    \Del \bar{\mu}_t &= \ad^*_{\del \bar{\xi}_t} \bar{\mu}_t\\
    \del \bar{\xi}_t &= \bar{\zeta}_t\del X^0_t\\
    \left(\bar{\mu}_t - \frac{\partial \ell}{\partial \bar{\zeta}_t}\right)\del X^0_t &= 0
\end{align*}
respectively. As a result, if $X^0_t = t$ then the vertical equations are noise-free. 

\item Finally, suppose that $\beta_1, \cdots, \beta_k$ are $\Ad_G$-invariant elements of $\lag$, and let $V_i = (\beta_i)_Q$. Then $V_i$ is $G$-invariant, $V_i^{\reduced} = 0$ and $A(V_i) = \beta_i$. We also have $\bar{\beta}_i(x) = [q, \beta_i]_G$, where $x\in Q/G$ and $q\in \pi\inv(x)$. Note that this is independent of the choice of the representative $q$ since $\beta_i$ is $\Ad_G$-invariant. As a result $\frac{\partial}{\partial x_t}\dotp{\bar{\mu}_t}{\bar{\beta}_i(x_t)}$ vanishes. We also assume that $l_1, \cdots, l_k = 0$ and $X^0_t = t$. Then the horizontal equations are given by 
\begin{align*}
     \frac{Dy}{Dt} &= \frac{\partial\ell}{\partial x} - \dotp{\bar{\mu}_t}{i_{\dot{x}}\tilde{B}(x)} \\
     \dot {x} &= u \\
     y &= \frac{\partial \ell}{\partial u}
 \end{align*}
and the vertical equations are given by 
\begin{align*}
    \Del \bar{\mu}_t &= \ad^*_{\del \bar{\xi}_t} \bar{\mu}_t\\
    \del \bar{\xi}_t &= \bar{\zeta}_t dt + \sum_{i = 1}^k\bar{\beta}_i\del X^i_t\\
    \bar{\mu}_t &= \frac{\partial \ell}{\partial \bar{\zeta}_t}.
\end{align*}
This shows that we have a stochastic curvature-induced force in the horizontal equations.
\end{enumerate}

\section{Examples}

\subsection{Rigid Body with a Rotor}
Let us look at the example of a rigid body with a rotor aligned with its third principal axis. We will consider external stochastic forces on the rotor and the body, but for simplicity, we will assume that these forces do not affect the moments of inertia of the rotor and the body. 
\subsubsection*{The Deterministic Free Rigid Body with a Rotor}
Following Marsden \cite{marsden1} and Yoshimura and Marsden \cite{yoma}, we briefly recollect the setup of the deterministic problem. Let $I_1>I_2> I_3$ denote the principal rigid body moments of inertia, $K_1 = K_2$ denote the rotor moments of inertia about the first two principal axes of the rigid body and $K_3$ denote the moment of inertia of the rigid body about the third principal axis. Let $K$ be the diagonal matrix $K = \mathrm{diag}(K_1, K_2, K_3)$. The configuration space is $Q = S^1\times SO(3)$, where $G =SO(3)$ acts on the second factor by matrix multiplication. Then $Q/G = S^1$. The Lagrangian of this system is given by 
\begin{equation}\label{eq:Lagrangian_for_rigid_body_plus_rotor}\L(\theta, R, v_{\theta}, v_R) = \frac{1}{2}\left[\dotp{\Sigma}{I\Sigma} + \dotp{\Sigma + \vvec_{\theta}}{K(\Sigma + \vvec_{\theta}})\right],\end{equation}
where $\vvec_{\theta} = (0, 0, v_{\theta})^T\in \R^3$, $\Sigma\in \R^3$ is defined by $\hat\Sigma = R^{-1}v_R\in \mathfrak{so}(3)$ with the usual `hat' map identification of $\R^3$ with $\mathfrak{so}(3)$ given by $\hat{\Sigma}w = \Sigma\times w$, for any $w\in \R^3$.

\medskip

We let $A$ be a trivializing connection on the bundle $Q\rightarrow Q/G$, so that $TQ/G\cong TS^1\times \mathfrak{so}(3)$. Concretely, $A(\theta, R, v_{\theta}, v_R) = R^{-1}v_R \in \mathfrak{so}(3)$. Then, in terms the coordinates $(\theta, u, \Sigma)$ on $TS^1\times \mathfrak{so}(3)$, the reduced Lagrangian is given by 
\begin{equation}\label{eq:reduced_Lagrangian_rigid_body_plus_rotor}
    \ell(\theta, u, \Sigma) = \frac{1}{2}\left[\lambda_1\Sigma_1^2 + \lambda_2\Sigma_2^2 + I_3\Sigma_3^2 + K_3(\Sigma_3 + u)^2\right],
\end{equation}
with $\lambda_i = I_i + K_i$. The (deterministic) implicit horizontal Lagrange-Poincaré equations are given by 
\begin{align*}
    \dot{y} = 0, \:\:\dot{\theta} = u,\:\:y = \pard{\ell}{u} = K_3(\Sigma_3 + u)
\end{align*}
and the vertical implicit Lagrange-Poincaré equations are given by 
\begin{align*}
    \dot{\Pi} = \Pi\times \Omega, \:\:\Omega = \Sigma,\:\:\Pi = \pard{\ell}{\Sigma} = (\lambda_1\Sigma_1, \lambda_2\Sigma_2, I_3\Sigma_3 + K_3(\Sigma_3+u)).
\end{align*}
Note that the conjugate momentum $y = K_3(\Sigma_3 + u)$ is conserved.
\subsubsection*{Stochastic Perturbations of a Free Rigid Body}
Before considering a stochastic pertubation of the deterministic problem, we recall from Lázaro-Camí and Ortega \cite{lco2} and Arnaudon, De Castro and Holm \cite{holm_rigid_body}, the Stratonovich equations of motion describing a free rigid body under small random impacts. Let $\hat{\beta}_1, \hat{\beta}_2, \hat{\beta}_3$ be elements in the Lie algebra $\mathfrak{so}(3)$ and $X^1_t, X^2_t$ and $X^3_t$ be semimartingales. The body angular momentum of the stochastic free rigid body is given by
\begin{equation}\label{eq:stochastic_free_rigid_body}
    \del \Pi_t = (\Pi_t \times I^{-1}\Pi_t)dt + \sum_{i = 1}^3 (\Pi_t \times \beta_i)\del X^i_t,
\end{equation}
and the attitude is given by 
\[\del R_t = R_t \widehat{I^{-1} \Pi_t}dt + \sum_{i = 1}^3R_t\hat{\beta}_i\del X^i_t. \]
Note that Equation \eqref{eq:stochastic_free_rigid_body} is equivalent to
\begin{align*}
    \del \Pi_t &= \Pi_t \times \del \Omega_t\\
    \del \Omega_t &= \Sigma_tdt + \sum_{i = 1}^3 \beta_i\del X^i_t\\
    \Pi_t &= I\Sigma_t.
\end{align*}
These equations can be obtained by stochastic Euler-Poincaré reduction for the action
\[\Ac{}{R_t, v_{R_t}, p_{R_t}} = \int_0^T\left[ \frac{1}{2}\dotp{\Sigma_t}{I\Sigma_t}dt + \dotp{p_t}{\del R_t - v_{R_t}dt - \sum_{i = 1}^3 R_t\hat{\beta_i}\del X^i_t}\right],\]
where $\hat{\Sigma_t} = R_t^{-1}v_{R_t}$. Comparing this to the general structure of the stochastic Hamilton-Pontryagin action, we note that $\L =\frac{1}{2}\dotp{\Sigma}{I\Sigma}$, $X^0_t = t$, the noise Lagrangian $l_i$ are zero, and the noise vector fields are the left invariant vector fields corresponding to $\hat{\beta}_i$.

\subsubsection*{Stochastic Perturbations of a Rigid Body with a Rotor}
Let $\L$ be the Lagrangian given in Equation \eqref{eq:Lagrangian_for_rigid_body_plus_rotor} and $X^1_t$, $X^2_t$ and $X^3_t$ be semimartingales. Suppose $\hat{\beta}_1, \hat{\beta}_2$ and $\hat{\beta}_3$ are fixed elements of $\mathfrak{so}(3)$. On $Q = S^1\times SO(3)$ we consider the action
\begin{align*}\S(\theta_t, R_t, v_{\theta_t},v_{R_t}, p_{\theta_t}, p_{R_t}) &= \int_0^T \left[\L\:dt + \sum_{i = 1}^3L_i({\theta_t})\del X^i_t + \dotp{p_{\theta_t}}{\del \theta_t - v_{\theta_t}dt - \sum_{i = 1}^3 V_i(\theta_t)\del X^i_t}\right.\\
&\left.+ \dotp{p_{R_t}}{\del R_t - v_{R_t}dt - \sum_{i = 1}^3 R_t\hat{\beta_i}\del X^i_t},\right]
\end{align*}
where $L_1, L_2, L_3$ are smooth functions and $V_i$ is a vector field on $S^1$.
\begin{remark}
    The Lagrangians $L_1, L_2, L_3$ are interpreted as the contribution of the external force to the potential energy of the rotor, and the vector fields $V_1, V_2, V_3$ stochasticize the relation $\dot{\theta} = v_{\theta}$ to \[\del \theta_t = v_{\theta_t}dt + \sum_{i = 1}^3 V_i(\theta_t)\del X^i_t.\]The term \[\int_0^T\dotp{p_{R_t}}{\del R_t - v_{R_t}dt - \sum_{i = 1}^3 R_t\hat{\beta_i}\del X^i_t}\]accounts for random perturbation of the free rigid body. 
\end{remark}
We consider a trivializing connection $A$ on $S^1\times SO(3) \rightarrow S^1$. Then, the reduced action is given by
\begin{align*}\S^{\reduced}(\theta_t, u_t,\Sigma_t, y_t, \Pi_t) &= \int_0^T \left[\ell\:dt + \sum_{i = 1}^3L_i({\theta_t})\del X^i_t + \dotp{y_t}{\del \theta_t - u_tdt - \sum_{i = 1}^3 V_i(\theta_t)\del X^i_t}\right.\\
&\left.+ \dotp{\Pi_t}{\del \Omega - \Sigma_tdt - \sum_{i = 1}^3 \hat{\beta_i}\del X^i_t}\right],
\end{align*}
where $\ell$ is the reduced Lagrangian in Equation \eqref{eq:reduced_Lagrangian_rigid_body_plus_rotor} and $\del \Omega_t = R_t\inv\del R_t$. It follows that the stochastic implicit horizontal Lagrange-Poincaré equations are given by
\begin{align*}
    \del y_t &= \pard{}{\theta_t}\sum_{i = 1}^3\left(L_i(\theta_t) - y_t V_i(\theta_t)\right)\del X^i_t\\
    \del \theta_t &= u_tdt + \sum_{i = 1}^3 V_i(\theta_t)\del X^i_t\\
    y_t &= \pard{\ell}{u_t} = K_3(\Sigma_{3_t} + u_t),
\end{align*}
and the stochastic implicit vertical Lagrange-Poincaré equations are given by
\begin{align*}
    \del \Pi_t &= \Pi_t \times \del \Omega_t\\
    \del \Omega_t &= \Sigma_tdt + \sum_{i = 1}^3\beta_i\del X^i_t\\
    \Pi_t&= \pard{\ell}{\Sigma_t} = (\lambda_1\Sigma_{1_t}, \lambda_2\Sigma_{2_t}, I_3\Sigma_{3_t} + K_3(\Sigma_{3_t}+u_t)). 
\end{align*}
We observe that the conjugate momentum $y = K_3(\Sigma_3 + u)$ is no longer conserved in the stochastic case. 

\medskip

Let us mention three special cases.
\begin{enumerate}
    \item{\(L_i = 0\) and \(V_i = 0\)}: We consider $L_i = 0$ and $V_i = 0$. In this case, the horizontal equations are given by 
    \begin{align*}
    \dot{y}_t &= 0\\
    \dot{\theta}_t &= u_t\\
    y_t &= \pard{\ell}{u_t} = K_3(\Sigma_{3_t} + u_t).
\end{align*}
This implies the conjugate momentum $y = K_3(\Sigma_3 + u)$ is conserved, and the rotor angle of rotation evolves along differentiable trajectories. Physically, we interpret this as the case where the external noise only impacts the rigid body and not the rotor.
\item{\(\beta_i = 0\)}: In this case, the vertical equations are same as the vertical implicit Lagrange-Poincaré equations in the deterministic case. Physically, this models a system where the external noise only impacts the rotor and not the rigid body.
\item{\(V_i = 0\), \(L_i(\theta) =\theta\) and \(X^i_t\) is a Brownian motion}: In this case, the horizontal equations are given by 
\begin{align*}
    \del y_t &= \sum_{i = 1}^3\del B^i_t\\
    \dot \theta_t &= u_t\\
    y_t &= \pard{\ell}{u_t} = K_3(\Sigma_{3_t} + u_t),
\end{align*}
where $B^1, B^2$ and $B^3$ are independent Brownian motions. This implies that $y_t = B^1_t + B^2_t + B^3_t$. Hence $\E[y_t] = 0$ for all $t$. This shows that $y_t$ is a weakly conserved quantity in the sense of Lázaro-Camí and Ortega \cite[Definition 2.2]{lco1}, but not a conserved quantity. 
\end{enumerate}

\subsection{Charged Particle in a Magnetic Field with Stochastic Perturbations}
In this section we provide a Kaluza-Klein description of a stochastically perturbed charged particle in a magnetic field. 

\subsubsection*{The Deterministic Kaluza-Klein Description for a Charged Particle in a Magnetic Field}
The equation of motion for a charged particle (of unit mass) in a magnetic field $\Bvec$ is given by $\dot{\uvec} = \frac{e}{c}\uvec\times\Bvec$. Following Marsden and Ratiu \cite{marsden2}, we show that this can be viewed as a reduction of the geodesic flow on $Q_K = \R^3\times S^1$ under a certain metric. Here $S^1$ acts on the second factor by rotations.

\medskip

Let $G = S^1$ with its standard bi-invariant metric $\kappa$ and consider $\R^3$ with its standard metric given by the inner product $\dotp{\cdot}{\cdot}$. Let $\Avec$ be a vector in $\R^3$ and identify $\Avec$ with a 1-form $A$ on $\R^3$. Let \[\alpha = A + d\theta\] be a connection 1-form on the bundle $\pi:\R^3\times S^1\rightarrow \R^3$. Consider the metric on $Q_K$ given by \[g((\uvec_q, u_{\theta}),(\vvec_q, v_{\theta})) = \langle \uvec_q, \vvec_q\rangle+ \kappa(\alpha(\uvec_q, v_{\theta}), \alpha(\vvec_q, v_{\theta})).\]The Lagrangian for the geodesic flow on $(Q_K, g)$ is given by \[\L_{K}(\qvec, \theta, \vvec_q, v_{\theta}) = \frac{1}{2}\left(||\vvec_q||^2 + (\Avec\cdot\vvec_q + v_{\theta})^2\right).\] We will call it the \textbf{Kaluza-Klein Lagrangian}. Let $B = d\alpha = dA$ and identify $B$ with the vector $\Bvec = \grad\times\Avec$. The reduced curvature 2-form on $Q_K/S^1 \cong \R^3$ is identified with $B$ or  the vector $\Bvec$. Let $(\xvec,\uvec,\lambda)\in \R^3\times\R^3\times \R$ denote local coordinates on the bundle $T\R^3\oplus \R$. The reduced Lagrangian is \[\ell_K(\xvec, \uvec, \lambda) = \frac{1}{2}\left(||\uvec||^2 + \lambda^2\right).\]
The vertical implicit Lagrange-Poincaré equations are given by 
\[\dot{p}_{\theta} = 0,\:\:\chi = \lambda,\:\:p_{\theta} = \pard{\ell_K}{\lambda} = \lambda,\]
where $\chi = A\cdot \dot{q} + \dot{\theta}$. Since $\dot{p}_{\theta} = 0$, it follows that $p_{\theta}$ is a constant and we set $p_{\theta} = \frac{e}{c}$. The horizontal Lagrange-Poincaré equations are then given by 
\[\dot{\yvec} = \frac{e}{c}(\dot{\xvec}\times \Bvec),\:\:\dot{\xvec} = \uvec,\:\:\yvec = \pard{\ell_K}{\uvec} = \uvec,\]
which yields $\dot{\uvec} = \frac{e}{c}\uvec\times\Bvec$.

\subsubsection*{The Stochastic Case}

Suppose that $X^1, \cdots, X^k$ are arbitrary semimartingales, $L_1, \cdots, L_k\in \Cin(R^3)$ and $V_1, \cdots, V_k$ are vectors fields on $\R^3$. We consider the stochastic Hamilton-Pontryagin action functional 
\begin{align*}
    \Ac{}{\qvec_t, \theta_t, \uvec_{q_t}, u_{\theta_t}, \pvec_{q_t}, p_{\theta_t}} &= \int_0^T\left[ \L_K\:dt + \sum_{i = 1}^k L_i(\qvec_t)\del X^i_t +\dotp{\pvec_{q_t}}{\del \qvec_t - \uvec_{q_t}dt - \sum_{i = 1}^k V_i(\qvec_t)\del X^i_t} \right.\\&\left.+\dotp{p_{\theta_t}}{\del \theta_t - u_{\theta_t}dt}\right].
\end{align*}
Then, in terms of the Kaluza-Klein description, the reduced action corresponding to $\mathcal{S}$ is given by
\begin{align*}
    \mathcal{S}^{\reduced}(\xvec_t, \uvec_t, \lambda_t, \yvec_t, p_{\theta_t}) &= \int_0^T\left[ \ell_K\:dt + \sum_{i = 1}^k L_i(\xvec_t)\del X^i_t +\dotp{\yvec_t}{\del \xvec_t - \uvec_{t}dt - \sum_{i = 1}^k V_i(\xvec_t)\del X^i_t} \right.\\&\left.+\dotp{p_{\theta_t}}{\del \chi_t - \lambda_tdt}\right],
\end{align*}
where $\del \chi_t = \Avec\del \qvec_t + \del \theta_t$. The stochastic implicit vertical Lagrange-Poincaré equations are given by 
\begin{align*}
    \del p_{\theta_t} &=  0\\
    \dot{\chi}_t &= \lambda_t\\
    p_{\theta_t} &= \lambda_t
\end{align*}
which shows that $p_{\theta_t}$ is conserved. As before, we set $p_{\theta_t} = \frac{e}{c}$. Then the stochastic implicit horizontal Lagrange-Poincaré equations are given by 
\begin{align*}
    \del \yvec_t &= \frac{e}{c}(\del \xvec_t \times \Bvec)dt +  \sum_{i = 1}^k\pard{}{\xvec_t}\left(L_i(\xvec_t) - \yvec_t\cdot V_i(\xvec_t)\right)\del X^i_t\\
    \del \xvec_t &= \uvec_t dt + \sum_{i = 1}^k V_i(\xvec_t)\del X^i_t\\
    \yvec_t &= \uvec_t.
\end{align*}
Equivalently, we can solve for
\begin{align*}
    \del \uvec_t &=  \frac{e}{c}(\uvec_t \times \Bvec)dt +  \sum_{i = 1}^k\left(\frac{e}{c}(V_i(\xvec_t)\times \Bvec)+\pard{}{\xvec_t}\left(L_i(\xvec_t) - \uvec_t\cdot V_i(\xvec_t)\right)\right)\del X^i_t\\
    \del \xvec_t &= \uvec_t dt + \sum_{i = 1}^k V_i(\xvec_t)\del X^i_t,
\end{align*}
where the first equation represents the Lorentz force law with a stochastic perturbation. Note that if $k = 3$, $(X_t^1, X_t^2, X_t^3) = \mathbf{W}_t$, where $\mathbf{W}_t$ is a Brownian motion in $\R^3$, $L_i(x) = x^i$ and $V_i = 0$ then these equations become
\begin{align*}
    \del \uvec_t &=  \frac{e}{c}(\uvec_t \times \Bvec)dt +  \del \mathbf{W}_t\\
    \dot \xvec_t &= \uvec_t.
\end{align*}
Then, $(\E[\xvec_t], \E[\uvec_t])$ satisfies the equations for the charged particle in a magnetic field. 
\subsection*{Acknowledgements}
The author is thankful to Cristina Stoica and Tanya Schmah for their feedback and suggestions in preparing this manuscript.

\bibliographystyle{unsrt}
\bibliography{refer}

\appendix
\addcontentsline{toc}{section}{Appendix}
\section{Proof of Theorem \ref{thm:variation_of_Sred}}

\begin{theorem*}
Let $\Gamma_t = (x_t, u_t, y_t, \bar{\zeta}_t, \bar{\mu}_t)$ be an admissible semimartingale in $\P (Q/G)\oplus \alag\oplus \dalag$, $K$ be a regular coordinate ball in $\P (Q/G)\oplus \alag\oplus \dalag$, $\tau_K^h$ be the hitting time of $\Gamma$ for $K$ and $\tau_{K}^{(h,e)}$ be the exit time of $\Gamma_{t + \tau_K^k}$ for $K$. Suppose $\epsilon \mapsto\Gamma^{|T}_{\epsilon,t} =(x^{|T}_{\epsilon,t}, u^{|T}_{\epsilon,t},y^{|T}_{\epsilon,t},  \bar{\zeta}^{|T}_{\epsilon,t}, \bar{\mu}^{|T}_{\epsilon,t})$ is a deformation of $\Gamma_t^{|T} = (x^{|T}_t, u^{|T}_t, y^{|T}_t, \bar{\zeta}^{|T}_t, \bar{\mu}^{|T}_t)$ such that the variations $\delta u^{|T}_t, \delta \bar{\zeta}^{|T}_t, \delta  y^{|T}_t$ and $\delta \bar{\mu}^{|T}_t$ are arbitrary and the variations $\delta x^{|T}_t\oplus \delta^A( \bar{\xi}^{|T}_t)$ satisfy $\delta x^{|T} = 0$ at $t = 0$ and $t = T$ and \begin{align*}
    {\del (\delta^A\bar{\xi}^{|T}_t)} &= -\ad_{\bar{\eta}_t}{\del \bar{\xi}^{|T}_t} + {\Del \bar{\eta}_t} + {i_{\delta x^{|T}_t}\tilde{B}(x_t)}\del x_t
    \end{align*} in the sense of Remark \ref{rem:stochastic_covariant_variations}. Here $\bar{\eta}$ is an arbitrary semimartingale in $\alag$ that vanishes at $t = 0$ and $t = T$. Then
\begingroup
\allowdisplaybreaks
\begin{align}\label{eq:variationinSred_app}
    \D \S^{\mathrm{red}}_X(x_{t}, u_{t}, y_{t}, \bar{\zeta}_t,\bar{\mu}_{t})&=
    \int_0^T \left\langle \frac{\partial}{\partial x_t^{|T}} \left( \ell \del X^0_t + \sum_{i =1}^k\left(l_i -  \dotp{y_t^{|T}}{V_i^{\reduced}(x_t^{|T})} - \dotp{\bar{\mu}_t^{|T}}{\bar{\beta}_i(x_t^{|T})}\right)\del X^i_t\right)\right.\nonumber\\
    &\left. - \dotp{\bar{\mu}_t}{i_{\del x_t}\tilde{B}(x_t)} - \Del y_t^{|T},\delta x_t^{|T}\right\rangle +\int_0^T\left\langle\frac{\partial \ell}{\partial u^{|T}_t} - y^{|T}_t, K_{\grad^{T(Q/G)}}\delta u^{|T}_t\right\rangle \del X^0_t\nonumber\\
    &+ \int_0^T\left\langle\frac{\partial \ell}{\partial {\bar{\zeta}}^{|T}_t} - \bar{\mu}^{|T}_t, K_{\grad^{\alag}}\delta {\bar{\zeta}}_t\right\rangle \del X^0_t\nonumber\\&+\int_0^T \langle K_{\grad^{T^*(Q/G)}} \delta y^{|T}_t, \del x^{|T}_t - u^{|T}_t\del X^0_t - \sum_{i = 1}^k V_i^{\mathrm{red}}(x^{|T}_t)\del X^i_t\rangle\nonumber\\ &+ \int_0^T\langle K_{\grad^{\dalag}}\delta \bar{\mu}^{|T}_t, \del \bar{\xi}^{|T}_t - \bar{\zeta}^{|T}_t\del X^0_t - \sum_{i = 1}^k\bar{\beta}_i(x^{|T}_t)\del X^i_t\rangle\nonumber\\&+\int_0^T \langle -\Del\bar{\mu}^{|T}_t+\ad^*_{\del \bar{\xi}^{|T}_t}\bar{\mu}^{|T}_t, \bar{\eta}_t\rangle
 \end{align}
 \endgroup
Consequently, $ \D \S^{\mathrm{red}}_X(x_{t}, u_{t},y_{t}, \bar{\zeta}_{t},  \bar{\mu}_{t}) = 0$ holds if and only if the \textbf{horizontal stochastic Lagrange-Poincaré equations} 
 \begin{align*}
     \Del y_t &= \frac{\partial}{\partial x_t} \left( \ell \del X^0_t + \sum_{i =1}^k\left(l_i -  \dotp{y_t}{V_i^{\reduced}(x_t)} - \dotp{\bar{\mu}_t}{\bar{\beta}_i(x_t)}\right)\del X^i_t\right) - \dotp{\bar{\mu}_t}{i_{\del x_t}\tilde{B}(x_t)} \\
     \del x_t &= u_t\del X^0_t + \sum_{i = 1}^kV_i^{\mathrm{red}}(x_t)\del X^i_t\\
     \left(y_t - \frac{\partial \ell}{\partial u_t}\right)\del X^0_t &= 0
 \end{align*}
and the \textbf{vertical stochastic implicit Lagrange-Poincaré equations}
\begin{align*}
    \Del \bar{\mu}_t &= \ad^*_{\del \bar{\xi}_t} \bar{\mu}_t\\
    \del \bar{\xi}_t &= \bar{\zeta}_t\del X^0_t + \sum_{i = 1}^k\bar{\beta}_i(x_t)\del X^i_t\\
    \left(\bar{\mu}_t - \frac{\partial \ell}{\partial \bar{\zeta}_t}\right)\del X^0_t &= 0.
\end{align*}
are satisfied by $(x_t^{|T}, u_t^{|T}, y_t^{|T}, \bar{\zeta}_t^{|T}, \bar{\mu}_t^{|T})$.
\end{theorem*}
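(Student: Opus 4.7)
The plan is to compute $\D \S^{\mathrm{red}}_X$ by varying each of the five constituent pieces of Equation \eqref{eq:reduced_action_local_coordinates} separately, to express the result as a sum of pairings against the independent variations $\delta x$, $\delta u$, $\delta\bar{\zeta}$, $\delta y$, $\delta\bar{\mu}$ and $\bar{\eta}$, and then to apply the stochastic fundamental lemma (Lemma \ref{lem:stochastic_fundlem}) on the chart $K$ to deduce that each coefficient must vanish, yielding the horizontal and vertical stochastic Lagrange--Poincar\'e equations.

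First, applying Theorem \ref{thm:variation_vector_bundle} to the bundle $T(Q/G)\oplus\alag\to Q/G$ equipped with $\nabla^{T(Q/G)}\oplus\nabla^{\alag}$, the variations of $\int_0^T\ell(x_t,u_t,\bar{\zeta}_t)\del X^0_t$ and $\sum_i\int_0^T l_i(x_t)\del X^i_t$ produce pairings of $\delta x_t$, $K_{\nabla^{T(Q/G)}}\delta u_t$ and $K_{\nabla^{\alag}}\delta\bar{\zeta}_t$ with the corresponding partial derivatives of $\ell$ and $l_i$. For the horizontal pairing $\int\dotp{y_t}{\del x_t - u_t\del X^0_t - \sum V_i^{\reduced}(x_t)\del X^i_t}$, I would regard $\dotp{y_t}{\del x_t}$ as the Stratonovich integral of the tautological 1-form on $T^*(Q/G)$, apply Equation \eqref{eq:variationofalphadgamma}, and then Stratonovich-integrate by parts using Equation \eqref{eq:covariant_Derivative_dual_bundle_connector} and the connector $K_{\nabla^{T^*(Q/G)}}$. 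This produces the term $-\int\dotp{\Del y_t}{\delta x_t}$ against $\delta x_t$, the term $\int\dotp{K_{\nabla^{T^*(Q/G)}}\delta y_t}{\del x_t - u_t\del X^0_t - \sum V_i^{\reduced}(x_t)\del X^i_t}$ against $\delta y_t$, together with the pullback contribution $-\int\dotp{\partial_{x_t}\dotp{y_t}{V_i^{\reduced}(x_t)}}{\delta x_t}\del X^i_t$; boundary pairings vanish because $\delta x$ vanishes at the endpoints. The pairing $-\int\dotp{\bar{\mu}_t}{\bar{\zeta}_t\del X^0_t + \sum\bar{\beta}_i(x_t)\del X^i_t}$ is handled analogously using $K_{\nabla^{\dalag}}$ and $K_{\nabla^{\alag}}$.

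The key step is the variation of $\int\dotp{\bar{\mu}_t}{\del\bar{\xi}_t}$, which has already been done in Equation \eqref{eq:covariantvariationontheadjointbundle} precisely for $\alag$-fiber covariant variations $\delta^A\bar{\xi}_t$ of the prescribed constrained form. It supplies three essential contributions: the curvature-coupling $\int\dotp{\bar{\mu}_t}{i_{\delta x_t}\tilde{B}(x_t)}\del x_t$, which merges into the $\delta x_t$-coefficient to produce the stochastic Lorentz-type forcing in the horizontal equation; the $\ad^*$-coupling $-\int\dotp{\ad^*_{\bar{\eta}_t}\bar{\mu}_t}{\del\bar{\xi}_t}$; and the integration-by-parts piece $-\int\dotp{\Del\bar{\mu}_t}{\bar{\eta}_t}$. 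The latter two assemble into the coefficient $-\Del\bar{\mu}_t+\ad^*_{\del\bar{\xi}_t}\bar{\mu}_t$ paired against $\bar{\eta}_t$, while the boundary term $\dotp{\bar{\mu}_t}{\bar{\eta}_t}\big|_0^T$ vanishes by the hypothesis that $\bar{\eta}$ vanishes at $t=0,T$. Collecting every contribution termwise then yields the formula \eqref{eq:variationinSred_app}.

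The equivalence with the stochastic implicit Lagrange--Poincar\'e equations follows by Lemma \ref{lem:stochastic_fundlem} applied chart by chart on $K$: arbitrariness of $\delta u$, $\delta\bar{\zeta}$, $\delta y$, $\delta\bar{\mu}$ (so that their connector images sweep the respective fibers) forces the four fiber-derivative identities, while arbitrariness of $\delta x$ and $\bar{\eta}$ (both vanishing at the endpoints) yields the horizontal and vertical equations respectively. The main obstacle I foresee is the careful bookkeeping of Stratonovich integration-by-parts in the curved setting and the correct placement of connector maps, in particular ensuring that the coefficient of $\delta x_t$ correctly assembles the forcing $\dotp{\bar{\mu}_t}{i_{\del x_t}\tilde{B}(x_t)}$ arising from \eqref{eq:covariantvariationontheadjointbundle} together with the base-derivatives of $l_i$, $\dotp{y_t}{V_i^{\reduced}(x_t)}$ and $\dotp{\bar{\mu}_t}{\bar{\beta}_i(x_t)}$ without double-counting.
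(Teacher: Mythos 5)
Your proposal is correct and follows essentially the same route as the paper's proof in Appendix A: the same five-way decomposition of the action, the use of Theorem \ref{thm:variation_vector_bundle} and the connectors for the fiber terms, Lemma \ref{lem:variationsofsemimartginales} together with Corollary \ref{cor:importantcorollary} for the $\dotp{y_t}{\del x_t}$ pairing, the already-established formula \eqref{eq:covariantvariationontheadjointbundle} for $\D\int\dotp{\bar{\mu}_t}{\del\bar{\xi}_t}$, and finally the stochastic fundamental lemma \ref{lem:stochastic_fundlem} applied on an arbitrary coordinate ball $K$. The only cosmetic difference is that you phrase the $\dotp{y_t}{\del x_t}$ computation via the tautological 1-form where the paper works with its pullback $\mathcal{G}_{\P(Q/G)}$; the bookkeeping you flag as the main obstacle is exactly what the paper's Appendix carries out term by term.
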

\begin{proof}
   
    We have\begin{align}\label{eq:variations_step_0}
        \D \S^{\mathrm{red}}_X(x_{t}, u_{t}, y_{t}, \bar{\zeta}_t,\bar{\mu}_{t}) &= \D\int_0^T\ell(x_{t}, u_{t}, \bar{\zeta}_{t})\del X^0_t + \sum_{i = 1}^k \D \int_0^Tl_i(x_{t}) \del X^i_t+\D \int_0^T\langle y_{t}, \del x_{t}\rangle \nonumber\\&- \D\int_0^T \langle y_{t}, u_{t} \rangle \del X^0_t - \sum_{i = 1}^k\int_0^T\langle y_t,V_i^{\mathrm{red}}(x_{t})\rangle \del X^i_t+\D\int_0^T\langle   \bar{\mu}_{t}, \del \bar{\xi}_{t}\rangle \nonumber\\&- \D\int_0^T\left\langle \bar{\mu}_{t},\bar{\zeta}_{t}dt + \sum_{i=1}^k \bar{\beta}_i(x_t) \del X^i_t\right\rangle\nonumber.\\
        &= \D\int_0^T\ell(x^{|T}_{t}, u^{|T}_{t}, \bar{\zeta}^{|T}_{t})\del X^0_t + \sum_{i = 1}^k \D \int_0^Tl_i(x^{|T}_{t}) \del X^i_t +\D \int_0^T\langle y^{|T}_{t}, \del x^{|T}_{t}\rangle \nonumber\\&- \D\int_0^T \langle y^{|T}_{t}, u^{|T}_{t} \rangle \del X^0_t - \sum_{i = 1}^k\int_0^T\langle y^{|T}_t,V_i^{\mathrm{red}}(x^{|T}_{t})\rangle \del X^i_t+\D\int_0^T\langle   \bar{\mu}^{|T}_{t}, \del \bar{\xi}^{|T}_{t}\rangle \nonumber\\&- \D\int_0^T\left\langle \bar{\mu}^{|T}_{t},\bar{\zeta}^{|T}_{t}dt + \sum_{i=1}^k \bar{\beta}_i(x^{|T}_t) \del X^i_t\right\rangle\nonumber.\\
        &=\D\int_0^T \ell(x^{|T}_{t}, u^{|T}_{t}, \bar{\zeta}^{|T}_{t})\del X^0_t + \sum_{i = 1}^k \D \int_0^Tl_i(x^{|T}_{t}) \del X^i_t \nonumber\\&+\D \int_0^T\langle y^{|T}_{t}, \del x^{|T}_{t}\rangle - \left(\D\int_0^T \langle y^{|T}_{t}, u^{|T}_{t} \rangle \del X^0_t \right.\nonumber\\&+ \left.\sum_{i = 1}^k\D\int_0^T\langle y^{|T}_t,V_i^{\mathrm{red}}(x^{|T}_{t})\rangle \del X^i_t\right)+\D\int_0^T\langle   \bar{\mu}^{|T}_{t}, \del \bar{\xi}^{|T}_{t}\rangle \nonumber\\&- \D\int_0^T\left\langle \bar{\mu}^{|T}_{t},\bar{\zeta}^{|T}_{t} \del X^0_t + \sum_{i=1}^k \bar{\beta}_i(x^{|T}_t) \del X^i_t\right\rangle.
    \end{align}
    We break down the calculation of variations of the terms in several steps:
    \begin{enumerate}
        \item \textbf{The terms $\D\int_0^T \ell(x^{|T}_{t}, u^{|T}_{t}, \bar{\zeta}^{|T}_{t})\del X^0_t+ \sum_{i = 1}^k \D \int_0^Tl_i(x^{|T}_{t}) \del X^i_t$}
        
        \medskip

        By Theorem \ref{thm:variation_vector_bundle}, we have
        \begin{align}\label{eq:variation_step_1}
            &\D\int_0^T \ell(x^{|T}_{t}, u^{|T}_{t}, \bar{\zeta}^{|T}_{t})\del X^0_t+\sum_{i = 1}^k \D \int_0^Tl_i(x^{|T}_{t}) \del X^i_t\nonumber\\ &= \int_0^T \dotp{d\ell(x^{|T}_t, u^{|T}_t, \bar{\zeta}^{|T}_t)}{(\delta x^{|T}_t, \delta u^{|T}_t, \delta \bar{\zeta}^{|T}_t)}\del X^0_t+ \sum_{i = 1}^k  \int_0^T\dotp{dl_i(x^{|T}_{t})}{\delta x_t^{|T}} \del X^i_t\nonumber\\
            &= \int_0^T \left(\dotp{\frac{\partial \ell}{\partial x^{|T}_t}}{\delta x^{|T}_t} + \int_0^T \dotp{\frac{\partial \ell}{\partial u^{|T}_t}}{K_{\nabla^{T(Q/G)}}\delta u^{|T}_t} \right.\nonumber\\&+\left. \dotp{\frac{\partial \ell}{\partial \bar{\zeta}^{|T}_t}}{K_{\nabla^{T(Q/G)}}\delta \bar{\zeta}^{|T}_t}\right)\del X^0_t + \sum_{i = 1}^k  \int_0^T\dotp{\frac{\partial l_i}{\partial x_t^{|T}}}{\delta x_t^{|T}} \del X^i_t.
        \end{align}
        \item \textbf{The term $\D \int_0^T\langle y^{|T}_{t}, \del x^{|T}_{t}\rangle$}
        
        \medskip

        We recognize this term as $\D\int_0^T \mathcal{G}_{\P (Q/G)}\del (x^{|T}_t, u^{|T}_t, y^{|T}_t)$. Since $\mathcal{G}_{\P (Q/G)}(\delta x^{|T}, \delta u^{|T}_t, \delta y^{|T} ) = \dotp{y^{|T}_t}{\delta x^{|T}_t}$ and $\delta x^{|T}_t$ vanishes at $t = 0$ and $t = T$, by Lemma \ref{lem:variationsofsemimartginales}, we have
        \begin{align*}
            \D \int_0^T\langle y^{|T}_{t}, \del x^{|T}_{t}\rangle &= \int_0^T i_{(\delta x^{|T}_t, \delta u^{|T}_t, \delta y^{|T}_t)}\d \mathcal{G}_{\P(Q/G)}\del (x^{|T}_t, u_t^{|T}, y^{|T}_t) \\&+ \dotp{\mathcal{G}_{\P (Q/G)} (x^{|T}_T, u_T^{|T}, y^{|T}_T)}{(\delta x^{|T}_T, \delta u^{|T}_T, \delta y^{|T}_T)} - \dotp{\mathcal{G}_{\P (Q/G)} (x^{|T}_0, u_0^{|T}, y^{|T}_0)}{(\delta x^{|T}_0, \delta u^{|T}_0, \delta y^{|T}_0)}\\
            &=\int_0^T i_{(\delta x^{|T}_t, \delta u^{|T}_t, \delta y^{|T}_t)}\d \mathcal{G}_{\P(Q/G)}\del (x^{|T}_t, u_t^{|T}, y^{|T}_t).
        \end{align*}
    Thus, we need to determine $i_{(\delta x^{|T}_t, \delta u^{|T}_t, \delta y^{|T}_t)}\d \mathcal{G}_{\P(Q/G)}$. To do this, we will use Corollary \ref{cor:importantcorollary}. 
    
    \medskip
    
    Let $(x, u ,y)$ be a point in $\P (Q/G)$. By definition, $\mathcal{G}_{\P (Q/G)}(x, u , y) = \dotp{y}{dx}$, so that $\d\mathcal{G}_{\P (Q/G)} = dy \wedge dx$. Suppose $(w_x, w_u, w_y)$ and $(\tilde{w}_x, \tilde{w}_u, \tilde{w}_y)$ are tangent vectors to $\P (Q/G)$ at $(x, u ,y)$. Let $\gamma(t) = (x(t), u(t), y(t))$ be a curve in $\P (Q/G)$ such that $\gamma(0) =(x, u ,y) $ and $\dot{\gamma}(0) = (w_x, w_u, w_y)$. Let $\gamma_{\epsilon}(t) = (x_{\epsilon}(t), u_{\epsilon}(t), y_{\epsilon}(t))$ be a deformation of $\gamma(t)$ with $\delta \gamma(0) =(\tilde{w}_x, \tilde{w}_u, \tilde{w}_y)$. By Corollary \ref{cor:importantcorollary}
    
\begin{align*}
    \frac{d}{d\epsilon}\Big|_{\epsilon = 0} \langle \mathcal{G}_{\P(Q/G)}(\gamma_{\epsilon}(t)), \dot{\gamma}_{\epsilon}(t)\rangle = \langle i_{\delta \gamma(t)}\d\mathcal{G}_{\P(Q/G)}, \dot {\gamma}(t)\rangle + \frac{d}{dt}\langle\mathcal{G}_{\P(Q/G)}(\gamma(t)), \delta\gamma(t)\rangle.
\end{align*}which shows that,
\begin{align*}
    \langle i_{\delta \gamma(t)}\d\mathcal{G}_{\P(Q/G)}, \dot {\gamma}(t)\rangle =  \frac{d}{d\epsilon}\Big|_{\epsilon = 0} \langle \mathcal{G}_{\P(Q/G)}(\gamma_{\epsilon}(t)), \dot{\gamma}(t)\rangle - \frac{d}{dt}\langle \mathcal{G}_{\P(Q/G)}(\gamma(t)), \delta\gamma(t)\rangle
\end{align*}
Expressing the right side in terms of coordinates yields,
\begin{align*}
    \langle i_{\delta \gamma(t)}\d\mathcal{G}_{\P(Q/G)}, \dot {\gamma}(t)\rangle = \frac{d}{d\epsilon}\Big|_{\epsilon = 0} \langle y_{\epsilon}(t), \dot{x}_{\epsilon}(t)\rangle -  \frac{d}{dt}\langle y(t), \delta x(t)\rangle.
\end{align*}
In terms of the covariant derivatives on $T(Q/G)$ and $T^*(Q/G)$, we have
\begin{align*}
    \frac{d}{d\epsilon}\Big|_{\epsilon = 0} \langle y_{\epsilon}(t), \dot{x}_{\epsilon}(t)\rangle -  \frac{d}{dt}\langle y(t), \delta x(t)\rangle &= \left\langle \Deldeleps y_{\epsilon}(t), \dot{x}(t) \right\rangle +  \left\langle y(t), \Deldeleps\dot{x}_{\epsilon}(t) \right\rangle \\&- \left\langle \frac{D}{Dt}y(t), \delta x(t)\right\rangle - \left\langle y(t), \frac{D}{Dt}\delta x(t)\right\rangle
\end{align*}
But by equality of mixed partial derivatives
\begin{align*}
    \left\langle y(t), \frac{D}{D\epsilon}\dot{x}(t) \right\rangle - \left\langle y(t), \frac{D}{Dt}\frac{d}{d\epsilon} x(t)\right\rangle &=  \left\langle y_{\epsilon}(t), K_{\grad^{T(Q/G)}}\frac{d^2x(t)}{d\epsilon d t} \right\rangle - \left\langle y(t), K_{\grad^{T(Q/G)}}\frac{d^2x(t)}{dtd\epsilon}\right\rangle\\
    &= 0,
\end{align*}
which implies
\begin{align*}\frac{d}{d\epsilon}\Big|_{\epsilon = 0} \langle y(t), \dot{x}_{\epsilon}(t)\rangle -  \frac{d}{dt}\langle y(t), \delta x(t)\rangle &= \left\langle \Deldeleps y_{\epsilon}(t), \dot{x}(t) \right\rangle -
\left\langle \frac{D}{Dt}y(t), \delta x(t)\right\rangle\\
&= \left\langle K_{\grad^{T^*(Q/G)}} \delta y(t), \dot{x}(t) \right\rangle - \left\langle K_{\grad^{T^*(Q/G)}}\dot{y}(t), \delta x(t)\right\rangle.
\end{align*}
Evaluating at $t = 0$ gives
\begin{align*}
    i_{(\tilde{w}_x, \tilde{w}_u, \tilde{w}_y)}\d\mathcal{G}_{\P (Q/G)}(w_x, w_u, w_y) &= \left(\frac{d}{d\epsilon}\Big|_{\epsilon = 0} \langle y(t), \dot{x}_{\epsilon}(t)\rangle -  \frac{d}{dt}\langle y(t), \delta x(t)\rangle\right)_{t = 0}\\
    &= \left(\left\langle K_{\grad^{T^*(Q/G)}} \delta y(t), \dot{x}(t) \right\rangle - \left\langle K_{\grad^{T^*(Q/G)}}\dot{y}(t), \delta x(t)\right\rangle\right)_{t = 0}\\
    &= \dotp{K_{\nabla^{T^*(Q/G)}}\tilde{w}_y}{w_x} - \dotp{K_{\nabla^{T^*(Q/G)}}{w}_y}{\tilde{w}_x}.
\end{align*}
    Since this holds for arbitrary tangent vectors $(w_x, w_u, w_y)$ and $(\tilde{w}_x, \tilde{w}_u, \tilde{w}_y)$ at $(x, u ,y)$, we have
    \begin{align*}i_{(\delta x^{|T}_t, \delta u^{|T}_t, \delta y^{|T}_t)}\d \mathcal{G}_{\P(Q/G)}\del (x^{|T}_t, u_t^{|T}, y^{|T}_t) &= \dotp{K_{\nabla^{T^*(Q/G)}}\delta y^{|T}_t}{\del x^{|T}_t} - \dotp{K_{\nabla^{T^*(Q/G)}}\del y^{|T}_t}{\delta x_t^{|T}}\\
    &=\dotp{K_{\nabla^{T^*(Q/G)}}\delta y^{|T}_t}{\del x^{|T}_t} - \dotp{\Del y^{|T}_t}{\delta x_t^{|T}}.
    \end{align*}
    As a result,
    \begin{equation}\label{eq:variation_step_2}
        \D \int_0^T\langle y^{|T}_{t}, \del x^{|T}_{t}\rangle = \int_0^T\dotp{K_{\nabla^{T^*(Q/G)}}\delta y^{|T}_t}{\del x^{|T}_t} - \int_0^T\dotp{\Del y^{|T}_t}{\delta x_t^{|T}}.
    \end{equation}
    \item\textbf{The terms $\D\int_0^T \langle y^{|T}_{t}, u^{|T}_{t} \rangle \del X^0_t + \sum_{i = 1}^k\D\int_0^T\langle y^{|T}_t,V_i^{\mathrm{red}}(x^{|T}_{t})\rangle \del X^i_t$}

    \medskip

    We use the connectors corresponding to the covariant derivatives on $T(Q/G)$ and $T^*(Q/G)$ to obtain
    \begin{align*}
      &\D\int_0^T \langle y^{|T}_{t}, u^{|T}_{t} \rangle \del X^0_t \\
      &= \int_0^T\dotp{K_{\nabla^{T^*(Q/G)}}\delta y^{|T}_t}{u_t^{|T}}\del X^0_t + \int_0^T\dotp{ y^{|T}_t}{K_{\nabla^{T(Q/G)}}\delta u_t^{|T}}\del X^0_t
    \end{align*}
    To evaluate $\D\int_0^T\langle y^{|T}_t,V_i^{\mathrm{red}}(x^{|T}_{t})\rangle \del X^i_t$, we apply Theorem \ref{thm:variation_vector_bundle} to the maps $(x,y)\in T^*(Q/G) \mapsto \dotp{y}{V_i^{\reduced}(x)}\in \R$. This gives,
    \begin{align*}
        \D\int_0^T\langle y^{|T}_t,V_i^{\mathrm{red}}(x^{|T}_{t})\rangle \del X^i_t &= \int_0^T \dotp{\frac{\partial}{\partial x_t^{|T}}\dotp{y_t^{|T}}{V_i^{\reduced}(x_t^{|T})}}{\delta x_t^{|T}}\del X^i_t \\&+ \int_0^T \dotp{\frac{\partial}{\partial y_t^{|T}}\dotp{y_t^{|T}}{V_i^{\reduced}(x_t^{|T})}}{K_{\nabla^{T^*(Q/G)}}\delta y_t^{|T}}\del X^i_t\\
        &= \int_0^T \dotp{\frac{\partial}{\partial x_t^{|T}}\dotp{y_t^{|T}}{V_i^{\reduced}(x_t^{|T})}}{\delta x_t^{|T}}\del X^i_t \\&+ \int_0^T \dotp{K_{\nabla^{T^*(Q/G)}}\delta y_t^{|T}}{V_i^{\reduced}(x_t^{|T})}\del X^i_t.
    \end{align*}
    Therefore
    \begin{align}\label{eq:variation_step_3}
         &\D\int_0^T \langle y^{|T}_{t}, u^{|T}_{t} \rangle \del X^0_t + \sum_{i = 1}^k\D\int_0^T\langle y^{|T}_t,V_i^{\mathrm{red}}(x^{|T}_{t})\rangle \del X^i_t\nonumber\\
      &= \int_0^T\dotp{K_{\nabla^{T^*(Q/G)}}\delta y^{|T}_t}{u_t^{|T}\del X^0_t + \sum_{i = 1}^k V_i^{\mathrm{red}}(x_t^{|T})\del X^i_t}\nonumber\\
      &+ \int_0^T\dotp{ y^{|T}_t}{K_{\nabla^{T(Q/G)}}\delta u_t^{|T}}\del X^0_t+\sum_{i = 1}^k\int_0^T \dotp{\frac{\partial}{\partial x_t^{|T}}\dotp{y_t^{|T}}{V_i^{\reduced}(x_t^{|T})}}{\delta x_t^{|T}}\del X^i_t.  
    \end{align}
    \item \textbf{The term $\D\int_0^T\langle   \bar{\mu}^{|T}_{t}, \del \bar{\xi}^{|T}_{t}\rangle$}

    \medskip

    Using the computations carried out in Section \ref{sec:covariant_variations_adjoint_bundle}, we have 
    \begin{align}
        \D\int_0^T\langle   \bar{\mu}^{|T}_{t}, \del \bar{\xi}^{|T}_{t}\rangle &= \int_0^T \dotp{K_{\nabla^{\dalag}}\delta\bar{\mu}^{|T}_t}{\del\bar{\xi}^{|T}_t}
-\int_0^T \dotp{\ad^*_{\bar{\eta}_t}\bar{\mu}^{|T}_t}{\del \bar{\xi}^{|T}}\nonumber\\&+\int_0^T \dotp{\bar{\mu}^{|T}_t}{i_{\delta x^{|T}_t}\tilde{B}(x^{|T}_t)}\del x^{|T}_t- \int_0^T \dotp{\Del \bar{\mu}^{|T}_t} {\bar{\eta}_t}
    \nonumber\\&+ \dotp{\bar{\mu}^{|T}_T}{\bar{\eta}_T} - \dotp{\bar{\mu}^{|T}_0}{\bar{\eta}_0}.\nonumber
    \end{align}
Since $\bar{\eta}$ vanishes at $t = 0$ and $t = T$, it follows that $\dotp{\bar{\mu}^{|T}_T}{\bar{\eta}_T} = \dotp{\bar{\mu}^{|T}_0}{\bar{\eta}_0} = 0$. Hence,
\begin{align}\label{eq:variations_Step_4}
        \D\int_0^T\langle   \bar{\mu}^{|T}_{t}, \del \bar{\xi}^{|T}_{t}\rangle &= \int_0^T \dotp{K_{\nabla^{\dalag}}\delta\bar{\mu}^{|T}_t}{\del\bar{\xi}^{|T}_t}
-\int_0^T \dotp{\ad^*_{\bar{\eta}_t}\bar{\mu}^{|T}_t}{\del \bar{\xi}^{|T}}\nonumber\\&+\int_0^T \dotp{\bar{\mu}^{|T}_t}{i_{\delta x^{|T}_t}\tilde{B}(x^{|T}_t)}\del x^{|T}_t- \int_0^T \dotp{\Del \bar{\mu}^{|T}_t} {\bar{\eta}_t}
    \nonumber\\
    &= \int_0^T \dotp{K_{\nabla^{\dalag}}\delta\bar{\mu}^{|T}_t}{\del\bar{\xi}^{|T}_t}
+\int_0^T \dotp{\ad^*_{\del\bar{\xi}^{|T}_t}\bar{\mu}^{|T}_t}{\bar{\eta}_t}\nonumber\\&-\int_0^T \dotp{\dotp{\bar{\mu}^{|T}_t}{i_{\del x^{|T}_t}\tilde{B}(x^{|T}_t)}}{\delta x^{|T}_t}- \int_0^T \dotp{\Del \bar{\mu}^{|T}_t} {\bar{\eta}_t}\nonumber\\
&= \int_0^T \dotp{K_{\nabla^{\dalag}}\delta\bar{\mu}^{|T}_t}{\del\bar{\xi}^{|T}_t}
-\int_0^T \dotp{\Del \bar{\mu}_t^{|T} -\ad^*_{\del\bar{\xi}^{|T}_t}\bar{\mu}^{|T}_t}{\bar{\eta}_t}\nonumber\\&-\int_0^T \dotp{\dotp{\bar{\mu}^{|T}_t}{i_{\del x^{|T}_t}\tilde{B}(x^{|T}_t)}}{\delta x^{|T}_t},
    \end{align}
    where \[\int\dotp{\ad^*_{\del\bar{\xi}^{|T}_t}\bar{\mu}^{|T}_t}{\bar{\eta}_t} = -\int\dotp{\ad^*_{\bar{\eta}_t}\bar{\mu}^{|T}_t}{\del\bar{\xi}^{|T}_t}\]and\[\int \dotp{\dotp{\bar{\mu}^{|T}_t}{i_{\del x^{|T}_t}\tilde{B}(x^{|T}_t)}}{\delta x^{|T}_t} = -\int\dotp{\bar{\mu}^{|T}_t}{i_{\delta x^{|T}_t}\tilde{B}(x^{|T}_t)}\del x^{|T}_t.\]
    
    \item \textbf{The terms $\D\int_0^T\left\langle \bar{\mu}^{|T}_{t},\bar{\zeta}^{|T}_{t} \del X^0_t + \sum_{i=1}^k \bar{\beta}_i(x^{|T}_t) \del X^i_t\right\rangle$}

    \medskip

    We proceed by applying Equation \eqref{eq:covariant_Derivative_dual_bundle_connector} to yield 
    \begin{align*}
        \D\int_0^T\left\langle \bar{\mu}^{|T}_{t},\bar{\zeta}^{|T}_{t} \del X^0_t \right\rangle &= \int_0^T\dotp{K_{\nabla^{\dalag}}\delta\bar{\mu}_t^{|T}}{\bar{\zeta}_t^{|T}}\del X^0_t \\&+ \int_0^T\dotp{\bar{\mu}^{|T}_t}{K_{\nabla^{\alag}}\delta\bar{\zeta}_t^{|T}}\del X^0_t.
    \end{align*}
    Next, by applying Theorem \ref{thm:variation_vector_bundle} to the maps $\bar{\mu}_x\in \dalag\mapsto \dotp{\bar{\mu}_x}{\bar{\beta}_i(x)}$, we obtain
    \begin{align*}
        \D\int_0^T\dotp{\bar{\mu}_t^{|T}}{\bar{\beta}_i(x_t^{|T})}\del X^i_t &= \int_0^T \dotp{\frac{\partial}{\partial x_t^{|T}}\dotp{\bar{\mu}_t^{|T}}{\bar{\beta}_i(x_t^{|T})}}{\delta x_t^{|T}} + \int_0^T\dotp{\frac{\partial}{\partial y_t^{|T}}\dotp{\bar{\mu}_t^{|T}}{\bar{\beta}_i(x_t^{|T})}}{K_{\nabla^{\dalag}}\delta \bar{\mu}_t^{|T}}\\
        &= \int_0^T \dotp{\frac{\partial}{\partial x_t^{|T}}\dotp{\bar{\mu}_t^{|T}}{\bar{\beta}_i(x_t^{|T})}}{\delta x_t^{|T}} + \int_0^T\dotp{K_{\nabla^{\dalag}}\delta \bar{\mu}_t^{|T}}{\bar{\beta}_i(x_t^{|T})}.
    \end{align*}
    Consequently, 
    \begin{align}\label{eq:variations_step_5}
        \D\int_0^T\left\langle \bar{\mu}^{|T}_{t},\bar{\zeta}^{|T}_{t} \del X^0_t + \sum_{i=1}^k \bar{\beta}_i(x^{|T}_t) \del X^i_t\right\rangle &= \int_0^T\dotp{K_{\nabla^{\dalag}}\delta\bar{\mu}_t^{|T}}{\bar{\zeta}_t^{|T}}\del X^0_t \nonumber\\&+ \int_0^T\dotp{\bar{\mu}^{|T}_t}{K_{\nabla^{\alag}}\delta\bar{\zeta}_t^{|T}}\del X^0_t\nonumber\\&+\left(\sum_{i = 1}^k\int_0^T\dotp{K_{\nabla^{\dalag}}\delta\bar{\mu}_t^{|T}}{\bar{\beta}_i(x_t^{|T})}\del X^i_t\right.\nonumber\\
        &+\int_0^T \dotp{\frac{\partial}{\partial x_t^{|T}}\dotp{\bar{\mu}_t^{|T}}{\bar{\beta}_i(x_t^{|T})}}{\delta x_t^{|T}}.
    \end{align}
    \end{enumerate}
    From Equations \eqref{eq:variation_step_1}-\eqref{eq:variations_step_5}, we have
    \begin{align*}
        \D \S^{\mathrm{red}}_X(x_{t}, u_{t}, y_{t}, \bar{\zeta}_t,\bar{\mu}_{t}) &= \int_0^T \left(\dotp{\frac{\partial \ell}{\partial x^{|T}_t}}{\delta x^{|T}_t} + \int_0^T \dotp{\frac{\partial \ell}{\partial u^{|T}_t}}{K_{\nabla^{T(Q/G)}}\delta u^{|T}_t} \right.\nonumber\\&+\left. \dotp{\frac{\partial \ell}{\partial \bar{\zeta}^{|T}_t}}{K_{\nabla^{T(Q/G)}}\delta \bar{\zeta}^{|T}_t}\right)\del X^0_t + \sum_{i = 1}^k  \int_0^T\dotp{\frac{\partial l_i}{\partial x_t^{|T}}}{\delta x_t^{|T}} \del X^i_t\\&+\int_0^T\dotp{K_{\nabla^{T^*(Q/G)}}\delta y^{|T}_t}{\del x^{|T}_t} - \int_0^T\dotp{\Del y^{|T}_t}{\delta x_t^{|T}}\\
        &-\int_0^T\dotp{K_{\nabla^{T^*(Q/G)}}\delta y^{|T}_t}{u_t^{|T}\del X^0_t + \sum_{i = 1}^k V_i^{\mathrm{red}}(x_t^{|T})\del X^i_t}\nonumber\\
      &- \int_0^T\dotp{ y^{|T}_t}{K_{\nabla^{T(Q/G)}}\delta u_t^{|T}}\del X^0_t- \sum_{i = 1}^k\int_0^T \dotp{\frac{\partial}{\partial x_t^{|T}}\dotp{y_t^{|T}}{V_i^{\reduced}(x_t^{|T})}}{\delta x_t^{|T}}\del X^i_t\\
      &+\int_0^T \dotp{K_{\nabla^{\dalag}}\delta\bar{\mu}^{|T}_t}{\del\bar{\xi}^{|T}_t}
-\int_0^T \dotp{\Del \bar{\mu}_t^{|T} -\ad^*_{\del\bar{\xi}^{|T}_t}\bar{\mu}^{|T}_t}{\bar{\eta}_t}\nonumber\\&-\int_0^T \dotp{\dotp{\bar{\mu}^{|T}_t}{i_{\del x^{|T}_t}\tilde{B}(x^{|T}_t)}}{\delta x^{|T}_t}-\int_0^T\dotp{K_{\nabla^{\dalag}}\delta\bar{\mu}_t^{|T}}{\bar{\zeta}_t^{|T}}\del X^0_t \nonumber\\&-\int_0^T\dotp{\bar{\mu}^{|T}_t}{K_{\nabla^{\alag}}\delta\bar{\zeta}_t^{|T}}\del X^0_t-\left(\sum_{i = 1}^k\int_0^T\dotp{K_{\nabla^{\dalag}}\delta\bar{\mu}_t^{|T}}{\bar{\beta}_i(x_t^{|T})}\del X^i_t\right.\\&+\left.\int_0^T \dotp{\frac{\partial}{\partial x_t^{|T}}\dotp{\bar{\mu}_t^{|T}}{\bar{\beta}_i(x_t^{|T})}}{\delta x_t^{|T}}\right).
    \end{align*}
    Then the expression in \eqref{eq:variationinSred_app} follows by appropriately grouping terms. 
    
    \medskip

     To prove the second part of the theorem, first note that if $(x^{|T}_t, u^{|T}_t, y^{|T}_t, \bar{\zeta}^{|T}_t, \bar{\mu}^{|T}_t)$ satisfies the horizontal and vertical stochastic implicit Lagrange-Poincaré equations then $ \D \S^{\mathrm{red}}_X(x_{t}, u_{t},y_{t}, \bar{\zeta}_{t},  \bar{\mu}_{t}) = 0$, so we only need to prove the converse. Take an arbitrary coordinate ball $K$ in $\P (Q/G)\oplus \alag\oplus \dalag$ and restrict to $(K,T)$-deformations of $(x^{|T}_t, u^{|T}_t, y^{|T}_t, \bar{\zeta}^{|T}_t, \bar{\mu}^{|T}_t)$, and semimartingales $\bar{\eta}_t$ that vanish outside $]]\tau_K^h, (\tau_K^h + \tau_K^{(h,e)})\wedge T[[$. Let $\tau_K^h$ denote the hitting time for $K$. Since $\bar{\eta}_t$ and the variations corresponding to $(K,T)$-deformations vanish outside $]]\tau_K^h, \tau_K^h + \tau_K^{(h,e)}[[$, the integral from $0$ to $T$ in the expression \eqref{eq:variationinSred_app} for $\D\S_X^{\reduced}(x_{t}, u_{t}, y_{t}, \bar{\zeta}_{t}, \bar{\mu}_{t})$ can be replaced by an integral from $\tau_K^h$ to $\tau_K^h + \tau_K^{(h,e)}$, that is,
     \begingroup
     \allowdisplaybreaks
     \begin{align*}
         \D\S_X^{\reduced}(x_{t}, u_{t}, y_{t}, \bar{\zeta}_{t}, \bar{\mu}_{t}) 
         &=
    \int_{\tau_K^h}^{\tau_K^h + \tau_{K}^{(h,e)}} \left\langle \frac{\partial}{\partial x_t^{|T}} \left( \ell \del X^0_t + \sum_{i =1}^k\left(l_i -  \dotp{y_t^{|T}}{V_i^{\reduced}(x_t^{|T})} - \dotp{\bar{\mu}_t^{|T}}{\bar{\beta}_i(x_t^{|T})}\right)\del X^i_t\right)\right.\nonumber\\
    &\left. - \dotp{\bar{\mu}_t}{i_{\del x_t}\tilde{B}(x_t)} - \Del y_t^{|T},\delta x_t^{|T}\right\rangle +\int_{\tau_K^h}^{\tau_K^h + \tau_{K}^{(h,e)}}\left\langle\frac{\partial \ell}{\partial u^{|T}_t} - y^{|T}_t, K_{\grad^{T(Q/G)}}\delta u^{|T}_t\right\rangle \del X^0_t\nonumber\\
    &+ \int_{\tau_K^h}^{\tau_K^h + \tau_{K}^{(h,e)}}\left\langle\frac{\partial \ell}{\partial {\bar{\zeta}}^{|T}_t} - \bar{\mu}^{|T}_t, K_{\grad^{\alag}}\delta {\bar{\zeta}}_t\right\rangle \del X^0_t\nonumber\\&+\int_{\tau_K^h}^{\tau_K^h + \tau_{K}^{(h,e)}} \langle K_{\grad^{T^*(Q/G)}} \delta y^{|T}_t, \del x^{|T}_t - u^{|T}_t\del X^0_t - \sum_{i = 1}^k V_i^{\mathrm{red}}(x^{|T}_t)\del X^i_t\rangle\nonumber\\ &+ \int_{\tau_K^h}^{\tau_K^h + \tau_{K}^{(h,e)}}\langle K_{\grad^{\dalag}}\delta \bar{\mu}^{|T}_t, \del \bar{\xi}^{|T}_t - \bar{\zeta}^{|T}_t\del X^0_t - \sum_{i = 1}^k\bar{\beta}_i(x^{|T}_t)\del X^i_t\rangle\nonumber\\&+\int_{\tau_K^h}^{\tau_K^h + \tau_{K}^{(h,e)}}\langle -\Del\bar{\mu}^{|T}_t+\ad^*_{\del \bar{\xi}^{|T}_t}\bar{\mu}^{|T}_t, \bar{\eta}_t\rangle
     \end{align*}
     \endgroup
     By the stochastic version of the fundamental lemma of the calculus of variations, Lemma \ref{lem:stochastic_fundlem}, this implies that the semimartingale $(x^{|T}_t, u^{|T}_t, y^{|T}_t, \bar{\zeta}^{|T}_t, \bar{\mu}^{|T}_t)$ satisfies the horizontal and vertical stochastic Lagrange-Poincaré equations in $]]\tau_K^{h}, \tau_K^h+\tau_K^{(h,e)}[[$. Since $K$ is an arbitrary regular coordinate ball in $\P (Q/G)\oplus \alag\oplus \dalag$, it follows that $(x^{|T}_t, u^{|T}_t, y^{|T}_t, \bar{\zeta}^{|T}_t, \bar{\mu}^{|T}_t)$ satisfies the horizontal and vertical Lagrange-Poincaré equations in $\P (Q/G)\oplus \alag\oplus\dalag$. This completes the proof.
\end{proof}

\end{document}